\documentclass[11pt,a4paper]{article}
\usepackage[T1]{fontenc}
\usepackage{lmodern}
\usepackage[margin=28mm]{geometry}
\usepackage{amsmath,amssymb,amsthm,mathtools}
\usepackage{booktabs,longtable,array}
\usepackage{microtype}
\usepackage{needspace}
\usepackage[colorlinks=true,linkcolor=blue,citecolor=blue,urlcolor=blue]{hyperref}
\hypersetup{pdftitle={Finite images of braid group representations and algebraic solutions of KZ-type equations},
pdfauthor={Haru Negami},pdfkeywords={braid groups, middle convolution, unitarizability, finite images, KZ-type equations}}
\usepackage{enumitem}
\usepackage{fancyhdr}
\setlist{itemsep=3pt,topsep=5pt}
\numberwithin{equation}{section}
\newtheorem{theorem}{Theorem}[section]
\newtheorem*{maintheorem}{Main Theorem}
\newtheorem{proposition}[theorem]{Proposition}
\newtheorem{lemma}[theorem]{Lemma}
\newtheorem{corollary}[theorem]{Corollary}
\theoremstyle{definition}
\newtheorem{definition}[theorem]{Definition}
\newtheorem{example}[theorem]{Example}
\theoremstyle{remark}
\newtheorem{remark}[theorem]{Remark}
\DeclareMathOperator{\GL}{GL}
\DeclareMathOperator{\Spec}{Spec}

\DeclareMathOperator{\KLM}{KLM}

\newcommand{\RR}{\mathbb R}

\title{Finite images of braid group representations\\ and algebraic solutions of KZ-type equations}
\newcommand{\AuthorAffiliation}{Chiba University}
\newcommand{\AuthorAddress}{1-33 Yayoi-cho, Inage-ku, Chiba 263-8522, Japan}
\newcommand{\AuthorEmail}{negamiharu@gmail.com}
\newcommand{\PaperKeywords}{Braid groups; Long--Moody construction;
 middle convolution; unitarizability; finite images; KZ-type equations;
 Fuchsian systems; hypergeometric functions; algebraic solutions.}
\newcommand{\PaperMSC}{Primary 20F36; Secondary 20C15, 32G34, 34M35.}
\author{Haru Negami\\[0.5em]
 {\small\AuthorAffiliation}\\
 {\small\AuthorAddress}\\
 {\small\texttt{\AuthorEmail}}}
\date{19 September 2026}

\newcommand{\C}{\mathbb C}
\newcommand{\Q}{\mathbb Q}
\newcommand{\Z}{\mathbb Z}
\newcommand{\MCDR}{\operatorname{MC}^{\mathrm{DR}}}
\DeclareMathOperator{\rank}{rank}
\DeclareMathOperator{\lcm}{lcm}
\newcommand{\PG}{\mathbb P}
\begin{document}
\maketitle
\begin{abstract}
Finite monodromy provides a bridge between group representations and
algebraic solutions of differential equations. We study this connection
for the Katz--Long--Moody construction, which transforms representations
of the semidirect product of a free group and a braid group into new
representations of the same group and is related to
Knizhnik--Zamolodchikov (KZ)-type equations. For a fixed finite-image
input, we classify the parameter values for which the resulting
representations have finite image, both on the semidirect product and on
its free-group and braid-group subgroups. In particular, finiteness of
the braid-group image is independent of the admissible parameter. These
results give necessary and sufficient conditions for all solutions of
the corresponding regular-singular KZ-type equations to be algebraic.
On restriction to the free group, they also characterize finite monodromy
and algebraicity of all solutions of the associated Fuchsian systems,
connecting the classification to classical questions about algebraic
hypergeometric functions.
\end{abstract}

\noindent\textbf{Keywords.} \PaperKeywords

\smallskip
\noindent\textbf{2020 Mathematics Subject Classification.} \PaperMSC

\setcounter{secnumdepth}{3}
\section{Introduction}

Finite monodromy links the study of group representations to the
algebraicity of solutions of differential equations. In the theory of
special functions, determining which parameters give algebraic solutions
is a classical problem. For a regular-singular
algebraic connection, including regularity at infinity, finite monodromy
is equivalent to algebraicity of all local horizontal solutions
\cite{Deligne}. Thus a finite-monodromy criterion determines which
members of a parameterized family have entirely algebraic solutions.
Examples include the generalized hypergeometric equations studied by
Beukers--Heckman \cite{BH} and the Pochhammer equations studied by
Haraoka \cite{Haraoka1994}. In several variables, this question has
been studied for Lauricella $F_D$ \cite{Sasaki,CohenWolfart} and more
broadly for Appell--Lauricella and Horn functions \cite{Bod}.
For Lauricella $F_C$, Goto \cite{GotoFiniteFC} gives explicit parameter
conditions for finite irreducible monodromy and studies the structure
of the resulting finite groups. These studies connect the algebraicity
of special functions with concrete questions about linear group images.

For nonlinear special functions, Iwasaki \cite{IwasakiFiniteBranch}
relates algebraicity of Painlev\'e VI solutions to finite branching.
That finiteness concerns continuation orbits of individual solutions;
here we study images of linear representations.

A classical route to finite monodromy combines definiteness of invariant
Hermitian forms at every coefficient embedding with an integral model
\cite{RRV,GV,BH}. In several variables, related results include Beukers'
algebraicity criterion for irreducible $A$-hypergeometric systems
\cite{BeukersA} and Bod's classifications of algebraic
Appell--Lauricella and Horn hypergeometric functions \cite{Bod}. The Schwarz problem for
KZ equations also has earlier treatments \cite{StanevTodorov}.

The fixed-parameter arithmetic test has precedents in the
hypergeometric interlacing results of Beukers--Heckman \cite{BH},
Haraoka's criterion for rational generic Pochhammer systems
\cite[Theorem 1.2]{Haraoka1994}, and the rank-four Goursat analysis of
Radchenko--Rodriguez Villegas \cite[Sections 4.3 and 5]{RRV}.

A particularly direct antecedent is Ghate--Venkataramana
\cite[Theorem 8 and Sections 5--6]{GV}. For the specialized Gassner
representations considered there, with nontrivial local parameters and
nontrivial product, finiteness is characterized by definiteness at all
coefficient embeddings and by fractional-part conditions.
Their Lemma~11 gives the arithmetic finiteness principle for an
irreducible integral unitary group over a CM field.

From the representation-theoretic viewpoint, the Long--Moody construction
is an algebraic procedure that assigns a representation of $B_n$ to
a representation of the semidirect product
$\Gamma=F_n\rtimes B_n$ \cite{Long1994,BigelowTian}.
Here $B_n$ is the braid group on $n$ strands, $PB_n\subset B_n$ is
its pure braid subgroup, and $F_n=\langle x_1,\ldots,x_n\rangle$ is
the free group of rank $n$,
and the semidirect product is defined by the Artin action
$B_n\to\operatorname{Aut}(F_n)$; we follow the conventions of
\cite[Section~2.1]{KLM}, to which we refer for the explicit action
and the semidirect-product group law. Subsequent developments of the
Long--Moody construction include
Souli\'e's functorial constructions \cite{Soulie2019,Soulie2022} and
Takano's relation to twisted Alexander invariants \cite{Takano2024}.
In the usual application to a representation of $B_{n+1}$, one first
restricts to $F_n\rtimes B_n\subset B_{n+1}$ and obtains a
$B_n$-representation; successive applications decrease the number of
strands. Hiroe--Negami's Katz--Long--Moody (KLM) construction retains the
free-group action and produces a new representation of the same
semidirect product \cite{HN}. Thus it can be iterated at fixed $n$
to obtain successive representations of $F_n\rtimes B_n$.

Middle convolution was developed by Katz as a tool for rigid local
systems \cite{Katz}; Dettweiler--Reiter gave explicit matrix and
differential-system constructions \cite{DR2000,DR2007,DR2003}.
Its relation to the Long--Moody construction is part of the foundation
of KLM \cite{HN,KLM}.

For KLM, the link to special functions is provided by its analytic
interpretation. The Knizhnik--Zamolodchikov (KZ) equations originate in conformal field
theory \cite{KZ1984}, and KZ-type systems connect braid monodromy
with multivariable hypergeometric functions \cite{KLM,Haraoka2020}.
On the pure braid subgroup $PB_{n+1}\simeq F_n\rtimes PB_n$, the KLM
construction corresponds to Haraoka's multiplicative middle convolution
for KZ-type equations \cite{KLM,Haraoka2020}. We refer to this
identification as the \emph{Haraoka--Long correspondence}.
Consequently, finite image of the resulting pure braid representation
provides an algebraicity criterion for the corresponding KZ-type system.
Section~\ref{sec:algebraic} states this consequence with the monodromy
identification and regular-singularity assumptions explicit.

In this paper, we study the finite-image problem for the KLM
construction. This involves three related images: those of a normal free subgroup,
the braid subgroup, and the full semidirect product. We give finite-image
criteria and classify, for each fixed finite input, all convolution
parameters giving finite free or full output. In the finite-parameter
case, an explicit order bound makes the classification effective.
We also determine the exceptional inputs with infinitely many such
parameters and describe the different behavior of the braid image.
The results apply to algebraic solutions of the corresponding KZ-type systems.

\paragraph{Representations, unitarity, and invariant Hermitian forms.}
\begin{definition}[Representations]\label{def:representations}
By a finite-dimensional complex representation we mean the data
$(G,V,\rho)$ of a group $G$, a finite-dimensional complex vector space
$V$, and a homomorphism
\[
 \rho\colon G\longrightarrow\GL(V).
\]
When $G$ and $V$ are understood, we denote the representation simply
by $\rho$. Its \emph{image} is the subgroup $\rho(G)\subseteq\GL(V)$;
in particular, ``finite image'' means that this subgroup is finite.
\end{definition}

\begin{definition}[Unitarity and invariant Hermitian forms]\label{def:unitarity}
Following
\cite[Definition~7]{KLM} and the convention of
\cite[Definition~2.2]{Long1994}, we say
that $\rho$ is \emph{unitary relative to $H$} if $H$ is a nondegenerate
Hermitian matrix and
\[
 \rho(g)^*H\rho(g)=H\qquad(g\in G).
\]
We call a representation satisfying this identity for some
nondegenerate $H$, which may be indefinite, \emph{generally
unitary}. The associated sesquilinear form
\[
 h_H(v,w)=v^*Hw
\]
is the \emph{invariant Hermitian form}; its associated quadratic form
is $q_H(v)=v^*Hv$. If $H$ can be chosen positive definite, we call
$\rho$ \emph{unitary}, or equivalently \emph{positively unitarizable}.
After choosing an $H$-orthonormal basis, its matrices lie in the usual
unitary group. Thus every unitary representation is generally unitary,
whereas a generally unitary representation preserving only an
indefinite form need not be unitary in this sense.
\end{definition}

For a fixed input $\rho:\Gamma\to\GL(V)$, write
$T_\lambda=\KLM_\lambda(\rho)$, $\lambda\in\C\setminus\{0,1\}$.
Restricting the KLM output to $F_n$ recovers the Dettweiler--Reiter
multiplicative middle convolution of the restricted input
\cite{DR2000,DR2007}:
$T_\lambda|_{F_n}=\MCDR_\lambda(\rho|_{F_n})$ in our conventions.
The KLM output also carries the compatible braid-group action supplied
by the input representation of the semidirect product.
For the detailed definitions and this relation, see
\cite[Section~2.3]{KLM}.
We distinguish the restrictions to $F_n$ and $B_n$ from the
representation of the full semidirect product $\Gamma$.
Invariant Hermitian forms provide the main tool for our finite-image
criterion. For
$|\lambda|=1$, $\lambda\ne1$, KLM carries an input preserving a
nondegenerate Hermitian form to an output preserving a canonical
nondegenerate Hermitian form, whose signature may be indefinite
\cite{KLM}. For input preserving a positive definite form, the closed signature formula
of \cite{Signature} determines exactly when this canonical output form
is definite, and hence gives positively unitarizable output. For
reducible output, positive unitarizability can also arise by changing
the signs on invariant components. This distinction is part of the
structural analysis needed to determine finiteness of the image.

The meaning of unitarity is significant here. Long's
Definition~2.2 and Theorem~2.8 use a nondegenerate Hermitian form,
which may be indefinite; the preservation theorem there is stated
for generic unit-modulus parameters \cite{Long1994}. Theorems~8--9 of \cite{KLM}
likewise give invariance and nondegeneracy. In contrast,
Theorem~\ref{thm:unitary-lift} characterizes \emph{positive}
unitarizability of the full KLM output by its free restriction.
Its additional step is the braid-compatible sign correction on
convolved input isotypic components, including multiplicities.
The input may be reducible and need not have finite image.
The arbitrary-rank parabolic-cohomology signature theorem
\cite[Theorem 2.3]{DW}, applied to
$(g_1,\ldots,g_n,\lambda I_N,(\lambda P_n)^{-1})$, gives the reversed
ordered signature used here; the numerical comparison in
Remark~\ref{rem:parabolic-comparison} does not identify the pairings themselves.
Lemma~\ref{lem:forms} gives direct proofs of free-generator invariance,
the radical equality, and compatibility with orthogonal decompositions.
Full KLM invariance is recalled separately in Lemma~\ref{lem:full-invariance}.

The preceding connections lead to two questions:
\begin{enumerate}[label=(\arabic*)]
\item For finite input, when is the KLM output finite, and can all
parameters with finite free or full image be determined effectively?
\item How does finiteness of the braid restriction relate to that of
the free and full images, and how do its ordinary and projective
images change at exceptional parameters?
\end{enumerate}
We answer these questions under the input assumptions stated below.
The proofs combine positive unitarizability, arithmetic finiteness,
a count of conjugates giving effective parameter bounds, and equivariant
splittings. The all-parameter classification requires this additional
counting argument: a finite-image test at each root of unity by itself
does not bound the orders that must be examined. The examples include a rank-three non-rigid
finite input with finite output, as well as a classical finite Burau
specialization illustrating the dimension drop at an exceptional parameter.
For the $\mathfrak S_3$, scalar, and $G(3,1,3)$ inputs, we give explicit
integer-arithmetic procedures that verify the complete parameter
classifications.

\paragraph{Finite-image criteria and parameter classification.}
Theorem~\ref{thm:free} applies the arithmetic principle constituent
by constituent to a specified finite input, using middle-convolution
irreducibility \cite[Theorem 2.4]{DR2007}; see also
\cite[Theorem 2.4(iii)]{DR2003}.
Lemma~\ref{lem:simple} handles its exceptional cases separately.
Its full-image consequence additionally uses
Theorem~\ref{thm:unitary-lift}; finiteness of $\rho(F_n)$ alone does not
suffice. Theorem~\ref{thm:parameters} addresses the additional
classification problem with fixed input and initially unbounded
parameter order. The sieve using conjugates over the input cyclotomic field supplies
the effective order bound needed to turn the fixed-parameter test into
a complete finite enumeration. Together with the exceptional-case
analysis, it classifies all parameter sets without a rigidity assumption.
The order bound uses the structural properties of the canonical form;
the closed signature formula evaluates the resulting finite tests from
spectral data. Section~\ref{sec:parameters} makes this distinction
explicit.
Belkale's finite classification \cite[Corollary 1.3.4]{Belkale}
instead fixes a common multiple of the local orders and the number
of punctures, within the class of irreducible rigid local systems.
The present classification varies the convolution parameter for a
specified finite input, rather than varying all local systems.
General algorithms for deciding finiteness of a specified matrix group
already exist \cite{DFO}; Theorem~\ref{thm:braid-dichotomy} reduces
braid-image finiteness for the parameter family to finiteness of the
single representation $T_{\rm gen}$.

\paragraph{Exceptional parameters and braid actions.}
Long's original-representation summand \cite[Theorem 2.11]{Long1994},
the reduced construction \cite[Section 6]{BigelowTian}, and the
equivariant maps used in the construction of \cite{HN} precede this work.
The nonexceptional braid representation is already parameter-independent
in \cite[Corollary 1.2 and Proposition 6.11]{Signature}.
The additional assertion here is that finiteness persists across
quotients at exceptional parameters when the input braid image is finite.
Section~\ref{sec:resonance} uses these equivariant maps after quotienting by the
local fixed subspaces, isolates each product eigenspace $V_\mu$, and
constructs an equivariant retraction. Finite input braid image permits
averaging even when the product matrix is not semisimple. We derive
surjective homomorphisms of ordinary images with finite kernel, the all-parameter
braid-finiteness alternative, and the corresponding projective kernels.
The two projective exact sequences are elementary group-theoretic
consequences of the split representation; they specify precisely
which additional relations arise in the projective image at exceptional parameters. A direct sum
of representations does not assert that the image-group extension
splits.

There is also recent work on finite braid or mapping-class-group
\emph{orbits} of local systems: Lam--Landesman--Litt study rank-two
character varieties using middle convolution and finite reflection
groups \cite{LLL}, and Vayalinkal gives explicit enumeration methods
\cite{Vayalinkal}. Those orbits consist of conjugacy classes of free-group
representations. Their finiteness is different from finiteness of the
linear image $T_\lambda(B_n)$ studied here; the former setting can
include representations with Zariski-dense image on $F_n$.

\subsection{Conventions}\label{sec:conventions}

\paragraph{Groups and representations.}
Throughout, $n\ge2$. The standard generators of $B_n$ are
$\sigma_1,\ldots,\sigma_{n-1}$. Its action on $F_n$ defining
$\Gamma=F_n\rtimes B_n$ is
\begin{equation}\label{eq:artin-action}
\begin{aligned}
 \sigma_i x_i\sigma_i^{-1}&=x_{i+1},\\
 \sigma_i x_{i+1}\sigma_i^{-1}&=x_{i+1}^{-1}x_ix_{i+1},\\
 \sigma_i x_j\sigma_i^{-1}&=x_j\qquad(j\notin\{i,i+1\}),
\end{aligned}
\end{equation}
for $1\le i<n$ and $1\le j\le n$. We write $\mathfrak S_n$ for
the symmetric group, so $PB_n=\ker(B_n\to\mathfrak S_n)$.
Write $I_d$ for the $d\times d$ identity matrix and $I_W$ for the
identity map on a vector space $W$. For an input space $V$, put
$N:=\dim V$; thus the identity matrices on $V$ and $V^{\oplus n}$
are $I_N$ and $I_{nN}$, respectively.
For an input representation $\rho$ of $F_n$ or $\Gamma$, put
$g_j=\rho(x_j)$ ($1\le j\le n$) and $P_n=g_1\cdots g_n$.
When $\rho$ is defined on $\Gamma$, also put $s_i=\rho(\sigma_i)$
($1\le i<n$). An input matrix $g_j$ is called \emph{active} if $g_j\ne I_N$;
on an invariant subspace $U\subseteq V$, it is called active if
$g_j|_U\ne I_U$. The \emph{output} is
$\MCDR_\lambda(\rho)$ for an $F_n$-input, or
$T_\lambda=\KLM_\lambda(\rho)$ for a $\Gamma$-input.
The adjectives \emph{free}, \emph{braid}, and \emph{full} refer to
restriction to $F_n$, restriction to $B_n$, and the representation of
$\Gamma$, respectively. Thus finite full input means
$|\rho(\Gamma)|<\infty$. The full output is $T_\lambda$, and its
free restriction is $\MCDR_\lambda(\rho|_{F_n})$.

\paragraph{Forms and spectral notation.}
Unitarity terminology is as fixed above;
\emph{unitarizable} without a modifier means positively unitarizable.
All representations act on column vectors on the left, and products of
group elements are sent to products of matrices in the written order.
We write $\operatorname{sig}\mathcal H=(p,q)$ for the numbers
of positive and negative eigenvalues of a Hermitian matrix representing
$\mathcal H$, in that order;
zero eigenvalues are omitted. Reversing the signature means replacing
$(p,q)$ by $(q,p)$. A form is \emph{definite} if it is positive or
negative definite; \emph{positive} and \emph{negative} forms mean
positive and negative definite forms. We write $\Spec(A)$ for the
set of eigenvalues of $A$. For unit-circle eigenvalues, the
\emph{eigenangles} are their arguments divided by $2\pi$, chosen in
$[0,1)$ and listed with algebraic multiplicity.
For real $x$, write $\{x\}=x-\lfloor x\rfloor$ for its fractional part.

\paragraph{Arithmetic conventions.}
A finite-image representation
can be realized over a number field after a change of basis, by
Brauer's splitting-field theorem \cite[p.~461]{Brauer1945} and complete
reducibility of complex representations of finite groups. For effective
statements, an exact number-field presentation and the input matrices are
part of the data. A \emph{coefficient field} $E\subset\C$ is a number
field containing the entries of the matrices under consideration;
we enlarge it when additional algebraic parameters are introduced.
A \emph{coefficient embedding} is a field embedding
$\tau:E\hookrightarrow\C$, applied entrywise to those matrices.
For a number field $E$, its ring of integers
$\mathcal O_E$ consists of the elements of $E$ that satisfy a monic
polynomial with coefficients in $\Z$. In particular, any root of unity
$\lambda\in E$ belongs to $\mathcal O_E^\times$: both $\lambda$ and
$\lambda^{-1}$ satisfy $X^r-1=0$ for some integer $r\ge1$.
The zero-dimensional representation has trivial image.

\subsection{Main results}\label{sec:main-results}

The results of this paper fall into four parts:
\begin{enumerate}[label=(\arabic*)]
\item positive unitarizability and finiteness;
\item complete classification of finite-image parameters $\lambda$;
\item algebraic solutions of KZ-type equations;
\item exceptional parameters and parameter-independent braid finiteness.
\end{enumerate}
\subsubsection{Positive unitarizability and finiteness}\label{sec:main-unitarity}

If $\rho:\Gamma\to\GL(V)$ is positively unitarizable and
$|\lambda|=1$, $\lambda\ne1$, then
\begin{equation}\label{eq:intro-unitary}
 T_\lambda|_{F_n}\text{ is positively unitarizable}
 \iff T_\lambda\text{ is positively unitarizable}.
\end{equation}
The input may be reducible and have infinite image.
For finite-image $F_n$-input and a root of unity $\lambda\ne1$,
the output has finite image if and only if the canonical form on each
nonzero convolved irreducible constituent is definite at every
coefficient embedding. For finite full input, moreover,
\begin{equation}\label{eq:intro-full}
 T_\lambda(F_n)\text{ is finite}
 \iff T_\lambda(\Gamma)\text{ is finite}
 \qquad(\lambda\ne0,1),
\end{equation}
and these conditions are also equivalent to finiteness of the
corresponding projective images.

\subsubsection{Complete classification of finite-image parameters \texorpdfstring{$\lambda$}{lambda}}\label{sec:main-parameters}

\begin{maintheorem}[Classification of finite-output parameters]
Let $\rho:F_n\to\GL(V)$ have finite image, and define
\[
 \mathcal F(\rho)=\{\lambda\in\C\setminus\{0,1\}:
       \MCDR_\lambda(\rho)(F_n)\text{ is finite}\}.
\]
Exactly one of the following holds.
\begin{enumerate}[label=(\roman*)]
\item Some irreducible constituent has at least two active generators.
Then $\mathcal F(\rho)$ is finite and effectively enumerable from
exact input data.
\item Every irreducible constituent has at most one active generator,
and $\rho$ is nontrivial. Then $\mathcal F(\rho)$ consists of all
roots of unity other than $1$.
\item The input is trivial. Then the output is zero and
$\mathcal F(\rho)=\C\setminus\{0,1\}$.
\end{enumerate}
\end{maintheorem}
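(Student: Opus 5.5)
The plan is to reduce to irreducible constituents and then sort them by the number of active generators. Since $\rho(F_n)$ is finite, $\rho$ is semisimple, $\rho=\bigoplus_k\rho_k$ with each $\rho_k$ irreducible. $\MCDR_\lambda$ is additive on direct sums, so
\[
\MCDR_\lambda(\rho)(F_n)\ \text{is finite}\iff \MCDR_\lambda(\rho_k)(F_n)\ \text{is finite for every }k .
\]
Hence $\mathcal F(\rho)=\bigcap_k\mathcal F(\rho_k)$. The three alternatives are mutually exclusive by their hypotheses, so it remains to compute $\mathcal F(\rho_k)$ for each type of constituent and then intersect.

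\emph{Constituents with at most one active generator.} A constituent with no active generator is a trivial character. Its convolution is zero, because $V^{\oplus n}$ is quotiented by the local fixed subspaces, which here exhaust everything, so it imposes no condition. This gives case (iii) when all constituents are of this kind. A constituent with exactly one active generator $g_j$ is one-dimensional, since an irreducible representation of the cyclic group generated by $g_j$ is a character $\chi$ of order $m>1$. In that case the convolution is computed by hand, which is the role of Lemma~\ref{lem:simple}. It is the rank-one space $V/\ker(g_j-1)$ on which $x_j$ acts by the scalar $\lambda\chi(x_j)$ and the other generators act trivially. This is finite exactly when $\lambda$ is a root of unity. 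For $\lambda$ not a root of unity the output already fails to be finite, so $\mathcal F\subseteq\mu_\infty\setminus\{1\}$ in general. Intersecting gives case (ii), namely all roots of unity other than $1$.

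\emph{Constituents with two or more active generators (case (i)).} By \cite[Theorem 2.4]{DR2007}, $\MCDR_\lambda(\rho_k)$ is irreducible and nonzero for $\lambda\neq1$. By the finiteness criterion stated above (Theorem~\ref{thm:free}), for a root of unity $\lambda$ the image is finite if and only if the canonical Hermitian form is definite at every embedding $\tau$ of $E(\lambda)$. The main obstacle is to bound the order $r$ of $\lambda$. The approach is as follows. Fix $E$ cyclotomic containing the entries of $\rho$. For $\lambda$ of order $r$, the $E$-conjugates $\tau(\lambda)$ range over the primitive $r$-th roots $\zeta$ that are conjugate to $\lambda$ over $E$. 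There are at least $\varphi(r)/[E:\Q]$ of them, and all of them must give definite forms for the fixed input. The closed signature formula of \cite{Signature} expresses the signature through fractional parts of the form $\{\theta+\arg\zeta/2\pi\}$, where $\theta$ runs over the eigenangles of the $g_j$ and of $P_n$. For two active generators, definiteness forces $\arg\zeta/2\pi$ into a union of intervals whose total length is bounded by a constant $c<1$ that depends only on the eigenangles. This is the usual interlacing obstruction, and it is where the two active generators are needed. Equidistribution of primitive roots in cosets modulo the conductor of $E$ then yields an explicit bound $r\le R(\rho)$. One concrete way to get it is a counting argument that compares $\varphi(r)/[E:\Q]$ with the number of admissible residues. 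Below $R$, one tests each root of unity by evaluating the signature formula at every embedding, so $\mathcal F(\rho_k)$ is finite and computable. Since a type-(i) constituent contributes a finite set, the intersection $\mathcal F(\rho)$ is finite and effectively enumerable, which gives case (i).

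I expect the uniform interval-length estimate from the signature formula to be the hardest step. The first thing I would try is to isolate two active generators and show that their eigenangle contributions already force a sign change on an arc of length bounded below.
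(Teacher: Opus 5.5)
Your architecture is the paper's. You reduce to irreducible constituents and handle constituents with at most one active generator by the rank-one computation. For a constituent with two active generators you combine three things: conjugates of $\lambda$ over a cyclotomic coefficient field, an interval of indefinite forms, and a coset count.

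\textbf{The gap.} You flag the central step and then only assert it: for such a constituent $U$ there should be an open interval $I\subset(0,1)$ of positive length on which the canonical form is nondegenerate and indefinite (Lemma~\ref{lem:gap}). Everything downstream rests on this. For a one-active-generator constituent every open interval is definite, which is exactly why case (ii) is infinite. No ``interlacing obstruction'' is available for an arbitrary non-rigid finite input.

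Your proposed attack, extracting a sign change from the eigenangles of two active generators, also misplaces where the hypothesis enters. The missing idea is a continuity argument on the fixed space $W=U^{\oplus n}/K$:
\begin{enumerate}[label=(\alph*)]
\item The pencil $\sin(\pi l)B+\cos(\pi l)D^*D$ descends to $W$ and is continuous on $[0,1]$.
\item It equals $D^*D$ at $l=0$ and $-D^*D$ at $l=1$. Both are nonzero semidefinite because some $g_j\ne I$. So the first open interval between walls cannot be negative, and the last cannot be positive.
\item Suppose every such interval were definite. Then the sign would switch across a single wall $t$, where the limiting form would be both positive and negative semidefinite, hence zero on $W$.
\item Its radical $(K+L_{e^{2\pi it}})/K$ would then be all of $W$, i.e.\ $\MCDR_{e^{2\pi it}}(U)=0$. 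This contradicts Lemma~\ref{lem:nonzero}.
\end{enumerate}
That nonvanishing, which you cite but never use, is the only place the two active generators enter.

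In your signature-formula language the argument reads as follows. The function $m(l)$ is nondecreasing, with $m(0^+)=c<R$ and $m(1^-)>0$. If it took only the values $0$ and $R$ on open intervals, it would have to jump by $R$ at a single wall $t$. That forces $r(t)=R$, so the output would be zero there.

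\textbf{Two smaller points.}
\begin{enumerate}[label=(\roman*)]
\item The counting step must be done inside the specific coset $\{k:\gcd(k,q)=1,\ k\equiv b \pmod{\gcd(f,q)}\}$, where $f$ is the conductor of $E$, and it needs an explicit error term. Comparing the total $\varphi(r)/[E:\Q]$ with the number of primitive residues in the definite region gives no contradiction, since the latter is roughly $(1-\delta)\varphi(r)$ and can be larger. The paper's inclusion--exclusion bound $|\mathcal C-\delta\varphi(q)/\varphi(d)|\le 2^{\omega(q)}$, with $d=\gcd(M,q)$, gives $\varphi(q)/2^{\omega(q)}\le\varphi(M)/\delta$. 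That is an effectively finite list of orders.
\item In case (i), parameters that are not roots of unity must be excluded by Proposition~\ref{prop:nonroot}. Your exclusion used only one-active-generator constituents.
\end{enumerate}
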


\subsubsection{Algebraic solutions of KZ-type equations}\label{sec:main-algebraicity}

Suppose $\rho$ has finite full image and $\nabla_\lambda$ is a flat
algebraic connection on the ordered configuration space
$X_{n+1}=\{(z_0,\ldots,z_n)\in\C^{n+1}:z_i\ne z_j\ (i\ne j)\}$,
regular singular also at infinity, whose monodromy is isomorphic to
$T_\lambda|_{F_n\rtimes PB_n}$. Then all its horizontal solutions are
algebraic if and only if $T_\lambda(F_n)$ is finite. Thus the finite-image
parameter classification also determines the algebraicity locus of any
such family within its parameter domain.

\subsubsection{Exceptional parameters and parameter-independent braid finiteness}\label{sec:main-resonance}

If $\rho(B_n)$ is finite, then
\[
 \{\lambda\in\C\setminus\{0,1\}:T_\lambda(B_n)\text{ is finite}\}
 =\C\setminus\{0,1\}\quad\text{or}\quad\varnothing.
\]
At an exceptional parameter, the generic braid representation splits
as the sum of the specialized output and a finite-image input summand.
The resulting surjection of braid images has finite kernel. Hence
finite braid image for every parameter can coexist with a finite set
of parameters giving finite full image.

\medskip
\noindent
The paper is organized as follows.
Section~\ref{sec:free} develops the preliminary finite-image criterion
for the middle convolution of a finite-image free-group representation.
Section~\ref{sec:full} proves that, for finite full input, finiteness of
the KLM representation of $F_n\rtimes B_n$ is determined by its
restriction to $F_n$, using a lifting theorem for positive unitarizability.
Section~\ref{sec:parameters} classifies the parameters giving finite
image for a fixed finite input.
Section~\ref{sec:algebraic} applies these criteria to algebraic solutions
of regular-singular systems, including KZ-type equations under the
stated monodromy identification.
Section~\ref{sec:resonance} studies exceptional parameters and proves that finiteness
of the braid output is parameter-independent when the input braid
image is finite.
Section~\ref{sec:discussion} discusses further directions.
Appendix~\ref{sec:examples} collects the examples in the order of the
corresponding results in the main text, including changes in braid images
and the necessity of the hypotheses.

\section{Middle convolution, invariant forms, and representations of \texorpdfstring{$F_n$}{Fn}}\label{sec:free}

\subsection{Construction and structural inputs}
\begin{definition}[Dettweiler--Reiter convolution and middle convolution]\label{def:dr-convolution}
Initially let $\rho:F_n\to\GL(V)$ be a finite-dimensional complex
representation, put $N=\dim_\C V$, write $g_j=\rho(x_j)$
for $1\le j\le n$, and let $\lambda\in\C\setminus\{0,1\}$.
No finiteness assumption is needed to define the construction.
On $V^{\oplus n}$ let $A_i$ be the identity outside block row
$i$, whose entries are
\begin{equation}\label{eq:Ai}
 \bigl(\lambda(g_1-I_N),\ldots,\lambda(g_{i-1}-I_N),
       \lambda g_i,g_{i+1}-I_N,\ldots,g_n-I_N\bigr).
\end{equation}
Write
\[
 K=\bigoplus_{j=1}^n\ker(g_j-I_N),\quad
 L=L_\lambda=\bigcap_{j=1}^n\ker(A_j-I_{nN}),\quad
 Q_\lambda=V^{\oplus n}/(K+L).
\]
The \emph{Dettweiler--Reiter convolution} is the representation
$C_\lambda(\rho):F_n\to\GL(V^{\oplus n})$ defined by
$x_i\mapsto A_i$ for $1\le i\le n$.
The subspace $K+L$ is invariant under these operators. The induced
representation on $Q_\lambda$, given by
$x_i\cdot[v]=[A_i v]$, is the \emph{Dettweiler--Reiter multiplicative
middle convolution} $\MCDR_\lambda(\rho)$ \cite{DR2000,DR2007}.
The summand $\ker(g_j-I_N)$ in $K$ is placed in coordinate $j$.
\end{definition}

\begin{lemma}[Irreducibility under multiplicative middle convolution]\label{lem:simple}
If $U$ is a finite-dimensional irreducible complex representation of
$F_n$ and $\lambda\ne0,1$, then $\MCDR_\lambda(U)$ is either zero or irreducible.
\end{lemma}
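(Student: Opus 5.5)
We will derive this from Dettweiler--Reiter's irreducibility theorem \cite[Theorem 2.4]{DR2007} (see also \cite[Theorem 2.4(iii)]{DR2003}). The only work is to check that its hypotheses hold for an irreducible $U$, and to treat separately the cases where they fail; these are exactly the exceptional cases mentioned in the introduction. Write $g_j=U(x_j)$. Dettweiler--Reiter's theorem, applied to $\mathbf g=(g_1,\ldots,g_n)$ and $\lambda\ne1$, gives irreducibility of $\MCDR_\lambda(\mathbf g)$ under two kinds of hypothesis. The first is that $\mathbf g$ is irreducible, which holds by assumption. The second is a pair of non-degeneracy conditions (their conditions $(\ast)$ and $(\ast\ast)$), in the multiplicative version with $n+1$ points including $\infty$, where the local monodromy at infinity is $(\lambda P_n)^{-1}$ or $P_n^{-1}$ depending on the convention:
\[
 \bigcap_{j\ne k}\ker(g_j-I_N)\cap\ker(\tau g_k-I_N)=0,\qquad
 \sum_{j\ne k}\operatorname{im}(g_j-I_N)+\operatorname{im}(\tau g_k-I_N)=V,
\]
for all $k$ and all $\tau\in\C^\times$.

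First I will verify these conditions when $U$ is irreducible and $\dim U\ge2$. Fix $k$ and $\tau$. The intersection above is a common eigenspace: each $g_j$ with $j\ne k$ acts on it as the identity, and $g_k$ acts on it by the scalar $\tau^{-1}$. Since $F_n$ is generated by the $x_j$, this subspace is $F_n$-invariant. It is nonzero only if it equals $U$, and then $U$ is a sum of one-dimensional representations, which contradicts $\dim U\ge2$. The sum condition follows dually. The sum is $F_n$-invariant because $g_i(g_j-I)=(g_j-I)+(g_i-I)(g_j-I)$ with the appropriate scalar twist. Alternatively, pass to the dual representation, where the images become annihilators of kernels, and rerun the same argument. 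So in the case $\dim U\ge2$ the cited theorem applies, and $\MCDR_\lambda(U)$ is irreducible, or zero when its dimension formula gives $0$.

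Next comes the case $\dim U=1$, where $g_j=c_j\in\C^\times$. Here I will compute $\MCDR_\lambda(U)$ directly from Definition~\ref{def:dr-convolution}. Then $K=\bigoplus_{c_j=1}\C$. For the invariant vectors $L$, the known dimension formula gives $\dim L=1$ exactly when $\lambda c_1\cdots c_n=1$ and $0$ otherwise. So $\dim Q_\lambda=\#\{j:c_j\ne1\}-\dim L$, modulo checking $K\cap L=0$; I will confirm this directly from \eqref{eq:Ai}. There are three subcases. If at most one $c_j\ne1$, the output has dimension at most $1$, so it is zero or one-dimensional, hence irreducible. If the output has dimension $m\ge2$, I will identify it as a rank-$m$ Pochhammer-type (rigid hypergeometric) tuple. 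Its irreducibility then follows from the Dettweiler--Reiter criterion, because rank-one tuples trivially satisfy $(\ast)$ and $(\ast\ast)$ except in the degenerate configurations that produce dimension $\le1$. Alternatively, I will argue directly that there is no invariant line, since the $A_i-I$ with $c_i\ne1$ have rank one with images spanning $Q_\lambda$.

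I expect the main obstacle to be bookkeeping between conventions. The paper's $A_i$ in \eqref{eq:Ai} places $\lambda$ on the left block entries and the coordinate $g_i$ in position $i$. Matching this to Dettweiler--Reiter's statement requires checking that their conditions are formulated with the same scalar twist, or equivalently that conjugating by a block-diagonal scaling transports between the two conventions. I would first try to settle this by writing out $n=2$ explicitly and comparing.
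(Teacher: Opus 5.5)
Your proposal is correct and is essentially the paper's argument: check Dettweiler--Reiter's conditions $(\ast)$ and $(\ast\ast)$ for the irreducible tuple, invoke their irreducibility theorem, and compute the degenerate rank-one case by hand. The paper splits instead by whether at least two generators are active, which for an irreducible tuple is exactly when $(\ast)$ and $(\ast\ast)$ hold; this handles your rank-one subcase with output dimension $\ge2$ in the same step, so the Pochhammer identification is unnecessary (and your alternative of merely excluding invariant lines would not by itself prove irreducibility once the output has dimension $\ge3$).
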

\begin{proof}
For an irreducible tuple with at least two nonidentity generators,
conditions $(*)$ and $(**)$ preceding \cite[Theorem 2.4]{DR2003}
hold: they exclude one-dimensional submodules and quotients on which
at most one generator acts nontrivially. Part~(iii) then gives
irreducibility; see also \cite[Theorem 2.4]{DR2007}.
If at most one generator is nonidentity, the
image is cyclic, so an irreducible complex input has dimension one.
For the trivial input, $K=U^{\oplus n}$ and the output is zero.
If only $g_j=a\ne1$ is nonidentity, $U^{\oplus n}/K$ is one-dimensional.
Its generator $x_j$ acts by $\lambda a$, with all other generators
acting trivially. The additional quotient by $L$ is zero when
$\lambda a=1$, and otherwise leaves this one-dimensional module.
\end{proof}

For unitarizable input and $|\lambda|=1$, $\lambda\ne1$,
the radical equality for the canonical form was established in \cite{KLM}.
We recall a direct verification of this equality and free-generator
invariance in Lemma~\ref{lem:forms}(i) below.
For numerical evaluation we additionally use the closed signature
formula of \cite{Signature}. The arbitrary-rank parabolic-cohomology
signature in \cite[Theorem 2.3]{DW} is a closely related earlier result.
The numerical comparison, including the reversal of signs, is explained
in Remark~\ref{rem:parabolic-comparison} in the discussion.

\begin{definition}[Canonical intermediate Hermitian form and radical]\label{def:ambient-form}
Suppose that the input
preserves a nondegenerate Hermitian form represented by a Hermitian
matrix $H$, assume
$|\lambda|=1$, $\lambda\ne1$, and choose a square root
$\lambda^{1/2}$. On the intermediate convolution space
$V^{\oplus n}$, before taking the quotient by $K+L_\lambda$, the form $\widetilde H_\lambda$ introduced in
\cite[Section~5.1]{KLM} has block entries
\begin{equation}\label{eq:ambient-form}
 (\widetilde H_\lambda)_{jk}=
 \begin{cases}
 \lambda^{-1/2}H(g_j^{-1}-I_N)(g_k-I_N),&j<k,\\
 \lambda^{-1/2}H(g_j^{-1}-\lambda I_N)(g_j-I_N),&j=k,\\
 \lambda^{1/2}H(g_j^{-1}-I_N)(g_k-I_N),&j>k.
\end{cases}
\end{equation}
The radical of this form is
\[
 \operatorname{rad}(\widetilde H_\lambda)
 :=\{v\in V^{\oplus n}\mid
       \widetilde H_\lambda(v,w)=0
       \text{ for every }w\in V^{\oplus n}\}.
\]
Since the space is finite-dimensional and \eqref{eq:ambient-form}
specifies the Hermitian matrix representing the form, this radical
is precisely the kernel of that matrix:
\[
 \operatorname{rad}(\widetilde H_\lambda)
 =\ker\widetilde H_\lambda.
\]
\end{definition}
The equality $\ker\widetilde H_\lambda=K+L$ was established in
\cite{KLM}. For completeness, Lemma~\ref{lem:forms}(i) below recalls
a direct proof requiring only nondegeneracy of the input form.
Consequently, $\widetilde H_\lambda$ induces the canonical
nondegenerate form on the middle-convolution quotient $Q_\lambda$.

\begin{definition}[Intermediate Hermitian form in an orthonormal basis]\label{def:ambient-pencil}
For input preserving a positive definite form, choose an orthonormal basis, so that $H=I_N$. Write
$\lambda=e^{2\pi i l}$ with $0<l<1$, choose
$\lambda^{1/2}=e^{\pi i l}$. Put $N:=\dim_\C V$ and define
\begin{equation}\label{eq:D-definition}
 \begin{aligned}
 D &: V^{\oplus n}\longrightarrow V,\qquad
 (v_1,\ldots,v_n)\longmapsto\sum_{j=1}^n(g_j-I_N)v_j,\\
 D &= (g_1-I_N\ \cdots\ g_n-I_N)\in M_{N\times nN}(\C).
 \end{aligned}
\end{equation}
Here $v_j\in V$ for $1\le j\le n$. The definition of $D$ is
independent of $\lambda$. Then \eqref{eq:ambient-form} can be written
\begin{equation}\label{eq:canonical-pencil}
 \widetilde H_\lambda=\mathcal H(l)
 =\sin(\pi l)B+\cos(\pi l)D^*D,
\end{equation}
where
\[
 B_{jj}=-i(g_j-g_j^*),\qquad
 B_{jk}=-i(g_j-I_N)^*(g_k-I_N)\ (j<k),\qquad B_{kj}=B_{jk}^*.
\]
The expression for $\mathcal H(l)$ in \eqref{eq:canonical-pencil}
extends continuously to $l=0,1$.
\end{definition}
We collect precisely the form properties used below.

\begin{definition}[Canonical quotient form]\label{def:quotient-form}
With the assumptions and notation of Definition~\ref{def:ambient-form},
the \emph{canonical quotient form} on $Q_\lambda$ is defined by
\begin{equation}\label{eq:canonical-quotient-form}
 \mathcal H_{\lambda,h}([v],[w]):=v^*\widetilde H_\lambda w,
 \qquad [v],[w]\in Q_\lambda.
\end{equation}
Here $v,w\in V^{\oplus n}$ are representatives, and $h$ is the input
form represented by the Hermitian matrix $H$. Lemma~\ref{lem:forms}(i) below proves that
this is independent of the representatives and is nondegenerate.
\end{definition}

\begin{lemma}[Canonical free-group form and functorial decompositions]\label{lem:forms}
Let $\rho:F_n\to\GL(V)$ preserve a nondegenerate Hermitian form
$h$, write $g_j=\rho(x_j)$ for $1\le j\le n$, and assume
$|\lambda|=1$, $\lambda\ne1$. Let $H$ be the Hermitian matrix representing $h$ in the chosen basis,
and use the intermediate matrix $\widetilde H_\lambda$ of
\eqref{eq:ambient-form}, with the square-root convention of
\eqref{eq:canonical-pencil}.
\begin{enumerate}[label=(\roman*)]
\item The radical of $\widetilde H_\lambda$ is $K+L$ \cite{KLM}. Hence the
canonical quotient form of Definition~\ref{def:quotient-form} is
well-defined and nondegenerate. It is
invariant under $\MCDR_\lambda(\rho)$, namely
\[
 \mathcal H_{\lambda,h}([A_i v],[A_i w])
 =\mathcal H_{\lambda,h}([v],[w])\qquad(1\le i\le n).
\]
\item Assume in addition that $h$ is positive definite. Suppose
$V=\bigoplus_{a=1}^s V_a$ is an $h$-orthogonal
sum of $F_n$-invariant subspaces. Put
$\rho_a:=\rho|_{V_a}$ and $h_a:=h|_{V_a\times V_a}$.
Then the induced identification
\[
 \MCDR_\lambda(\rho)\simeq
 \bigoplus_{a=1}^s\MCDR_\lambda(\rho_a)
\]
identifies $\mathcal H_{\lambda,h}$ with the orthogonal sum of the
forms $\mathcal H_{\lambda,h_a}$ defined by
\eqref{eq:canonical-quotient-form} for these restricted inputs.
In particular, suppose an input summand is $U\otimes \mathcal M$, where
$U$ is an $F_n$-module and $\mathcal M$ is a finite-dimensional multiplicity
space on which $F_n$ acts trivially. If
$g_j|_{U\otimes \mathcal M}=g_{j,U}\otimes I_{\mathcal M}$ for $1\le j\le n$ and
the input form on this summand is $h_U\otimes k$, with $h_U$ and
$k$ positive definite, then its output is
$\MCDR_\lambda(U)\otimes \mathcal M$ with form $\mathcal H_U\otimes k$.
Here $I_{\mathcal M}$ is the identity on $\mathcal M$, and
$\mathcal H_U:=\mathcal H_{\lambda,h_U}$ is the canonical quotient
form on $\MCDR_\lambda(U)$ defined by
\eqref{eq:canonical-quotient-form} using the tuple
$(g_{1,U},\ldots,g_{n,U})$.
\end{enumerate}
No finite-image assumption is imposed.
\end{lemma}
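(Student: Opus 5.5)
We prove (i) by direct matrix computation on $V^{\oplus n}$ with $\widetilde H_\lambda$, and (ii) by functoriality of all constructions under orthogonal direct sums.

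\emph{Invariance.} First we would check $A_i^*\widetilde H_\lambda A_i=\widetilde H_\lambda$ blockwise. Write $A_i=I_{nN}+E_i R_i$, where $E_i$ is the inclusion of block $i$ and $R_i$ is the block row obtained from \eqref{eq:Ai} by subtracting $I_N$ in position $i$. The identity then reduces to
\[
 R_i^*E_i^*\widetilde H_\lambda+\widetilde H_\lambda E_iR_i+R_i^*E_i^*\widetilde H_\lambda E_iR_i=0 .
\]
Each block of \eqref{eq:ambient-form} has the shape $H(g_j^{-1}-\cdot)(g_k-I_N)$. Using $g_j^*H=Hg_j^{-1}$, $|\lambda|=1$ and $\bar\lambda^{1/2}=\lambda^{-1/2}$, every block of this identity collapses to an identity among the products $(g_j^{-1}-I_N)H(g_k-I_N)$. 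This is the routine, though tedious, part. As a cleaner route, one can note that $\widetilde H_\lambda$ factors through $D$ and a "twisted" map. Concretely, $\widetilde H_\lambda=\Phi^* \mathcal{J}\Phi$, where $\Phi=\operatorname{diag}(g_j-I_N)$ and $\mathcal J$ is a block matrix with entries in $H$ and $\lambda^{\pm1/2}$ (with the $j=k$ entry adjusted). Since $\Phi A_i$ is again expressible through $\Phi$ and the $g_j$, invariance then reduces to a smaller identity.

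\emph{Radical equality.} For the radical we show both inclusions.
\begin{itemize}
\item $K\subseteq\ker$: every block of $\widetilde H_\lambda$ has the right factor $(g_k-I_N)$, so vectors with $v_k\in\ker(g_k-I_N)$ are killed.
\item $L\subseteq\ker$: $v\in L$ means $R_jv=0$ for all $j$. Writing out $\widetilde H_\lambda v$ in block row $j$ expresses it as $\lambda^{\pm1/2}H(g_j^{-1}-I_N)$ times combinations of the $R_jv$, plus a diagonal correction that also vanishes.
\item $\ker\subseteq K+L$: this is the main obstacle. The plan is a dimension count. We compute $\dim(K+L)=\dim K+\dim L-\dim(K\cap L)$ using the standard Dettweiler--Reiter description $L=\{(v,\dots,v)\text{-type vectors twisted by }\lambda\}\cong\ker(\lambda P_n-I_N)$ and $K\cap L=0$ for $\lambda\ne1$. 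We then show $\operatorname{rank}\widetilde H_\lambda\ge nN-\dim(K+L)$, using nondegeneracy of $H$: if $\widetilde H_\lambda v=0$, the block equations successively force $(g_j-I_N)v_j$ to satisfy the defining relations of $L$ modulo $K$.
\end{itemize}
Given the radical equality, $\mathcal H_{\lambda,h}$ is well defined and nondegenerate. Invariance descends to the quotient because $K+L$ is $A_i$-stable.

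\emph{(ii) Orthogonal decompositions.} If $V=\bigoplus_a V_a$ is $h$-orthogonal and $F_n$-invariant, then each $g_j$ is block diagonal. Hence $A_i$, $K$ and $L$ decompose compatibly as $\bigoplus_a V_a^{\oplus n}$. The blocks of $\widetilde H_\lambda$ between the $a$- and $b$-parts for $a\ne b$ are $H_{ab}(\cdots)=0$, so the form is the orthogonal sum of the $\widetilde H_{\lambda,h_a}$. Passing to quotients gives the identification of forms.

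\emph{Tensor case.} For $U\otimes\mathcal M$ we have $g_j=g_{j,U}\otimes I_{\mathcal M}$ and $H=H_U\otimes k$. Then $A_i$, $K$, $L$ and $\widetilde H_\lambda$ are the tensor products of the corresponding objects for $U$ with $I_{\mathcal M}$, respectively $k$; for $K$ and $L$ this holds because kernels commute with $-\otimes\mathcal M$. Therefore the quotient is $\MCDR_\lambda(U)\otimes\mathcal M$ with form $\mathcal H_U\otimes k$. Positivity of $h$ is needed only to guarantee that orthogonal complements of invariant subspaces exist and are invariant, and that $k$ is positive definite.
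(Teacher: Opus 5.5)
Your part (ii) is fine and follows the paper: the $g_j$ and $H$ are block diagonal, so $A_i$, $K$, $L$ and $\widetilde H_\lambda$ split, and kernels commute with $-\otimes\mathcal M$. The gap is in (i), at the step you yourself flag as the main obstacle, $\ker\widetilde H_\lambda\subseteq K+L$. Your plan gives no argument there.
\begin{itemize}
\item Once $K+L\subseteq\ker\widetilde H_\lambda$ is known, the bound $\operatorname{rank}\widetilde H_\lambda\ge nN-\dim(K+L)$ is just a restatement of the equality to be proved. So computing $\dim(K+L)$ from $L\cong\ker(\lambda P_n-I_N)$ and $K\cap L=0$ does no work.
\item The sentence that the block equations ``successively force $(g_j-I_N)v_j$ to satisfy the defining relations of $L$ modulo $K$'' names the conclusion without a mechanism. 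Nothing in it is successive. It is $v$, not $(g_j-I_N)v_j$, that must lie in $L$ modulo $K$. Nothing in the plan says how to find the $K$-component of $v$.
\item Invariance is likewise only asserted to be ``routine, though tedious''.
\end{itemize}

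The missing idea is an exact factorization of each block row through your $R_j$. On the diagonal use $(g_j^{-1}-\lambda I_N)(g_j-I_N)=(g_j^{-1}-I_N)(\lambda g_j-I_N)$, and below the diagonal use $\lambda^{1/2}=\lambda^{-1/2}\lambda$. This gives
\[
 (\widetilde H_\lambda v)_j=\lambda^{-1/2}H(g_j^{-1}-I_N)R_jv\qquad(1\le j\le n),
\]
with no ``diagonal correction''.

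Since $H$ is invertible and $\ker(g_j^{-1}-I_N)=\ker(g_j-I_N)$, we have $v\in\ker\widetilde H_\lambda$ if and only if $R_jv\in\ker(g_j-I_N)$ for every $j$. For such $v$, put $k_j=(\lambda-1)^{-1}R_jv$. Then $k\in K$. Moreover $R_jk=(\lambda-1)k_j$: the off-diagonal terms vanish on $K$, and $\lambda g_j-I_N$ acts by $\lambda-1$ on $\ker(g_j-I_N)$. Hence $R_j(v-k)=0$ for all $j$, so $v-k\in L$. Conversely, $R_j(k+u)=(\lambda-1)k_j\in\ker(g_j-I_N)$ for $k\in K$, $u\in L$. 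This gives both inclusions at once, and it is exactly where $\lambda\ne1$ enters.

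The same identity also settles invariance. It says $E_i^*\widetilde H_\lambda=\lambda^{-1/2}H(g_i^{-1}-I_N)R_i$. Taking adjoints and using $(g_i^{-1}-I_N)^*H=H(g_i-I_N)$ gives $\widetilde H_\lambda E_i=\lambda^{1/2}R_i^*H(g_i-I_N)$. Your three-term expression then becomes
\[
 \lambda^{-1/2}R_i^*H\bigl[(g_i^{-1}-I_N)+\lambda(g_i-I_N)+(g_i^{-1}-\lambda I_N)(g_i-I_N)\bigr]R_i=0 .
\]
This is the paper's route. Your factorization $\Phi^*\mathcal J\Phi$ is then unnecessary. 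Its ``adjusted'' diagonal entry is left unspecified, which is delicate when $H$ is indefinite and the $g_j$ need not be semisimple.
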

\begin{proof}
We prove (i) directly from \eqref{eq:Ai} and
\eqref{eq:ambient-form}; these are the invariance and radical properties
appearing in \cite[Theorem 8]{KLM} and
\cite[Theorem 1.1]{Signature}.
Put $N=\dim_\C V$.
For $1\le j\le n$, let $\iota_j:V\to V^{\oplus n}$ be the
inclusion into coordinate $j$ and define the linear map
\[
 e_j:V^{\oplus n}\longrightarrow V,\qquad
 e_j(v)=\lambda\sum_{k<j}(g_k-I_N)v_k
       +(\lambda g_j-I_N)v_j+\sum_{k>j}(g_k-I_N)v_k.
\]
Thus $A_j-I_{nN}=\iota_j e_j$, and
$L=\bigcap_{j=1}^n\ker e_j$. We also use $e_j$ for the matrix of
this map in the fixed bases. The relations $g_j^*H=Hg_j^{-1}$ and
$\overline{\lambda^{1/2}}=\lambda^{-1/2}$ show that
$\widetilde H_\lambda$ is Hermitian. Its block row $j$ satisfies
\begin{equation}\label{eq:ambient-row-factor}
 (\widetilde H_\lambda v)_j
 =\lambda^{-1/2}H(g_j^{-1}-I_N)e_j(v).
\end{equation}
Indeed, the diagonal block uses
$(g_j^{-1}-\lambda I_N)(g_j-I_N)
=(g_j^{-1}-I_N)(\lambda g_j-I_N)$.

\emph{Radical.} Since $H$ is invertible, \eqref{eq:ambient-row-factor}
gives $v\in\ker\widetilde H_\lambda$ if and only if
$e_j(v)\in\ker(g_j-I_N)$ for every $1\le j\le n$.
For such a $v$, put $k_j=(\lambda-1)^{-1}e_j(v)$ and
$k=(k_1,\ldots,k_n)$. Then $k\in K$ and
$e_j(k)=(\lambda-1)k_j=e_j(v)$, so $v-k\in L$.
Conversely, for any $k\in K$ and $u\in L$ one has
$e_j(k+u)=(\lambda-1)k_j\in\ker(g_j-I_N)$.
Thus $\ker\widetilde H_\lambda=K+L$.

\emph{Invariance.} Expanding $A_i=I_{nN}+\iota_i e_i$ and using
\eqref{eq:ambient-row-factor} and its adjoint gives, for $1\le i\le n$,
\[
\begin{aligned}
 A_i^*\widetilde H_\lambda A_i-\widetilde H_\lambda
 &=\lambda^{-1/2}e_i^*H
 \bigl[(g_i^{-1}-I_N)+\lambda(g_i-I_N)\\
 &\hspace{38mm}+(g_i^{-1}-\lambda I_N)(g_i-I_N)\bigr]e_i=0.
\end{aligned}
\]
The bracket vanishes by expansion. The radical equality and invariance
therefore give the well-defined, nondegenerate invariant quotient form
in (i). This argument requires only a nondegenerate invariant input
form, not positive definiteness or finite image.

For (ii), the displayed blocks preserve orthogonal direct sums and tensor with
$k$ on a multiplicity space. The subspaces $K,L$ decompose in the same
way, since they are formed from kernels of these block operators.
Passing to the quotient proves (ii). A change to an orthonormal input
basis is used for these form identities, not to alter a number-field
lattice model. The assertions concern a nondegenerate invariant form,
which is not asserted to be positive.
\end{proof}

\subsection{Arithmetic criterion for finite input}\label{sec:free-arithmetic}
For the remainder of this section, assume that $\rho(F_n)$ is finite.
Put $G:=\rho(F_n)$ and let $M$ be its exponent, the least common
multiple of the orders of its elements. The exponent form of Brauer's
splitting-field theorem (see Isaacs \cite[Theorem 10.3]{Isaacs}) implies that every irreducible
complex representation of $G$ is realizable over
$\mathbb Q(\zeta_M)$, where $\zeta_M=e^{2\pi i/M}$.
By complete reducibility, the same holds for the given representation.
We may therefore choose a number field $E\subset\C$ over which the
input is defined. The cyclotomic field above is one possible choice;
the arguments below apply to any such coefficient field. For the
chosen field there are a finite-dimensional $E$-vector space
$V_E$ and a representation $\rho_E:F_n\to\GL_E(V_E)$ whose extension
of scalars to $\C$ is isomorphic to the original representation.
This change of basis does not change finiteness of the input or output
image. Here $\GL_E(V_E)$ denotes the group of invertible $E$-linear
maps on $V_E$.

In this arithmetic discussion we write $V$ and $\rho$ for $V_E$ and
$\rho_E$, respectively. For a nonzero complex parameter $\lambda\in\C^\times$,
the convolution is initially taken on $V\otimes_E\C$; choosing a
number-field model of the input does not restrict $\lambda$ to be
algebraic. We first show that, apart from the trivial input, parameters
which are not roots of unity cannot give a finite output image.

\begin{proposition}[Parameters which are not roots of unity]\label{prop:nonroot}
If the finite input is nontrivial and $\lambda\in\C^\times$ is not a root
of unity, then $\MCDR_\lambda(\rho)(F_n)$ is infinite. If every $g_j=I_N$,
the output is zero for every nonzero parameter.
\end{proposition}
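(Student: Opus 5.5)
The plan is to argue directly from Definition~\ref{def:dr-convolution}, without decomposing into irreducible constituents. Whenever the input is nontrivial, I will exhibit an eigenvector of some output generator whose eigenvalue is not a root of unity. This suffices: every element of a finite matrix group has finite order, so all its eigenvalues are roots of unity.

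\emph{Trivial input.} If every $g_j=I_N$, then $\ker(g_j-I_N)=V$ for each $j$, so $K=V^{\oplus n}$. Hence $Q_\lambda=V^{\oplus n}/(K+L)=0$ for every $\lambda\ne0$.

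\emph{Nontrivial input.} Choose an active index $j$. The matrix $g_j$ has finite order, so it is diagonalizable with root-of-unity eigenvalues. Since $g_j\ne I_N$, it has an eigenvector $v\in V$ with $g_jv=av$, where $a\ne1$ is a root of unity. Because $\lambda$ is not a root of unity, $\lambda a$ is not a root of unity; in particular $\lambda a\ne1$.

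Use the maps $e_k$ from the proof of Lemma~\ref{lem:forms}, for which $A_k-I_{nN}=\iota_ke_k$. These maps and this identity come from \eqref{eq:Ai} alone and need no form. We compute
\[
e_j(\iota_jv)=(\lambda g_j-I_N)v=(\lambda a-1)v .
\]
Hence $A_j\iota_jv=\iota_jv+(\lambda a-1)\iota_jv=\lambda a\,\iota_jv$. So $[\iota_jv]\in Q_\lambda$ is an eigenvector of $\MCDR_\lambda(\rho)(x_j)$ with eigenvalue $\lambda a$, provided it is nonzero.

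\emph{Main step: $\iota_jv\notin K+L$.} I expect this to be the only real point. For $k=(k_1,\ldots,k_n)\in K$ the computation in the radical part of Lemma~\ref{lem:forms} gives $e_j(k)=(\lambda-1)k_j\in\ker(g_j-I_N)$; this computation is purely algebraic. For $u\in L$ we have $e_j(u)=0$ by definition. Thus $e_j(K+L)\subseteq\ker(g_j-I_N)$.

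On the other hand, $e_j(\iota_jv)=(\lambda a-1)v$ is a nonzero multiple of $v$. Since $g_jv=av$ with $a\ne1$, this vector does not lie in $\ker(g_j-I_N)$. Hence $[\iota_jv]\ne0$, and it is an eigenvector of the output image of $x_j$ with eigenvalue $\lambda a$.

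\emph{Conclusion.} An eigenvalue that is not a root of unity forces $\MCDR_\lambda(\rho)(x_j)$ to have infinite order. Therefore $\MCDR_\lambda(\rho)(F_n)$ is infinite. The argument does not need irreducibility, Lemma~\ref{lem:simple}, or any invariant form, and it applies to every $\lambda\in\C^\times$ that is not a root of unity, including transcendental $\lambda$.
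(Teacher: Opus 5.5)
Your proof is correct. It uses the same witness as the paper: an eigenvector $v$ of an active $g_j$ with eigenvalue $a\ne1$, placed in coordinate $j$, on which $A_j$ acts by $\lambda a$. The difference lies in how you show that $[\iota_jv]$ survives in $Q_\lambda$.

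The paper first computes $L=F(\ker(\lambda P_n-I_N))$ by subtracting consecutive rows of the $A_i-I_{nN}$. It concludes $L=0$ because $P_n$ has finite order and $\lambda^{-1}$ is not a root of unity. It then only has to check that $\iota_jv\notin K$.

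You instead control all of $K+L$ at once through the inclusion $e_j(K+L)\subseteq\ker(g_j-I_N)$, which is the mechanism of the radical computation in Lemma~\ref{lem:forms}(i), and you never identify $L$.

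What your route buys: it is slightly more economical and more general. It uses nothing about $P_n$. It shows that $\lambda a$ is an eigenvalue of the output generator for every $\lambda\ne1$ with $\lambda a\ne1$. That includes roots of unity at exceptional parameters, where $L_\lambda\ne0$.

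What the paper's route buys: additional structural facts that are reused later. It gives the description of $L_\lambda$ and $K\cap L_\lambda=0$, both used again in Lemma~\ref{lem:boundary-multiplication}. It also gives $L_\lambda=0$, hence $Q_\lambda=V^{\oplus n}/K$, for non-roots of unity; this feeds the dimension count $\dim Q_\lambda=nr_0$ in the proof of Proposition~\ref{p:full}.
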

\begin{proof}
Put $P_n=g_1\cdots g_n$ and
$F(v)=(g_2\cdots g_nv,\ldots,g_nv,v)$. For $\lambda\ne1$, direct
subtraction of consecutive nonzero rows of $A_i-I_{nN}$ gives
\[
 L=F(\ker(\lambda P_n-I_N)),\qquad K\cap L=0.
\]
Since $P_n$ has finite order, $L=0$ for a parameter which is not a root
of unity. Choose $g_i$ with an eigenvalue $\mu\ne1$; such an eigenvalue
exists because the nonidentity finite-order matrix is semisimple.
On a vector in the $i$th coordinate with eigenvalue $\mu$, $A_i$ has
eigenvalue $\lambda\mu$. This vector has nonzero image modulo $K$,
so the eigenvalue survives the quotient. Since $\lambda\mu$ is not a
root of unity, the induced $A_i$ has infinite order, and the output
image is infinite.
\end{proof}

Consequently, a finite output image for a nontrivial finite input can
occur only when $\lambda$ is a root of unity. In that case we replace
$E$ by $E(\lambda)$ and continue to denote the enlarged field by $E$.
The input space and representation are extended to this enlarged
field as well. The matrices $A_i$, the subspaces $K,L$, and the
quotient $Q_\lambda$ are then all defined over $E$; for the remainder of this section,
$Q_\lambda$ denotes this $E$-vector space. Its extension of scalars
to $\C$ is the complex output considered above.
Here $\mathcal O_E$ is the ring of integers of $E$. Since $\lambda$
is a root of unity, $\lambda\in\mathcal O_E^\times$.

We use the classical arithmetic finiteness argument: an invariant
full $\mathcal O_E$-lattice, together with positive definite invariant
forms at every embedding of $E$ into $\C$, forces the image to be
finite. See \cite[Sections 4.3 and 5]{RRV} for this argument in the
setting of Goursat local systems, and \cite[Lemma 11]{GV} for an
irreducible integral unitary group over a CM field.
Lemma~\ref{lem:lattice} supplies
the lattice; the proof of Theorem~\ref{thm:free} recalls how restriction
of scalars reduces finiteness to the intersection of a compact group
with the discrete group of lattice automorphisms.

\begin{definition}[Full invariant lattice]\label{def:lattice}
A \emph{full $\mathcal O_E$-lattice} in $Q_\lambda$ means a
finitely generated $\mathcal O_E$-submodule $\Lambda_Q\subset Q_\lambda$
such that $E\Lambda_Q=Q_\lambda$. It is invariant under the output if
every output group element preserves $\Lambda_Q$.
\end{definition}

For an embedding $\tau:E\hookrightarrow\C$, put
$V^\tau=V\otimes_{E,\tau}\C$ and
$Q_\lambda^\tau=Q_\lambda\otimes_{E,\tau}\C$.
The representation $\rho^\tau$ on $V^\tau$ is obtained by applying
$\tau$ to the input matrix entries in an $E$-basis; it is called the
\emph{conjugate input}. The corresponding \emph{conjugate output} is
$\MCDR_{\tau(\lambda)}(\rho^\tau)$.
Compatibility with $\tau$ means that the induced representation on
$Q_\lambda^\tau$ is naturally isomorphic to this conjugate output:
applying $\tau$ before or after the construction gives the same result
up to this identification.

\begin{lemma}[Integral lattice and base change]\label{lem:lattice}
Suppose $\lambda$ is a root of unity. The output preserves a full
$\mathcal O_E$-lattice in $Q_\lambda$. The construction is compatible with every
embedding $\tau:E\hookrightarrow\C$ in the sense just defined.
\end{lemma}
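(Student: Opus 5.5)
The plan is to build the lattice from an invariant lattice of the input and push it through the convolution. Since $G=\rho(F_n)$ is finite and the input is defined over $E$, we pick any full $\mathcal O_E$-lattice $\Lambda_0\subset V$. Averaging gives $\Lambda_V:=\sum_{g\in G}g\Lambda_0$, which is again a finitely generated $\mathcal O_E$-submodule spanning $V$ over $E$. It is invariant under every $g_j^{\pm1}$, because the sum is finite and $\mathcal O_E$ is Noetherian.

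Next we form $\Lambda:=\Lambda_V^{\oplus n}\subset V^{\oplus n}$. By \eqref{eq:Ai}, each block row of $A_i$ is a combination of $g_k-I_N$, $\lambda(g_k-I_N)$ and $\lambda g_i$. All of these preserve $\Lambda_V$, because $\lambda\in\mathcal O_E$. Hence $A_i\Lambda\subseteq\Lambda$.

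For the inverses we use that $\lambda^{-1}\in\mathcal O_E$. The inverse $A_i^{-1}$ is the identity outside block row $i$, and its block row $i$ is obtained by solving for $v_i$. Explicitly, the diagonal entry becomes $\lambda^{-1}g_i^{-1}$ and the off-diagonal entries become $-\lambda^{-1}g_i^{-1}$ times the original entries. All of these are integral endomorphisms of $\Lambda_V$. So $\Lambda$ is invariant under the whole convolution image $C_\lambda(\rho)(F_n)$.

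We then take $\Lambda_Q$ to be the image of $\Lambda$ under the $E$-linear projection $V^{\oplus n}\to Q_\lambda=V^{\oplus n}/(K+L)$. It is a finitely generated $\mathcal O_E$-module, being a quotient of $\Lambda$. It satisfies $E\Lambda_Q=Q_\lambda$, since $E\Lambda=V^{\oplus n}$. It is preserved by the induced operators $x_i^{\pm1}\cdot[v]=[A_i^{\pm1}v]$. This gives the full invariant lattice of Definition~\ref{def:lattice}. One subtlety to remark on: if one wants $\Lambda_Q$ torsion-free, this is automatic because it sits inside an $E$-vector space.

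For base change, the key point is that every ingredient of Definition~\ref{def:dr-convolution} is given by polynomial expressions in the matrix entries of the $g_j$ and in $\lambda$, with coefficients in $\Q$:
\begin{itemize}
\item the matrices $A_i$;
\item the subspaces $K=\bigoplus_j\ker(g_j-I_N)$ and $L=\bigcap_j\ker(A_j-I_{nN})$;
\item the quotient $Q_\lambda$.
\end{itemize}
Formation of kernels, intersections, sums and quotients of $E$-linear maps commutes with the flat extension of scalars $-\otimes_{E,\tau}\C$.

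Hence $\ker(g_j-I_N)\otimes_{E,\tau}\C=\ker(g_j^\tau-I_N)$, and likewise for $L$, with $A_i^\tau$ being the matrix \eqref{eq:Ai} built from $\rho^\tau$ and $\tau(\lambda)$. Then $(K+L)\otimes_{E,\tau}\C=K^\tau+L^\tau$, and by right exactness of the tensor product
\[
Q_\lambda\otimes_{E,\tau}\C\cong (V^\tau)^{\oplus n}/(K^\tau+L^\tau).
\]
This isomorphism intertwines the induced operators. That is exactly the conjugate output $\MCDR_{\tau(\lambda)}(\rho^\tau)$.

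The main obstacle is mild. It is checking carefully that $A_i^{-1}$ has integral entries, which uses $\lambda^{-1}\in\mathcal O_E$ and $g_i^{-1}\in G$. Without this, $\Lambda$ would only be stable under the monoid generated by the $A_i$. An alternative fallback is to note that the $A_i$ restricted to $\Lambda$ are injective endomorphisms of a finitely generated module with $\det A_i=\lambda^N\det g_i$, which is a unit. Hence they are automorphisms of $\Lambda$, and the same then holds after passing to the quotient $\Lambda_Q$.
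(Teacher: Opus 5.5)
Your proposal is correct and follows essentially the same route as the paper: average a lattice over $G$, take $\Lambda^{\oplus n}$, use $\lambda^{\pm1}\in\mathcal O_E$ to see that the block formulas for $A_i^{\pm1}$ preserve it, pass to its image $\Lambda^{\oplus n}/(\Lambda^{\oplus n}\cap(K+L))$ in $Q_\lambda$, and obtain base change because $-\otimes_{E,\tau}\C$ commutes with kernels, intersections, sums and quotients. One small slip: $\sum_{g\in G}g\Lambda_0$ is $G$-stable because $G$ is a group, so left multiplication permutes the summands; finiteness of the sum gives finite generation, and Noetherianity is not needed for either.
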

\begin{proof}
For $G=\rho(F_n)$, set
$\Lambda=\sum_{g\in G}g\mathcal O_E^{\dim_E V}$, a full $G$-stable
lattice. Since $\lambda\in\mathcal O_E^\times$, the block formulas
for $A_i^{\pm1}$ show that they preserve $\Lambda^{\oplus n}$.
Thus the output preserves the full lattice
$\Lambda^{\oplus n}/(\Lambda^{\oplus n}\cap(K+L))\subset Q_\lambda$.
Extension of scalars along $\tau$ commutes with kernels, finite
intersections, sums, and quotients, hence with the output construction.
\end{proof}

\begin{theorem}[Componentwise finite-image criterion]\label{thm:free}
Let $E\subset\C$ be any number field over which the finite-image input
$\rho$ is defined, and assume that $\lambda\in E\setminus\{1\}$ is a
root of unity. For each field embedding $\tau:E\hookrightarrow\C$,
let $\mathcal I_\tau$ index the distinct isomorphism classes of
irreducible $\C[F_n]$-modules occurring in $V^\tau$.
For each $\alpha\in\mathcal I_\tau$, choose a representative
$U_\alpha$ and a positive definite $F_n$-invariant form $h_\alpha$ on it,
and put
\[
 W_\alpha:=\MCDR_{\tau(\lambda)}(U_\alpha),\qquad
 \mathcal H_\alpha:=\mathcal H_{\tau(\lambda),h_\alpha}.
\]
Here $\mathcal H_\alpha$ is the canonical quotient form of
Definition~\ref{def:quotient-form} on $W_\alpha$; dependence on $\tau$
is suppressed in the subscripts.
Then $\MCDR_\lambda(\rho)(F_n)$ is finite if and only if, for every $\tau$
and every $\alpha\in\mathcal I_\tau$ with $W_\alpha\ne0$, the form
$\mathcal H_\alpha$ is positive or negative definite.
This condition is independent of the choices of $h_\alpha$.
No rigidity, braid extension, or irreducibility of the full input or
output is assumed.
\end{theorem}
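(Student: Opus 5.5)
We reduce finiteness to a conjugate-by-conjugate definiteness statement and then localize that statement on the irreducible constituents. The approach is the classical restriction-of-scalars argument: the output group lies in an arithmetic group, and it is finite exactly when it also lies in a compact group at every embedding.

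\emph{Decomposition at a fixed embedding.} Fix $\tau$. Since $\rho^\tau$ has finite image, we average to obtain a positive definite invariant form $h$ on $V^\tau$. Taking $h$-orthogonal isotypic components, $V^\tau$ becomes an orthogonal sum $\bigoplus_\alpha U_\alpha\otimes\mathcal M_\alpha$ with form $h_\alpha\otimes k_\alpha$, where $k_\alpha$ is positive definite. By Lemma~\ref{lem:lattice}, $Q_\lambda^\tau\cong\MCDR_{\tau(\lambda)}(\rho^\tau)$. Lemma~\ref{lem:forms}(ii) then identifies this module, together with its canonical form, with $\bigoplus_\alpha W_\alpha\otimes\mathcal M_\alpha$ carrying the form $\bigoplus_\alpha\mathcal H_\alpha\otimes k_\alpha$. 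Lemma~\ref{lem:simple} shows that each nonzero $W_\alpha$ is irreducible.

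\emph{Independence of choices.} By Schur's lemma, any two positive definite invariant forms on the irreducible $U_\alpha$ are positive multiples of each other. The formula \eqref{eq:ambient-form} is linear in $H$, so $\mathcal H_\alpha$ is rescaled by the same positive scalar. Hence its definiteness does not depend on the choice of $h_\alpha$. Likewise, the $\tau$-twisted output does not depend on the choice of representative of the isomorphism class.

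\emph{Sufficiency.} Suppose every nonzero $\mathcal H_\alpha$ is definite. On each block we flip the sign of $\mathcal H_\alpha$ where necessary, obtaining a positive definite form on $Q_\lambda^\tau$ preserved by the output. This holds at every $\tau$. Now embed $Q_\lambda\otimes_{\Q}\RR=\prod_{\tau\ \mathrm{mod\ conj}}Q_\lambda^\tau$. The output group preserves a positive definite form on each factor, so it lies in a compact group. It also preserves the lattice $\Lambda_Q$ of Lemma~\ref{lem:lattice}, which is a $\Z$-lattice in this real space, so it lies in the discrete group $\mathrm{Aut}_\Z(\Lambda_Q)$. A discrete subgroup of a compact group is finite.

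\emph{Necessity.} Suppose the output image is finite. Then each conjugate output is finite, since applying $\tau$ entrywise is an isomorphism of image groups. Consider $W_\alpha\ne0$ and restrict to one copy $W_\alpha\otimes m$ with $m\in\mathcal M_\alpha$; this is an irreducible representation with finite image. Averaging gives a positive definite invariant form on it. The restriction of $\mathcal H_\alpha$ to this copy is nondegenerate by Lemma~\ref{lem:forms}(i), and it is invariant. By Schur's lemma it is a real multiple of the averaged form, hence definite.

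The main obstacle is the Schur step in the necessity direction. It needs irreducibility of the output constituents, which is supplied by Lemma~\ref{lem:simple}, and nondegeneracy on each constituent, which is supplied by Lemma~\ref{lem:forms}(i) applied to $U_\alpha$. Without these, an indefinite invariant form could coexist with finite image, as happens for reducible outputs.
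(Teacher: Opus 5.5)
Your proposal is correct and follows essentially the same route as the paper. It uses Schur-lemma rescaling for independence of $h_\alpha$, the isotypic decomposition with Lemma~\ref{lem:simple} for irreducibility of the nonzero $W_\alpha$, Schur on each finite irreducible convolved constituent for necessity, and sign correction together with the invariant $\mathcal O_E$-lattice of Lemma~\ref{lem:lattice} and the compact--discrete argument for sufficiency. The only cosmetic difference is that you obtain the positive form through the full canonical form and Lemma~\ref{lem:forms}(ii), whereas the paper assembles $\pm\mathcal H_\alpha$ tensored with an arbitrary positive form on $\mathcal M_\alpha$ block by block.
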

\begin{proof}
Since the input image is finite, averaging gives a positive definite
$F_n$-invariant form $h_\alpha$ on each $U_\alpha$.
By Schur's lemma, this form is unique up to a positive scalar.
Rescaling $h_\alpha$ rescales $\mathcal H_\alpha$ by the same positive
factor, so its definiteness does not depend on this choice.

Fix an embedding $\tau$ and define the multiplicity spaces
\[
 \mathcal M_\alpha:=\operatorname{Hom}_{\C[F_n]}(U_\alpha,V^\tau).
\]
The group $F_n$ acts trivially on $\mathcal M_\alpha$, whose dimension
is the multiplicity of $U_\alpha$. The input and output decompose as
\[
 V^\tau\simeq\bigoplus_{\alpha\in\mathcal I_\tau}
 U_\alpha\otimes_\C\mathcal M_\alpha,\qquad
 Q_\lambda^\tau\simeq\bigoplus_{\alpha\in\mathcal I_\tau}
 W_\alpha\otimes_\C\mathcal M_\alpha.
\]
The output decomposition follows from additivity of the construction,
by taking kernels and quotients componentwise.
By Lemma~\ref{lem:simple}, every nonzero $W_\alpha$ is irreducible.
If the output image is finite, so are its conjugates and their
constituents. Averaging produces a positive invariant form on each
$W_\alpha$. Schur's lemma implies that its nondegenerate canonical
Hermitian form is a nonzero real multiple of this positive form, hence
definite.

Conversely, change the sign of the canonical form on each negative
component and take its tensor product with a positive form on the
multiplicity space. Their direct sum is a positive invariant form for
the free-group output at every embedding. By Lemma~\ref{lem:lattice},
the image also preserves a full arithmetic lattice. We now apply the
classical arithmetic finiteness argument combining an invariant lattice
with positive definite invariant forms at every embedding; see
\cite[Sections 4.3 and 5]{RRV} for the same argument in the setting of
Goursat local systems. We recall the argument here. In the real vector
space obtained by restriction of scalars, these forms give a positive
invariant real quadratic form $q$. More explicitly, choose a
$\Z$-basis $e_1,\ldots,e_D$ of the resulting full $\Z$-lattice,
where $D=[E:\Q]\dim_E Q_\lambda$. For every element $g$ of the
output image, write
\[
 g e_j=\sum_{i=1}^D a_{ij}(g)e_i,\qquad a_{ij}(g)\in\Z.
\]
Invariance gives $q(g e_j)=q(e_j)$. To see why this bounds the
coordinates, write $q(v)=v^{\mathsf T}Hv$ in the real coordinates
defined by the lattice basis. If $D=0$, the output is zero and its
image is trivial, so assume $D>0$. The real symmetric matrix $H$ is
positive definite, hence its smallest eigenvalue $m$ is positive and
\[
 q(v)\ge m\|v\|^2,\qquad
 \|g e_j\|^2\le \frac{q(e_j)}{m}.
\]
Thus the level set $\{v:q(v)=q(e_j)\}$ is bounded; it is also closed
by continuity of $q$, and hence compact in this finite-dimensional
real space. For each $j$, the integer coordinates
$a_{ij}(g)$ are bounded uniformly in $g$, so only finitely many images
of each basis vector are possible. A linear map is determined by these
images, hence the output image is finite. Equivalently, a group
preserving a full $\Z$-lattice is finite if and only if its matrix
entries in a fixed lattice basis are uniformly bounded. This is the
concrete form of the compact--discrete intersection argument.
\end{proof}

The all-embedding condition is independent of the choice of coefficient
field. Indeed, for a finite extension $E'/E$, every embedding of $E$
into $\C$ extends to $E'$. The conjugate input and output for an
embedding of $E'$ depend, up to scalar extension, only on its restriction
to $E$. Their complex irreducible constituents and definiteness tests
are therefore the same. Enlarging $E$ repeats these tests without adding
new conditions. In particular, coefficient fields for the same matrix
model can be compared inside their compositum; neither field is required
to contain all eigenvalues of the input matrices or their products.

\begin{corollary}[Eigenangle formulation]\label{cor:angles}
Under the assumptions of Theorem~\ref{thm:free}, fix an embedding
$\tau:E\hookrightarrow\C$ and an index $\alpha\in \mathcal I_\tau$.
Let $\rho_\alpha:F_n\to\GL(U_\alpha)$ denote the corresponding
irreducible representation and put $N:=\dim_\C U_\alpha$.
Choose a positive definite invariant Hermitian form on $U_\alpha$
(which exists because its image is finite) and an orthonormal basis.
In this corollary, write
$g_j:=\rho_\alpha(x_j)\in U(N)$ for $1\le j\le n$, and put
$P_n:=g_1\cdots g_n$.
These are the matrices of the chosen constituent, not the matrices
on the full input space.

Following the eigenangle notation of \cite{Signature}, let
$a_j^{(k)}\in[0,1)$, $1\le j\le n$, $1\le k\le N$, be defined by
\[
 \Spec(g_j)
 =\{\exp(2\pi i a_j^{(k)}):1\le k\le N\},
\]
where spectra are listed with algebraic multiplicity. Let $l\in(0,1)$
be the unique angle with $\tau(\lambda)=\exp(2\pi i l)$, and let
$\beta_k(l)\in[0,1)$, $1\le k\le N$, be the eigenangles of
$\tau(\lambda)P_n$, again with algebraic multiplicity. The index $k$
only enumerates each spectrum; it does not identify eigenvectors of
different matrices. Set
\[
 \kappa:=\sum_{j=1}^n\dim_\C\ker(g_j-I_N),\qquad R:=nN-\kappa,
 \qquad r(l):=\dim_\C\ker(\tau(\lambda)P_n-I_N),
\]
\[
 m(l):=Nl+\sum_{j=1}^n\sum_{k=1}^N a_j^{(k)}
              -\sum_{k=1}^N\beta_k(l)\in\Z.
\]
All these quantities refer to the fixed pair $(\tau,\alpha)$.
The criterion in Theorem~\ref{thm:free} is precisely
\[
 m(l)\in\{r(l),R\}
 \quad\text{for every }\tau:E\hookrightarrow\C
 \text{ and every }\alpha\in \mathcal I_\tau\text{ with }R-r(l)>0.
\]
\end{corollary}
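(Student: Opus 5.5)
We reduce the corollary to Theorem~\ref{thm:free} plus the closed signature formula of \cite{Signature}, applied constituent by constituent. Fix $(\tau,\alpha)$ and work with the unitary tuple $(g_1,\ldots,g_n)$ of $U_\alpha$, with $H=I_N$ and the pencil $\mathcal H(l)$ of \eqref{eq:canonical-pencil}. The first step is the dimension count of $W_\alpha$. By the proof of Proposition~\ref{prop:nonroot}, $L=F(\ker(\tau(\lambda)P_n-I_N))$ and $K\cap L=0$, so $\dim L=r(l)$. Since $\dim K=\kappa$, we get
\[
 \dim W_\alpha=nN-\kappa-r(l)=R-r(l).
\]
Hence the pairs with $W_\alpha\neq0$ are exactly those with $R-r(l)>0$, which matches the indexing in the statement.

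The second step computes the signature of $\mathcal H_\alpha$. By Lemma~\ref{lem:forms}(i), the radical of $\widetilde H_\lambda$ is $K+L$, so $\operatorname{sig}\mathcal H_\alpha=(p,q)$ with $p+q=R-r(l)$. We then invoke the closed signature formula of \cite{Signature}, stated in the eigenangle notation used here. It says that one of $p$, $q$ equals $m(l)-r(l)$ and the other equals $R-m(l)$. We will fix the orientation by consulting that formula, and check it against Remark~\ref{rem:parabolic-comparison}. This step also shows that $m(l)$ is an integer.

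Integrality can be seen independently from the determinant. We have $\det(\tau(\lambda)P_n)=\tau(\lambda)^N\prod_j\det g_j$, so the sum of the $\beta_k(l)$ is congruent mod $\Z$ to $Nl+\sum_{j,k}a_j^{(k)}$.

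Granting the signature formula, $\mathcal H_\alpha$ is definite if and only if $p=0$ or $q=0$. Since $p+q=R-r(l)$ and the two counts are $m(l)-r(l)$ and $R-m(l)$, this happens if and only if $m(l)=r(l)$ or $m(l)=R$. Combining this with Theorem~\ref{thm:free}, over all $\tau$ and all $\alpha$ with $W_\alpha\neq0$, gives the stated criterion. Changing the orthonormal basis, or rescaling $h_\alpha$, changes nothing: the eigenangles are basis-independent, and definiteness was already shown to be independent of $h_\alpha$.

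The main obstacle is matching conventions with \cite{Signature}. That paper's choice of square root $\lambda^{1/2}=e^{\pi i l}$, its eigenangle ranges in $[0,1)$, and its handling of the eigenvalue-$1$ kernels of $\tau(\lambda)P_n$ must agree exactly with \eqref{eq:canonical-pencil} and with the definitions of $r(l)$ and $\beta_k(l)$ here. A boundary shift by $r(l)$ would move the endpoint $r(l)$ in the criterion.

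If the formula cannot be quoted verbatim, the fallback is a continuity argument in $l$ for the pencil $\mathcal H(l)=\sin(\pi l)B+\cos(\pi l)D^*D$. Near $l=0$, $\mathcal H(l)$ is semidefinite with rank equal to $\rank D$. One then tracks eigenvalue crossings as $l$ passes through the values where some $\beta_k(l)$ wraps past $0$, which is exactly where $r(l)$ jumps. Each such crossing shifts one unit of signature, and the count of shifts reproduces the formula in terms of $m(l)$.
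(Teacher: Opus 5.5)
Your proposal is correct and follows the paper's proof. The paper likewise applies \cite[Theorem~1.1]{Signature} to the fixed constituent at the parameter $\tau(\lambda)$, obtaining signature $(R-m(l),m(l)-r(l))$ and output dimension $R-r(l)$. It then reads off definiteness as the vanishing of one inertia index and gets integrality of $m(l)$ from the determinant identity. Your direct count of $\dim W_\alpha$ via $L=F(\ker(\tau(\lambda)P_n-I_N))$ and $K\cap L=0$ is a harmless addition. So is your observation that the orientation of the signature does not affect definiteness, and the continuity fallback is not needed.
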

\begin{proof}
Applying \cite[Theorem~1.1]{Signature} to the chosen constituent
and the parameter $\tau(\lambda)$ gives canonical signature
$(R-m(l),m(l)-r(l))$ and output dimension $R-r(l)$.
Definiteness is exactly the vanishing of one of these two inertia
indices; zero-dimensional outputs impose no condition.
The determinant identity
\[
 \exp\Bigl(2\pi i\sum_{k=1}^N\beta_k(l)\Bigr)
 =\tau(\lambda)^N\det P_n
 =\exp\Bigl(2\pi i\bigl(Nl+\sum_{j=1}^n\sum_{k=1}^N a_j^{(k)}\bigr)\Bigr)
\]
makes $m(l)$ an integer. Since the input
image is finite and $\lambda$ is a root of unity, all the eigenangles
above are rational. Thus this is a finite exact check for a fixed
input and a fixed parameter.
\end{proof}

For later use, let $\gamma^{(k)}\in[0,1)$ ($1\le k\le N$) be the
eigenangles of $P_n$, counted with multiplicity. After reindexing,
$\beta_k(l)=\{l+\gamma^{(k)}\}$, and therefore
\begin{equation}\label{eq:floor-signature}
\begin{aligned}
 m(l)&=c+\sum_{k=1}^N\lfloor l+\gamma^{(k)}\rfloor,\\
 c&=\sum_{j=1}^n\sum_{k=1}^N a_j^{(k)}-\sum_{k=1}^N\gamma^{(k)}\in\Z,\\
 r(l)&=\#\{k:1\le k\le N,\ l+\gamma^{(k)}=1\}.
\end{aligned}
\end{equation}
The last equality uses the diagonalizability of the unitary matrix $P_n$;
integrality of $c$ follows from $\det P_n=\prod_j\det g_j$.
These are the floor-function formulas used in
Corollary~\ref{cor:unitary-locus} and Lemma~\ref{lem:gap}.

\begin{remark}[What spectral data are required]
The angles in Corollary~\ref{cor:angles} belong to each irreducible input
constituent, not just to the total reducible tuple. Identifying these
constituents uses the common representation of the finite input group.
There is no assumption of simultaneous diagonalizability of the matrices
$g_j$. The spectrum of their ordered product is separate required data.
The sign can vary with the constituent and with the coefficient embedding.
\end{remark}

\section{Braid group representations: lifting unitarizability and finiteness}\label{sec:full}

\begin{definition}[Katz--Long--Moody representation]\label{def:klm}
Let $\rho:\Gamma=F_n\rtimes B_n\to\GL(V)$ be a finite-dimensional
complex representation, with the Artin action \eqref{eq:artin-action},
and let $\lambda\in\C\setminus\{0,1\}$. Put $N:=\dim_\C V$ and write
$s_i:=\rho(\sigma_i)$ for $1\le i\le n-1$ and
$g_j:=\rho(x_j)$ for $1\le j\le n$. Let $I_d$ denote the $d\times d$
identity matrix and $0_N$ the $N\times N$ zero matrix.
For $1\le i\le n-1$, the intermediate braid operator on $V^{\oplus n}$ is
\begin{equation}\label{eq:Si}
 S_i=s_i^{\oplus n}\operatorname{diag}
       (I_{(i-1)N},\Theta_i,I_{(n-i-1)N}),\qquad
 \Theta_i=\begin{pmatrix}0_N&g_i\\I_N&I_N-g_{i+1}\end{pmatrix}.
\end{equation}
Here $\Theta_i$ is a $2N\times2N$ block acting on the $i$th and $(i+1)$st
copies of $V$; it is the block denoted by $R_i$ in
\cite[Section 2.2, Definition 8]{KLM}, renamed here to distinguish it from the
spectral integers $R$ and $R_\alpha$. The separate notation
$\Theta_i(\cdot)$ of that reference is not used here. A block $I_0$ is omitted when $i=1$ or $i=n-1$.
Together with the $A_i$ of Definition~\ref{def:dr-convolution}, these
operators induce the \emph{Katz--Long--Moody representation}
$T_\lambda=\KLM_\lambda(\rho):\Gamma\to\GL(Q_\lambda)$,
defined by $T_\lambda(x_j)[v]=[A_jv]$ ($1\le j\le n$) and
$T_\lambda(\sigma_i)[v]=[S_iv]$ ($1\le i<n$) \cite{KLM}.
\end{definition}

\begin{lemma}[Full KLM invariance]\label{lem:full-invariance}
Suppose that the full input $\rho:F_n\rtimes B_n\to\GL(V)$ preserves
a positive definite Hermitian form $h$, and let $|\lambda|=1$,
$\lambda\ne1$. Then the canonical quotient form of
Definition~\ref{def:quotient-form} is invariant under the full KLM representation
$T_\lambda$.
\end{lemma}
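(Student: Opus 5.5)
The plan is to prove invariance on the intermediate space $V^{\oplus n}$, namely
\[
 S_i^*\widetilde H_\lambda S_i=\widetilde H_\lambda\qquad(1\le i<n).
\]
This identity passes to the quotient once we know that $S_i$ preserves $K+L$, and that fact is part of the well-definedness of $T_\lambda$ in Definition~\ref{def:klm} \cite{KLM}. Free-generator invariance is already Lemma~\ref{lem:forms}(i). So only the braid generators remain, and $T_\lambda$ preserves the form because $\Gamma$ is generated by the $x_j$ and $\sigma_i$.

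\emph{Step 1: separate the two factors of $S_i$.} Write $S_i=s_i^{\oplus n}\Delta_i$ with $\Delta_i=\operatorname{diag}(I_{(i-1)N},\Theta_i,I_{(n-i-1)N})$, and write $\widetilde H_\lambda=\widetilde H_\lambda(g_1,\ldots,g_n)$ to stress its dependence on the tuple. The input is unitary relative to $H$, so $s_i^*H=Hs_i^{-1}$. Each block of \eqref{eq:ambient-form} then satisfies
\[
 s_i^*H\,p(g_j,g_k)\,s_i=H\,p(g'_j,g'_k),\qquad g'_j:=s_i^{-1}g_js_i,
\]
for the relevant polynomial expressions $p$. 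Hence $(s_i^{\oplus n})^*\widetilde H_\lambda(g)\,s_i^{\oplus n}=\widetilde H_\lambda(g')$. By \eqref{eq:artin-action} the conjugated tuple is
\[
 g'_i=g_ig_{i+1}g_i^{-1},\qquad g'_{i+1}=g_i,\qquad g'_j=g_j\ (j\ne i,i+1).
\]
The claim therefore reduces to a purely tuple-level identity,
\[
 \Delta_i^*\,\widetilde H_\lambda(g')\,\Delta_i=\widetilde H_\lambda(g),
\]
which involves only $H$ and the $g_j$ and not $s_i$.

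\emph{Step 2: verify the tuple identity blockwise.} I would use the row factorization \eqref{eq:ambient-row-factor},
\[
 (\widetilde H_\lambda v)_j=\lambda^{-1/2}H(g_j^{-1}-I_N)e_j(v),
\]
applied to the primed tuple, together with the following observations.
\begin{itemize}
\item $\Delta_i$ changes only coordinates $i,i+1$: $(v_i,v_{i+1})\mapsto(g_iv_{i+1},\,v_i+(I_N-g_{i+1})v_{i+1})$.
\item One checks $(g'_i-I_N)g_i+(g'_{i+1}-I_N)(I_N-g_{i+1})$ and $(g'_{i+1}-I_N)$ against $g_i-I_N,g_{i+1}-I_N$. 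This gives $D'\Delta_i=D$, so the "total defect" map of \eqref{eq:D-definition} is preserved.
\item More precisely, $e'_j\circ\Delta_i=e_j$ for $j\notin\{i,i+1\}$, while $e'_i,e'_{i+1}$ composed with $\Delta_i$ are expressible through $e_i,e_{i+1}$ (e.g. $e'_{i+1}\Delta_i=e_i$ up to a factor of $g_i$).
\end{itemize}
With these relations, the off-diagonal blocks $(j,k)$ with $j,k\notin\{i,i+1\}$ are unchanged. The blocks pairing $j<i$ (or $j>i+1$) with $\{i,i+1\}$ reduce to the identity $(g'_i-I)g_i+(g'_{i+1}-I)(I-g_{i+1})=g_i g_{i+1}-I$ read in two ways.

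\emph{Step 3: the central $2\times2$ block.} This is the main obstacle. There the $\lambda^{\pm1/2}$ prefactors of the diagonal and off-diagonal entries mix, and one must expand
\[
 \Theta_i^*\begin{pmatrix}\widetilde H'_{ii}&\widetilde H'_{i,i+1}\\ \widetilde H'_{i+1,i}&\widetilde H'_{i+1,i+1}\end{pmatrix}\Theta_i
\]
using $g_j^*H=Hg_j^{-1}$, and check that each of the four entries collapses to the corresponding block of $\widetilde H_\lambda(g)$. I would first try to do this through the factorization of Step 2, in the same way the invariance bracket in the proof of Lemma~\ref{lem:forms}(i) was reduced to a vanishing polynomial in $g_i$.

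If exact equality on $V^{\oplus n}$ fails, the fallback is to show that $S_i^*\widetilde H_\lambda S_i-\widetilde H_\lambda$ has image orthogonal to everything modulo $K+L$. By Lemma~\ref{lem:forms}(i), $K+L$ is the radical, so this is equivalent to its values being expressible through the $e_j$ with coefficients in $\ker(g_j-I_N)$. That weaker statement still yields invariance of the quotient form, which is all the lemma asserts.
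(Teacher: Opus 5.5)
Your outline is correct, and it is complete once Step~3 is carried out. It differs from the paper only at the key step. The paper never computes $S_i^*\widetilde H_\lambda S_i$: it takes braid invariance of the intermediate form from \cite[Theorem 8]{KLM}, then descends to $Q_\lambda$ via the radical equality of Lemma~\ref{lem:forms}(i), as in your first paragraph. Your factorization $S_i=s_i^{\oplus n}\Delta_i$ absorbs $s_i$, via $s_i^*H=Hs_i^{-1}$, into the Hurwitz-moved tuple $g'$. This makes the lemma self-contained and parallels the paper's direct proof of free-generator invariance. It also shows that only a nondegenerate invariant $H$ is used, not positivity. The citation buys brevity.

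Your fallback is not needed, because the tuple identity holds exactly on $V^{\oplus n}$. Write $\widetilde H'_{jk}$ for the blocks of $\widetilde H_\lambda(g')$, and put $a=g_i$, $b=g_{i+1}$, $\epsilon=\lambda^{1/2}$, so that $\epsilon^{-1}\lambda=\epsilon$. The columns of $\Theta_i$ are $(0,I_N)^{\mathsf T}$ and $(a,I_N-b)^{\mathsf T}$. The entries of the central block $\Theta_i^*(\widetilde H'_{jk})_{j,k\in\{i,i+1\}}\Theta_i$ are then
\[
\begin{aligned}
 &\widetilde H'_{i+1,i+1}=\widetilde H_{ii},\\
 &\widetilde H'_{i+1,i}a+\widetilde H'_{i+1,i+1}(I_N-b)
 =H\bigl[\epsilon(I_N-a)-\epsilon^{-1}(a^{-1}-\lambda)(a-I_N)\bigr](b-I_N)
 =\widetilde H_{i,i+1},\\
 &a^*\widetilde H'_{ii}a=\epsilon^{-1}H(b^{-1}-\lambda)(b-I_N)=\widetilde H_{i+1,i+1}.
\end{aligned}
\]
The three remaining terms of the $(2,2)$ entry sum to
$H(b^{-1}-I_N)\bigl[\epsilon^{-1}(a^{-1}-\lambda)(a-I_N)-\epsilon^{-1}(I_N-a^{-1})-\epsilon(I_N-a)\bigr](b-I_N)=0$.
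The $(2,1)$ entry follows by adjointness. This is exactly the mixing of $\lambda^{\pm1/2}$ prefactors you anticipated.

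Fix three points in Step~2 and your opening paragraph.
\begin{enumerate}
\item The expression $(g'_i-I_N)g_i+(g'_{i+1}-I_N)(I_N-g_{i+1})$ equals $g_{i+1}-I_N$, not $g_ig_{i+1}-I_N$. This identity, together with $g'_{i+1}-I_N=g_i-I_N$, is what makes the blocks $(j,i)$ and $(j,i+1)$ match for $j\notin\{i,i+1\}$. It is also what your bullet $D'\Delta_i=D$ actually uses.
\item You have $e'_{i+1}\Delta_i=e_i$ exactly, with no factor $g_i$, and $e'_i\Delta_i=g_ie_{i+1}-(g_i-I_N)e_i$.
\item You need not import well-definedness of $T_\lambda(\sigma_i)$ from \cite{KLM}. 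Since $S_i$ is invertible, $S_i^*\widetilde H_\lambda S_i=\widetilde H_\lambda$ already forces $S_i(\ker\widetilde H_\lambda)\subseteq\ker\widetilde H_\lambda=K+L$.
\end{enumerate}
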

\begin{proof}
The intermediate form is invariant under both the free generators $A_j$ and
the braid operators $S_i$ by \cite[Theorem 8]{KLM}. Its radical is
$K+L$ by Lemma~\ref{lem:forms}(i), so the form descends to the quotient;
the induced form is nondegenerate and invariant under $T_\lambda$.
This is also the quotient-form statement of \cite[Theorem 9]{KLM}.
\end{proof}

\begin{theorem}[Lifting positive unitarizability]\label{thm:unitary-lift}
Suppose the full input $\rho:\Gamma\to\GL(V)$ preserves a positive
definite Hermitian form $h$. Let $|\lambda|=1$, $\lambda\ne1$.
Let $\mathcal I_F$ index the distinct isomorphism classes of irreducible
constituents of the restriction $\rho|_{F_n}$. For each
$\alpha\in\mathcal I_F$, choose a representative $F_n$-module
$U_\alpha$ and a positive definite $F_n$-invariant form $h_\alpha$ on it.
Put $W_\alpha:=\MCDR_\lambda(U_\alpha)$, and let $\mathcal H_\alpha$
be its canonical quotient form constructed from $h_\alpha$, with
the same choice of $\lambda^{1/2}$ for every $\alpha\in\mathcal I_F$.
The following conditions are equivalent:
\begin{enumerate}[label=(\roman*)]
\item $T_\lambda|_{F_n}$ is unitarizable;
\item for every $\alpha\in \mathcal I_F$ with $W_\alpha\ne0$, the form
$\mathcal H_\alpha$ is definite;
\item $T_\lambda$ is unitarizable as a $\Gamma$-representation.
\end{enumerate}
The input and output may be reducible and have infinite image.
\end{theorem}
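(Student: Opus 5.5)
The plan is to prove (iii)$\Rightarrow$(i)$\Rightarrow$(ii)$\Rightarrow$(iii). The first implication is immediate, since a positive definite $T_\lambda(\Gamma)$-invariant form is in particular $T_\lambda(F_n)$-invariant. Throughout, the basic tools are Lemma~\ref{lem:forms}(ii) for the free part and Lemma~\ref{lem:full-invariance} for the braid part.

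\emph{Setup.} Decompose $V$ into its $F_n$-isotypic components $V_\alpha\simeq U_\alpha\otimes\mathcal M_\alpha$. These components are $h$-orthogonal. After rescaling $h_\alpha$, $h|_{V_\alpha}=h_\alpha\otimes k_\alpha$ with $k_\alpha$ positive: by Schur's lemma, an invariant form on an isotypic component has this tensor shape. Lemma~\ref{lem:forms}(ii) then identifies
\[
Q_\lambda\simeq\bigoplus_\alpha O_\alpha,\qquad O_\alpha=W_\alpha\otimes\mathcal M_\alpha,
\]
with the canonical form $\mathcal H_{\lambda,h}$ equal to the orthogonal sum of the forms $\mathcal H_\alpha\otimes k_\alpha$. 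By Lemma~\ref{lem:simple}, each nonzero $W_\alpha$ is irreducible.

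\emph{(i)$\Rightarrow$(ii).} Suppose $T_\lambda|_{F_n}$ preserves a positive form $P$. Fix $\alpha$ with $W_\alpha\ne0$ and a nonzero $m\in\mathcal M_\alpha$. Then $W_\alpha\otimes m\subset Q_\lambda$ is an $F_n$-submodule isomorphic to $W_\alpha$. On it, both $P$ and $\mathcal H_{\lambda,h}$ restrict to invariant Hermitian forms; the latter restricts to $k_\alpha(m,m)\,\mathcal H_\alpha$, which is nondegenerate. By irreducibility and Schur's lemma, $\mathcal H_\alpha$ is a nonzero real multiple of $P|_{W_\alpha\otimes m}$, hence definite.

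\emph{(ii)$\Rightarrow$(iii).} Let $\varepsilon_\alpha=\pm1$ be the sign of $\mathcal H_\alpha$, and define
\[
\mathcal P:=\sum_\alpha\varepsilon_\alpha\,\mathcal H_{\lambda,h}|_{O_\alpha}
\]
as an orthogonal sum. This form is positive definite and $F_n$-invariant; it remains to show it is invariant under each $T_\lambda(\sigma_i)$.

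The key steps, in order, are as follows.
\begin{enumerate}[label=(\alph*)]
\item The nonzero $W_\alpha$ are pairwise nonisomorphic. Multiplicative middle convolution is invertible, with $\MCDR_{\lambda^{-1}}\circ\MCDR_\lambda$ recovering an irreducible input with at least two active generators. The cases with at most one active generator are handled directly via the explicit one-dimensional outputs in the proof of Lemma~\ref{lem:simple}. Hence the $O_\alpha$ are exactly the $F_n$-isotypic components of $Q_\lambda$.
\item Since $T_\lambda(\sigma_i)T_\lambda(x)T_\lambda(\sigma_i)^{-1}=T_\lambda(\sigma_ix\sigma_i^{-1})$ and $x\mapsto\sigma_ix\sigma_i^{-1}$ is an automorphism of $F_n$, $T_\lambda(\sigma_i)$ maps each $F_n$-isotypic component of $Q_\lambda$ onto another one. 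So it permutes the $O_\alpha$.
\item By Lemma~\ref{lem:full-invariance}, $T_\lambda(\sigma_i)$ is an isometry of $\mathcal H_{\lambda,h}$. If it maps $O_\alpha$ onto $O_\beta$, then it carries the $\varepsilon_\alpha$-definite form on $O_\alpha$ to the $\varepsilon_\beta$-definite form on $O_\beta$, forcing $\varepsilon_\alpha=\varepsilon_\beta$. Combined with orthogonality of the blocks, this shows $\mathcal P$ is $T_\lambda(\sigma_i)$-invariant, proving (iii).
\end{enumerate}

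I expect step (a) to be the main obstacle: confirming that distinct input isotypes cannot produce isomorphic outputs, including the degenerate rank-one cases. If invertibility is awkward to cite, a fallback avoids (a) entirely. Work with the isotypic components $\widetilde O_\gamma$ of $Q_\lambda$ itself, each a sum of several $O_\alpha$ sharing an isomorphism type $W$. Inside $\widetilde O_\gamma\simeq W\otimes\mathcal N$, the canonical form is $\mathcal H_W\otimes k'$ with $k'$ a nondegenerate Hermitian form on $\mathcal N$. Diagonalize $k'$ and flip its negative part; the resulting positive form $|k'|$ is determined by $k'$ alone, so it is still preserved by the braid isometries permuting these components. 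This produces a braid-invariant positive form without needing (a).
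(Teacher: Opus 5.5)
Your proof is correct, but it obtains the key permutation step differently from the paper. Your implications (iii)$\Rightarrow$(i) and (i)$\Rightarrow$(ii), and the sign matching in (c), agree with the paper.

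\textbf{How the blocks are permuted.} The paper works on the input side. Because $\rho(F_n)$ is normal in $\rho(\Gamma)$, $s_i$ maps $V_\alpha$ onto $V_{\sigma_i\cdot\alpha}$. The block $\Theta_i$ in \eqref{eq:Si} involves only $g_i$ and $g_{i+1}$, which preserve each $V_\alpha$. Hence $S_i(V_\alpha^{\oplus n})=V_{\sigma_i\cdot\alpha}^{\oplus n}$. Since $K+L$ splits along the $V_\alpha$, this gives $T_\lambda(\sigma_i)O_\alpha=O_{\sigma_i\cdot\alpha}$ directly for your $O_\alpha$ (the paper's $Q_\alpha$); see Remark~\ref{rem:braid-components}.

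You instead apply normality on the output side. That only shows that braids permute the output isotypic components, so you need (a) to identify those components with the $O_\alpha$. Step (a) is true:
\begin{itemize}
\item An irreducible input with at least two active generators satisfies $(*)$ and $(**)$. The Dettweiler--Reiter composition law then gives $\MCDR_{\lambda^{-1}}(W_\alpha)\simeq U_\alpha$.
\item For the rank-one cases, the explicit computation in Lemma~\ref{lem:simple} shows that $\MCDR_{\lambda^{-1}}$ sends the output $x_j\mapsto\lambda a$ back to $x_j\mapsto a$. This settles the mixed and rank-one comparisons.
\end{itemize}
So your route uses an external theorem, invertibility of middle convolution. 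In return it shows that distinct input types with nonzero output give distinct output types. The paper's route is elementary and needs only the block form of $S_i$.

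\textbf{The fallback does not work as stated.} An indefinite Hermitian form $k'$ does not determine a positive form $|k'|$. Flipping the negative part requires choosing a $k'$-orthogonal splitting $\mathcal N=\mathcal N_+\oplus\mathcal N_-$, or equivalently a reference inner product, and this choice is not canonical. For signature $(p,q)$ with $p,q>0$, the isometry group $U(p,q)$ preserves no positive definite form. The braid isometries would preserve your $|k'|$ only if they preserved the splitting coming from the $O_\alpha$. That is exactly the permutation statement the fallback was meant to avoid.

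Since (a) holds, this does not affect your main argument. If you want to bypass (a), use the input-level argument described above instead.
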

\begin{proof}
Since $h$ is positive and the input is unitary for $h$, the orthogonal
complement of every $F_n$-invariant subspace is invariant. Finite
dimensionality therefore gives complete reducibility. Define
\[
 \mathcal M_\alpha:=\operatorname{Hom}_{\C[F_n]}(U_\alpha,V),
 \qquad d_\alpha:=\dim_\C\mathcal M_\alpha,
\]
and let $V_\alpha\subseteq V$ be the $U_\alpha$-isotypic component,
the sum of all $F_n$-submodules isomorphic to $U_\alpha$.
The group $F_n$ acts trivially on $\mathcal M_\alpha$, and $d_\alpha$
is the multiplicity of $U_\alpha$. Evaluation gives the orthogonal
isotypic decomposition
\[
 V=\bigoplus_{\alpha\in\mathcal I_F}V_\alpha,\qquad
 V_\alpha\simeq U_\alpha\otimes_\C\mathcal M_\alpha.
\]
Here $U_\alpha$ is a representation space for the free-group
restriction, not the image group $\rho(F_n)$ or $\rho(\Gamma)$.
By Schur's lemma, the restriction of $h$ has the form
$h|_{V_\alpha}=h_\alpha\otimes k_\alpha$ for a positive definite
form $k_\alpha$ on $\mathcal M_\alpha$.
None of this requires finite image.
Write $\mathcal H:=\mathcal H_{\lambda,h}$ for the canonical form
on the full output, using the same choice of $\lambda^{1/2}$.

The spaces $W_\alpha$ are the convolved constituents.
Let $Q_\alpha$ be the corresponding output block, the image of $V_\alpha^{\oplus n}$ in
$Q_\lambda=V^{\oplus n}/(K+L)$. The construction and
Lemma~\ref{lem:forms} give
\[
 Q_\lambda=\bigoplus_{\alpha\in \mathcal I_F} Q_\alpha,
 \qquad Q_\alpha\simeq W_\alpha\otimes \mathcal M_\alpha,
 \qquad \mathcal H|_{Q_\alpha}=\mathcal H_\alpha\otimes k_\alpha.
\]
By Lemma~\ref{lem:simple}, every nonzero $W_\alpha$ is irreducible.
If (i) holds, restricting a positive invariant form to a copy of
$W_\alpha$ gives a positive invariant form there. Its nondegenerate
canonical form is a nonzero real multiple by Schur's lemma. Hence
$\mathcal H_\alpha$ is definite, proving (ii).

Assume (ii). The possibly infinite group $G_F=\rho(F_n)$ is normal
in $\rho(\Gamma)$. Thus $\rho(\Gamma)$ permutes the $G_F$-isotypic
spaces $V_\alpha$. For $\eta\in\Gamma$, denote by
$\eta\cdot\alpha\in \mathcal I_F$ the index determined by
$\rho(\eta)V_\alpha=V_{\eta\cdot\alpha}$. The $g_j$ preserve each such space, and
\eqref{eq:Si} sends $V_\alpha^{\oplus n}$ to the direct sum of copies
of the permuted input space. Since $K$ and $L$ decompose over these
spaces, the full output permutes the $Q_\alpha$.

Each nonzero restriction $\mathcal H|_{Q_\alpha}$ is definite, including
when $\dim \mathcal M_\alpha>1$. Let $\epsilon_\alpha\in\{1,-1\}$ be its
sign. By Lemma~\ref{lem:full-invariance}, every output group element is an
isometry between the blocks it permutes, so
$\epsilon_{\eta\cdot\alpha}=\epsilon_\alpha$. It follows that
\[
 \mathcal H^+=\bigoplus_{\substack{\alpha\in \mathcal I_F\\Q_\alpha\ne0}}
     \epsilon_\alpha\mathcal H|_{Q_\alpha}
\]
is positive definite and invariant under $\Gamma$, proving (iii).
The implication (iii)$\Rightarrow$(i) follows by restriction.
The zero output satisfies all three conditions with the usual convention.
\end{proof}

\begin{remark}[How braids permute the components]\label{rem:braid-components}
In the proof of Theorem~\ref{thm:unitary-lift}, normality of $F_n$
is already part of the semidirect-product setting. Its role is to
control where a braid sends an input component. Indeed, for
$b\in B_n$, $f\in F_n$, and $u$ in an $F_n$-invariant subspace $U\subseteq V$,
\[
 \rho(f)\rho(b)u=\rho(b)\rho(b^{-1}fb)u\in\rho(b)U,
\]
since $b^{-1}fb\in F_n$. Thus $\rho(b)U$ is again $F_n$-invariant;
if $U$ is irreducible, so is $\rho(b)U$. Conjugation by $b$ therefore
permutes the isomorphism classes in $\mathcal I_F$ and sends each entire
isotypic component to another one. With the notation of the proof,
\[
 \rho(b)V_\alpha=V_{b\cdot\alpha},\qquad
 T_\lambda(b)Q_\alpha=Q_{b\cdot\alpha}
 \quad(b\in B_n,\ \alpha\in \mathcal I_F).
\]
The second equality uses the KLM block operators \eqref{eq:Si} and
the componentwise decomposition of $K+L$; it is a statement about
the corresponding output blocks, not just the input spaces.

Under condition (ii), invariance of the canonical form means that
positive definite blocks cannot be sent to negative definite blocks.
Thus changing the sign on all negative blocks preserves braid
invariance, as well as the $F_n$-invariance of each block. This is the
reason that the form $\mathcal H^+$ constructed in the proof is
positive definite and invariant under the full semidirect product.
\end{remark}

For a fixed positively unitarizable input, the following corollary
uses the signature formula to determine exactly which unit-circle
parameters $\lambda\ne1$ give a positively unitarizable full KLM output, by testing
finitely many intervals and their intervening boundary points.

\begin{corollary}[Spectral walls for unitarizability]\label{cor:unitary-locus}
Let $\rho:\Gamma\to\GL(V)$ be unitarizable, and use the index set
$\mathcal I_F$ and the irreducible $F_n$-modules $U_\alpha$ of
Theorem~\ref{thm:unitary-lift}. For each $\alpha\in \mathcal I_F$, put
$N_\alpha:=\dim_\C U_\alpha$ and let
$\rho_\alpha:F_n\to\GL(U_\alpha)$ be the corresponding representation.
Choose an orthonormal basis for a positive definite invariant form on
$U_\alpha$, and write
\[
 g_{j,\alpha}:=\rho_\alpha(x_j)\in U(N_\alpha)
 \quad(1\le j\le n),\qquad
 P_{n,\alpha}:=g_{1,\alpha}\cdots g_{n,\alpha}.
\]
Following \cite{Signature}, let $a_{j,\alpha}^{(k)}\in[0,1)$ be the
eigenangles of $g_{j,\alpha}$ and $\gamma_\alpha^{(k)}\in[0,1)$ those
of $P_{n,\alpha}$, each listed with algebraic multiplicity, where
\[
 \alpha\in \mathcal I_F,\qquad 1\le j\le n,\qquad 1\le k\le N_\alpha.
\]
Here $k$ enumerates each spectrum separately; no common eigenbasis is
assumed. Set
\[
 R_\alpha:=nN_\alpha-\sum_{j=1}^n\dim_\C\ker(g_{j,\alpha}-I_{N_\alpha}),
 \qquad
 c_\alpha:=\sum_{j=1}^n\sum_{k=1}^{N_\alpha}a_{j,\alpha}^{(k)}
              -\sum_{k=1}^{N_\alpha}\gamma_\alpha^{(k)}\in\Z.
\]
For $\lambda=e^{2\pi il}$ with $0<l<1$, define
\[
 m_\alpha(l):=c_\alpha+
       \sum_{k=1}^{N_\alpha}\lfloor l+\gamma_\alpha^{(k)}\rfloor,
 \qquad
 r_\alpha(l):=\#\{k\in\{1,\ldots,N_\alpha\}:
                      l+\gamma_\alpha^{(k)}=1\}.
\]
Then $T_{e^{2\pi il}}$ is unitarizable as a $\Gamma$-representation
if and only if
\[
 m_\alpha(l)\in\{r_\alpha(l),R_\alpha\}
 \quad\text{for every }\alpha\in \mathcal I_F
 \text{ with }R_\alpha-r_\alpha(l)>0.
\]
Call the points of the finite set
\[
 \{1-\gamma_\alpha^{(k)}:
    \alpha\in \mathcal I_F,\ 1\le k\le N_\alpha,\ \gamma_\alpha^{(k)}>0\}
\]
the \emph{spectral walls}, and the connected components of their
complement in $(0,1)$ the \emph{open intervals}. A wall is precisely an
angle $l$ for which $e^{2\pi il}$ is exceptional for $\rho|_{F_n}$ in
the sense defined in Section~\ref{sec:resonance}.
The unitarizability locus is a union of some open intervals and some
spectral walls. No arithmetic or root-of-unity assumption is required.
\end{corollary}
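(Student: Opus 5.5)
The proof combines Theorem~\ref{thm:unitary-lift} with the closed signature formula applied constituent by constituent. By Theorem~\ref{thm:unitary-lift}, $T_{e^{2\pi il}}$ is unitarizable as a $\Gamma$-representation if and only if, for every $\alpha\in\mathcal I_F$ with $W_\alpha\neq0$, the canonical form $\mathcal H_\alpha$ on $W_\alpha=\MCDR_\lambda(U_\alpha)$ is definite. This reduction needs no finite-image or arithmetic hypothesis; it uses only positivity of the input form, which gives complete reducibility of $\rho|_{F_n}$ and the orthogonal isotypic decomposition.

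For each $\alpha$, apply \cite[Theorem~1.1]{Signature} to the unitary tuple $(g_{1,\alpha},\ldots,g_{n,\alpha})$, exactly as in the proof of Corollary~\ref{cor:angles}, but now with $\tau$ the identity and $\lambda=e^{2\pi il}$ an arbitrary unit-circle parameter. This gives $\operatorname{sig}\mathcal H_\alpha=(R_\alpha-m_\alpha(l),\,m_\alpha(l)-r_\alpha(l))$ and $\dim W_\alpha=R_\alpha-r_\alpha(l)$, where $m_\alpha$ is written in the floor form \eqref{eq:floor-signature}. Two facts are needed for this rewriting:
\begin{itemize}
\item the eigenangles of $\lambda P_{n,\alpha}$ are $\{l+\gamma_\alpha^{(k)}\}$, since $P_{n,\alpha}$ is unitary and hence diagonalizable;
\item $c_\alpha\in\Z$, because $\det P_{n,\alpha}=\prod_j\det g_{j,\alpha}$.
\end{itemize}
Definiteness of a nonzero form means one inertia index vanishes, that is, $m_\alpha(l)=R_\alpha$ or $m_\alpha(l)=r_\alpha(l)$. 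If $R_\alpha-r_\alpha(l)=0$, the output $W_\alpha$ is zero and imposes no condition. This proves the stated criterion.

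For the structure of the locus, note that $m_\alpha(l)$ and $r_\alpha(l)$ depend on $l$ only through the integers $\lfloor l+\gamma_\alpha^{(k)}\rfloor$ and the equalities $l+\gamma_\alpha^{(k)}=1$. On $(0,1)$ with $\gamma\in[0,1)$, these change only at $l=1-\gamma_\alpha^{(k)}$ with $\gamma_\alpha^{(k)}>0$. The case $\gamma=0$ would give $l=1$, which lies outside the interval. Hence $m_\alpha$, $r_\alpha$ and $R_\alpha$ are constant on each open interval. The criterion is therefore constant there, and the locus is a union of open intervals and individual walls.

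The identification of walls with exceptional parameters comes from the proof of Proposition~\ref{prop:nonroot}: $L_\lambda\ne0$ exactly when $\ker(\lambda P_n-I)\ne0$. Using the isotypic decomposition, this means $e^{2\pi il}e^{2\pi i\gamma_\alpha^{(k)}}=1$ for some $\alpha,k$, i.e.\ $l=1-\gamma_\alpha^{(k)}$. This should be matched with the definition in Section~\ref{sec:resonance}, which I assume is nontriviality of $\ker(\lambda P_n-I)$.

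The main obstacle is making sure the hypotheses of \cite[Theorem~1.1]{Signature} and the choice of $\lambda^{1/2}=e^{\pi il}$ are consistent across all constituents. This matters because the sign $\epsilon_\alpha$ in Theorem~\ref{thm:unitary-lift} is compared with a common square-root choice. The signature formula must also be valid at wall points, where $r_\alpha>0$ and the quotient by $L$ is nontrivial. If the cited formula is stated only generically, I would handle walls directly: compute the form on $V^{\oplus n}$ via the pencil \eqref{eq:canonical-pencil}, and subtract the radical dimension $\kappa_\alpha+r_\alpha(l)$, using $K\cap L=0$.
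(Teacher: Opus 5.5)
Your proposal is correct and follows the paper's proof: Theorem~\ref{thm:unitary-lift} reduces unitarizability to definiteness of each nonzero $\mathcal H_\alpha$, the signature formula of \cite{Signature} in floor form gives $(R_\alpha-m_\alpha(l),m_\alpha(l)-r_\alpha(l))$, and the floor terms are constant between walls, which gives the description of the locus. The obstacles you flag are not real ones. Replacing $\lambda^{1/2}$ by $-\lambda^{1/2}$ negates every $\mathcal H_\alpha$ at once and so cannot affect definiteness; the cited formula already includes the $r(l)$ correction at walls, as in Corollary~\ref{cor:angles}; and diagonalizability of $P_{n,\alpha}$ is what identifies your $r_\alpha(l)$ with $\dim\ker(\lambda P_{n,\alpha}-I_{N_\alpha})$, rather than being needed for the eigenangles of $\lambda P_{n,\alpha}$.
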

\begin{proof}
The signature formula \cite[Theorem~1.1]{Signature}, with eigenangles
$\{l+\gamma_\alpha^{(k)}\}$ of $\lambda P_{n,\alpha}$ (braces denote
fractional parts), gives
$\operatorname{sig}\mathcal H_\alpha=
(R_\alpha-m_\alpha(l),m_\alpha(l)-r_\alpha(l))$.
The criterion follows from Theorem~\ref{thm:unitary-lift}.
The floor functions are constant between the listed walls, proving
the description of the locus; integrality of $c_\alpha$ follows by
taking determinants of $P_{n,\alpha}$.
\end{proof}

\begin{corollary}[Arithmetic consequence: finiteness of the full image]
\label{thm:full}
Let $V$ be a finite-dimensional complex vector space and let
$\rho:\Gamma\to\GL(V)$ have finite image.
For every $\lambda\in\C\setminus\{0,1\}$,
\[
 T_\lambda(F_n)\text{ is finite}
 \quad\Longleftrightarrow\quad
 T_\lambda(\Gamma)\text{ is finite}.
\]
There is no irreducibility assumption.
\end{corollary}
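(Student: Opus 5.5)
The implication from right to left is immediate, since $T_\lambda(F_n)\subseteq T_\lambda(\Gamma)$. For the converse, assume $T_\lambda(F_n)$ is finite. First dispose of the trivial cases. If $\rho|_{F_n}$ is trivial, the output is zero by Proposition~\ref{prop:nonroot}, so both images are trivial. Otherwise Proposition~\ref{prop:nonroot} forces $\lambda$ to be a root of unity, $\lambda\ne1$; in particular $|\lambda|=1$. Averaging over the finite group $\rho(\Gamma)$ produces a positive definite $\Gamma$-invariant form $h$. Hence the hypotheses of Theorem~\ref{thm:unitary-lift} and Lemma~\ref{lem:full-invariance} hold.

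The plan is to repeat the arithmetic argument of Theorem~\ref{thm:free} for the whole group $\Gamma$. There are two steps.

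\emph{Step 1: a $\Gamma$-stable full lattice.} Choose a number field $E$ containing $\lambda$ over which $\rho$ is realized as a $\Gamma$-representation. This is possible by Brauer's theorem applied to the finite group $\rho(\Gamma)$. Put $\Lambda=\sum_{g\in\rho(\Gamma)}g\,\mathcal O_E^{N}$. Then $\Lambda$ is stable under every $g_j^{\pm1}$ and $s_i^{\pm1}$. The matrices $A_i^{\pm1}$ preserve $\Lambda^{\oplus n}$, as in Lemma~\ref{lem:lattice}. The operators $S_i$ of \eqref{eq:Si} have entries built from $s_i$, $g_i$, $g_{i+1}$, $0$ and $I_N$, so they also preserve $\Lambda^{\oplus n}$. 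For $S_i^{-1}$, note that $\Theta_i^{-1}$ has blocks $\begin{pmatrix}(g_{i+1}-I)g_i^{-1}&I\\ g_i^{-1}&0\end{pmatrix}$, which again have integral entries in this sense; alternatively one can use that $T_\lambda(\sigma_i^{-1})$ is a word in the generators. Thus $\Lambda^{\oplus n}/(\Lambda^{\oplus n}\cap(K+L))$ is a full $T_\lambda(\Gamma)$-stable $\mathcal O_E$-lattice in $Q_\lambda$. The same base-change argument shows that for each embedding $\tau$, the conjugate is $\KLM_{\tau(\lambda)}(\rho^\tau)$.

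\emph{Step 2: positive invariant forms at every embedding.} Fix $\tau\colon E\hookrightarrow\C$. The group $T_\lambda(F_n)$ is finite, so its conjugate $\tau(T_\lambda(F_n))=\MCDR_{\tau(\lambda)}(\rho^\tau|_{F_n})(F_n)$ is finite; the entries lie in $E$, so applying $\tau$ is a group isomorphism. In particular the conjugate free restriction is unitarizable. The conjugate input $\rho^\tau$ also has finite image, so it preserves a positive definite form, and $|\tau(\lambda)|=1$. Theorem~\ref{thm:unitary-lift} ((i)$\Rightarrow$(iii)) therefore gives a positive definite form on $Q_\lambda^\tau$ invariant under $\tau(T_\lambda(\Gamma))$. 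Summing these forms over all $\tau$ on the restriction of scalars gives a positive definite real quadratic form preserved by $T_\lambda(\Gamma)$, which also preserves the $\Z$-lattice underlying Step~1. The bounded-coordinates argument in the proof of Theorem~\ref{thm:free} then shows that $T_\lambda(\Gamma)$ is finite.

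The main obstacle is the sign correction needed to upgrade free unitarizability to full unitarizability at every embedding simultaneously. Braids permute the isotypic blocks $Q_\alpha$, and the definite signs must agree along orbits. This is exactly the content of Theorem~\ref{thm:unitary-lift}, which is why finiteness of $\rho(F_n)$ alone does not suffice and the full-input form $h$ is needed.

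A shortcut is also available, and I would note it as an alternative. The group $T_\lambda(F_n)$ is normal in $T_\lambda(\Gamma)$, and $T_\lambda(\Gamma)=T_\lambda(F_n)\,T_\lambda(B_n)$. So it suffices to show $T_\lambda(B_n)$ is finite. That does not follow directly from the present results, so I would use the lattice argument above.
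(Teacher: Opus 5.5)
Your proof is correct and follows the paper's argument. You use Theorem~\ref{thm:unitary-lift} at every embedding $\tau$ to upgrade the finite conjugate free output to a positive form on the full conjugate output, use the $\rho(\Gamma)$-averaged lattice preserved by $A_i^{\pm1}$ and $S_i^{\pm1}$ (your explicit $\Theta_i^{-1}$ is exactly the check the paper alludes to), and finish by restriction of scalars. The only thing to drop is the alternative that $T_\lambda(\sigma_i^{-1})$ is a word in the generators: without knowing that $T_\lambda(\sigma_i)$ has finite order this is circular, and the explicit inverse is what carries the step.
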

\begin{proof}
The reverse implication is immediate. If the input is trivial on $F_n$,
the output is zero. Otherwise Proposition~\ref{prop:nonroot} shows that
finiteness of $T_\lambda(F_n)$ forces $\lambda$ to be a root of unity.
Since $\rho$ factors through a finite group, choose a basis in which
its matrices are defined over a number field $E$, and replace $E$ by
$E(\lambda)$. For the arithmetic argument below, use the resulting
$E$-form of $V$.

At every embedding $\tau:E\hookrightarrow\C$, the conjugate input
has finite full image and the conjugate output has finite free image.
Average over the finite conjugate input to obtain a positive definite invariant
input form. The finite free output is unitarizable, so
Theorem~\ref{thm:unitary-lift} supplies a positive form for the full
conjugate output. Construct the input lattice by averaging over
$\rho(\Gamma)$ in Lemma~\ref{lem:lattice}. The operators $S_i$ and
$S_i^{-1}$ also preserve its $n$-fold sum: their blocks use only
$s_i^{\pm1}$, $g_j^{\pm1}$ and integers. Hence the full output has a
full invariant arithmetic lattice. The restriction-of-scalars argument
from Theorem~\ref{thm:free} proves finiteness.
\end{proof}

\begin{remark}[Pure braid scope]
The natural surjective homomorphism $\Gamma\to B_n\to \mathfrak S_n$ sends $(f,b)$ to
the permutation induced by $b$ and has kernel $F_n\rtimes PB_n\simeq PB_{n+1}$.
Thus $\Gamma/(F_n\rtimes PB_n)\simeq \mathfrak S_n$, and this subgroup has
index $n!$ in $\Gamma$. For any representation of $\Gamma$, its image
is finite if and only if the image of this subgroup is finite.
Under the finite-input assumption of Corollary~\ref{thm:full}, these
conditions are also equivalent to finiteness of $T_\lambda(F_n)$. This does
not assume an extension to $B_{n+1}$. It does, however, start with an
input defined on $F_n\rtimes B_n$.

For an input defined only on $PB_{n+1}$, the proof must instead verify
the analogous pure-group invariance of the form, the permutation of
input isotypic spaces, and the integral lattice for the transformed pure braid matrices
in Haraoka's multiplicative middle convolution
\cite{KLM,Haraoka2020}.
If these hold, the identical sign argument applies, with $PB_{n+1}$
in place of $\Gamma$. We do not replace this verification by evaluating
the given representation on half-twists outside its domain.
The free-group theorem of Section~\ref{sec:free} is independent of all these issues.
\end{remark}

\begin{remark}[Braid restriction versus the whole group]
Corollary~\ref{thm:full} does not identify finiteness of $T_\lambda(B_n)$
alone with finiteness of the free image. For example, take $n=2$,
$g_1=g_2=i$, $s_1=1$, and $\lambda=-i$. Here $K=L=0$ and
\[
 A_1=\begin{pmatrix}1&i-1\\0&1\end{pmatrix},\qquad
 S_1=\begin{pmatrix}0&i\\1&1-i\end{pmatrix}.
\]
$A_1$ has infinite order, whereas $S_1$ is diagonalizable with
eigenvalues $1,-i$ and has order four. The full input is finite.
Thus the output can have infinite image on $F_n$ but finite image on
$B_n$, even when the full input has finite image. This distinction
motivates the separate study of the braid restriction in
Section~\ref{sec:resonance}, where we analyze exceptional parameters
and prove that finiteness of the braid output is independent of
$\lambda\in\C\setminus\{0,1\}$ whenever the input braid image is finite.
\end{remark}

\begin{remark}[The excluded parameter $1$]
If one defines the same coordinate quotient at $\lambda=1$, use
$D:V^{\oplus n}\to V$ from \eqref{eq:D-definition}. Then
$A_i-I_{nN}=\iota_iD$, where $\iota_i:V\to V^{\oplus n}$ is the
inclusion into the $i$th coordinate ($1\le i\le n$). We have
$L=\ker D$ and $K\subseteq L$. Moreover
$DA_i=g_iD$ and $DS_i=s_iD$. Thus this coordinate quotient is the
input subrepresentation $\operatorname{im}D\subseteq V$ and has
finite image for finite full input. This is separate from the
signature theorem, whose stated domain excludes $\lambda=1$.
\end{remark}

\begin{corollary}[Spectral criterion for the full output]\label{cor:full-spectral}
Let $E$ be a number field and let
$\rho:\Gamma\to\GL_E(V)$ have finite full image and nontrivial
restriction to $F_n$. For $\lambda\ne0,1$, the following are equivalent:
\begin{enumerate}[label=(\roman*)]
\item $T_\lambda(\Gamma)$ is finite;
\item $T_\lambda(F_n\rtimes PB_n)$ is finite;
\item $\lambda$ is a root of unity and the componentwise eigenangle
condition in Corollary~\ref{cor:angles} holds at every embedding of a
coefficient field containing $\lambda$.
\end{enumerate}
\end{corollary}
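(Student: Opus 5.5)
The plan is to prove the cycle (i)$\Rightarrow$(ii)$\Rightarrow$(iii)$\Rightarrow$(i). The implication (i)$\Rightarrow$(ii) is immediate, since $F_n\rtimes PB_n$ is a subgroup of $\Gamma$.

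For (ii)$\Rightarrow$(iii), finiteness of $T_\lambda(F_n\rtimes PB_n)$ implies finiteness of its subgroup $T_\lambda(F_n)$. Because $\rho|_{F_n}$ is nontrivial, Proposition~\ref{prop:nonroot} forces $\lambda$ to be a root of unity. Fix a number field containing the input entries and $\lambda$. Theorem~\ref{thm:free}, applied to $\rho|_{F_n}$ (which has finite image), shows that definiteness of every nonzero convolved constituent form holds at every embedding. Corollary~\ref{cor:angles} rewrites this definiteness exactly as the eigenangle condition $m(l)\in\{r(l),R\}$ at every embedding and every constituent with $R-r(l)>0$. I will also remark that the condition does not depend on the choice of coefficient field, by the field-independence paragraph following Theorem~\ref{thm:free}. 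This is needed so that ``a coefficient field containing $\lambda$'' in (iii) is unambiguous.

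For (iii)$\Rightarrow$(i), I run the same chain backwards. Corollary~\ref{cor:angles} converts the eigenangle condition into the definiteness hypothesis of Theorem~\ref{thm:free}, which gives finiteness of $T_\lambda(F_n)=\MCDR_\lambda(\rho|_{F_n})(F_n)$. Corollary~\ref{thm:full} then upgrades this to finiteness of $T_\lambda(\Gamma)$, using the finite full input.

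The only delicate point is matching the hypotheses of Corollary~\ref{cor:angles}. Its constituents $U_\alpha$ are the irreducible constituents of the conjugate input $V^\tau$ restricted to $F_n$, and the root $l\in(0,1)$ is defined from $\tau(\lambda)$. I will check that these are the data meant in (iii), and that $\tau(\lambda)\neq1$ because $\lambda\neq1$. No braid-specific verification is required beyond what Corollary~\ref{thm:full} already supplies: the lattice preserved by $S_i^{\pm1}$ and the sign-corrected form from Theorem~\ref{thm:unitary-lift}. The proof is therefore a short assembly of earlier results.
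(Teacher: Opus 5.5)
Your proof is correct and uses the same ingredients as the paper: Proposition~\ref{prop:nonroot}, Theorem~\ref{thm:free} read through Corollary~\ref{cor:angles}, and Corollary~\ref{thm:full}. The only difference is how you get (ii)$\Rightarrow$(i): you pass through $T_\lambda(F_n)$ and (iii), which relies on the finite full input, whereas the paper proves (i)$\Leftrightarrow$(ii) directly from $[T_\lambda(\Gamma):T_\lambda(F_n\rtimes PB_n)]\le n!$, an argument that holds for arbitrary input.
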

\begin{proof}
Put $\Gamma_{\rm pure}:=F_n\rtimes PB_n$. The quotient $\Gamma/\Gamma_{\rm pure}\simeq \mathfrak S_n$
surjects onto $T_\lambda(\Gamma)/T_\lambda(\Gamma_{\rm pure})$, so
\[
 [T_\lambda(\Gamma):T_\lambda(\Gamma_{\rm pure})]\le n!.
\]
Consequently, if (ii) holds, then
$|T_\lambda(\Gamma)|\le n!\,|T_\lambda(\Gamma_{\rm pure})|<\infty$, proving (i).
The converse follows by restriction. This equivalence does not require
the finite-input assumption.
Under the stated assumptions, the equivalence with (iii) follows from
Theorem~\ref{thm:free}, Corollary~\ref{thm:full}, and
Proposition~\ref{prop:nonroot}.
\end{proof}

\section{Finite-image parameters: classification and effective bounds}\label{sec:parameters}
Sections~\ref{sec:free} and~\ref{sec:full} give finite-image criteria
for a specified parameter $\lambda$. We now classify the complete
set of such parameters for a fixed input and give an effective order
bound whenever that set is finite. For a nontrivial finite-image input,
Proposition~\ref{prop:nonroot} restricts the candidates to roots of
unity, but still leaves infinitely many possible orders.

Throughout this section, let $\rho:F_n\to\GL_E(V)$ have finite image,
where $E$ is a number field and $n\ge2$.
An input matrix $g_j=\rho(x_j)$ ($1\le j\le n$) is \emph{active}
on an irreducible constituent $U\subseteq V$ if $g_j|_U\ne I_U$;
thus we count the matrices that act nontrivially on $U$.
If some irreducible constituent has at least two active generators,
Proposition~\ref{prop:order-bound}\textup{(ii)} gives an effective upper bound on
the orders of all finite-output parameters. The criterion of
Theorem~\ref{thm:free} then yields a complete finite enumeration.
Theorem~\ref{thm:parameters} combines this conclusion with the
classification of the remaining inputs, for which the parameter set
is infinite.

The proof first finds an open interval of eigenangles on which the
canonical form of such a constituent is indefinite. A root of unity
of sufficiently large order has a conjugate in this interval under
an automorphism preserving the input constituent's isomorphism class.
Finite image at the original parameter would imply finite image at
this conjugate, contradicting the definiteness criterion.
An effective count of these conjugates bounds the possible orders.
The arithmetic test at a fixed rational parameter has precedents in
\cite{Haraoka1994,RRV,GV,BH,BeukersA}; the assertion developed here is a
bound covering all finite-output parameters for a fixed input, together
with the classification of the infinite parameter sets. The argument uses the standard fact that a complex representation of a
finite group is determined up to isomorphism by its character
\cite[Chapter~2]{SerreRep}: the constituent's character takes values in
$\Q(\zeta_M)$, where $M$ is the exponent of the finite input group,
so automorphisms fixing this field preserve its isomorphism class.
This requires neither rigidity nor simultaneous diagonalization.

\subsection{An interval of indefinite forms}
\begin{lemma}\label{lem:nonzero}
If $U$ is an irreducible complex input of dimension $N$ with at least
two active generators, then $\MCDR_\lambda(U)\ne0$ for every $\lambda\ne0,1$.
\end{lemma}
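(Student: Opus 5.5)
Since $U$ is irreducible with at least two active generators, it has no trivial structure to quotient away. We need to show $\dim Q_\lambda=nN-\dim(K+L)>0$. By the proof of Proposition~\ref{prop:nonroot}, $K\cap L=0$ and $L=F(\ker(\lambda P_n-I_N))$ with $F$ injective, so $\dim L=\dim\ker(\lambda P_n-I_N)\le N$. Also $\dim K=\sum_j\dim\ker(g_j-I_N)$. Hence the output dimension is
\[
 \dim\MCDR_\lambda(U)=\sum_{j=1}^n\operatorname{rank}(g_j-I_N)-\dim\ker(\lambda P_n-I_N),
\]
and it suffices to prove $\sum_j\operatorname{rank}(g_j-I_N)>\dim\ker(\lambda P_n-I_N)$.

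The first step is the bound $\operatorname{rank}(g_j-I_N)\ge1$ for each active $g_j$. This gives at least $2$ when $N=1$, while $\dim\ker(\lambda P_n-I_N)\le1$, so the one-dimensional case is done. For $N\ge2$, I would use the standard Scott-type or irreducibility inequality: if $\sum_j\operatorname{rank}(g_j-I_N)\le N$, consider the subspace $\bigcap_j\ker(g_j-I_N)$. Its codimension is at most $\sum_j\operatorname{rank}(g_j-I_N)$. If the sum were $<N$, this would give a common fixed vector, which is impossible by irreducibility when $N\ge2$. So $\sum_j\operatorname{rank}(g_j-I_N)\ge N$.

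The remaining issue is equality $\sum_j\operatorname{rank}(g_j-I_N)=N=\dim\ker(\lambda P_n-I_N)$, i.e.\ $P_n=\lambda^{-1}I_N$. Here I would consider the sum of images $\sum_j\operatorname{im}(g_j-I_N)$. This subspace is $F_n$-invariant, since $g_k(g_j-I)v=(g_j-I)v+(g_k-I)(g_jv)$. It is nonzero, so by irreducibility it equals $U$, and equality of dimensions forces $U=\bigoplus_j\operatorname{im}(g_j-I_N)$ as a direct sum.

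The next step is to write $P_n-I_N=\sum_j(g_j-I_N)g_{j+1}\cdots g_n$, where each term lands in $\operatorname{im}(g_j-I_N)$. From $P_n=\lambda^{-1}I$, for $v\in\operatorname{im}(g_n-I)$ we get $(\lambda^{-1}-1)v=\sum_j(g_j-I)g_{j+1}\cdots g_nv$. The directness of the sum then forces, for $j<n$, $(g_j-I)g_{j+1}\cdots g_n v=0$, together with $(g_n-I)v=(\lambda^{-1}-1)v$.

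An inductive version of this block-triangular argument should show that the decomposition makes each $g_j$ act on $\operatorname{im}(g_j-I)$ by a scalar and trivially on the other summands. In that case each summand is a subrepresentation, so irreducibility allows only one nonzero summand, i.e.\ only one active generator, which is a contradiction. This equality case is the main obstacle. If the induction becomes messy, the fallback is the determinant or rank count in the DR framework: $\dim\MCDR_\lambda$ equals the formula above, and irreducibility of $\MCDR_{\lambda^{-1}}\MCDR_\lambda(U)\cong U$ (\cite[Theorem 2.4]{DR2007}) excludes a zero output, since a zero output would make $U$ zero.
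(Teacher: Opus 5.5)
Your route is valid but differs from the paper's, and its hardest step is only sketched. The paper never computes a dimension. It assumes $K+L_\lambda=U^{\oplus n}$ and observes that $L_\lambda$ consists of common fixed vectors of the $A_i$ and spans $U^{\oplus n}/K$, so every $A_i$ is trivial on $U^{\oplus n}/K$. Evaluating $A_i-I_{nN}$ on a vector supported in coordinate $j\ne i$ then gives $(g_i-I_N)(g_j-I_N)=0$. Hence the $g_i$ commute, $U$ is one-dimensional, and two active generators are impossible. That argument needs no Scott-type inequality, no case split on $N$, and nothing about $P_n$.

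You instead compute $\dim\MCDR_\lambda(U)=\sum_j\operatorname{rank}(g_j-I_N)-\dim\ker(\lambda P_n-I_N)$ and reduce to the extremal case $\sum_j\operatorname{rank}(g_j-I_N)=N$, $P_n=\lambda^{-1}I_N$. This is longer, but it gives more: a quantitative lower bound on the output dimension, and the fact that vanishing could only occur with scalar $P_n$. It ends at the same structural fact as the paper, namely that $g_i$ is trivial on $\operatorname{im}(g_j-I_N)$ for $i\ne j$.

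The equality case is not actually proved by your sketch, because you apply the telescoped identity only to $v\in\operatorname{im}(g_n-I_N)$. The fix is short. Apply the identity to arbitrary $v$ and take components in $U=\bigoplus_k W_k$, where $W_k=\operatorname{im}(g_k-I_N)$ and $\pi_k$ are the projections. Since the $j$th term lies in $W_j$, this gives $(g_j-I_N)g_{j+1}\cdots g_n=(\lambda^{-1}-1)\pi_j$ for every $j$. Now argue by descending induction on $j$:
\begin{itemize}
\item For $j=n$ the identity says $g_n=I_N+(\lambda^{-1}-1)\pi_n$.
\item If $g_k=I_N+(\lambda^{-1}-1)\pi_k$ for all $k>j$, then $G=g_{j+1}\cdots g_n$ preserves every $W_k$ and is the identity on $W_j$. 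Hence $\pi_jG^{-1}=\pi_j$, and so $g_j=I_N+(\lambda^{-1}-1)\pi_j$.
\end{itemize}
Thus every $W_k$ is a subrepresentation. At least two of them are nonzero, since $g_j$ is active exactly when $W_j\ne0$. This contradicts irreducibility.

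Two smaller corrections. First, your invariance identity is false as written: in general $g_k(g_j-I_N)v\ne(g_j-I_N)v+(g_k-I_N)g_jv$. Use $g_kw=w+(g_k-I_N)w$ for $w$ in the sum of images instead. Second, your fallback does not use irreducibility. What it actually needs is the Dettweiler--Reiter composition law $\MCDR_{\lambda^{-1}}\circ\MCDR_\lambda\cong\mathrm{id}$ on modules satisfying their conditions $(*)$ and $(**)$. As a citation that is acceptable, but it makes the lemma depend on that theorem, whereas the paper's proof is self-contained.
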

\begin{proof}
If $K+L_\lambda=U^{\oplus n}$, the images of the common fixed vectors
$L_\lambda$ span $U^{\oplus n}/K$. Thus each $A_i$ induces the identity
on this quotient. Applying $A_i-I_{nN}$ to a vector supported in coordinate
$j\ne i$ shows
\[
 (g_i-I_N)(g_j-I_N)=0 \qquad(i\ne j),
\]
since the off-diagonal scalar is either $1$ or $\lambda$, both nonzero.
The same relation with $i,j$ reversed implies that all $g_i$ commute.
An irreducible finite-dimensional complex representation of this
commuting family is one-dimensional. In dimension one, the displayed
identity precludes two active generators, a contradiction.
\end{proof}

For the next lemma, let $U$ be a unitarizable $F_n$-module with
fixed positive definite invariant form $h$, and put $N:=\dim_\C U$.
In an orthonormal basis, write $g_j$ for the action of $x_j$
($1\le j\le n$). Use $\mathcal H(l)$, $B$, and $D:U^{\oplus n}\to U$
from Definition~\ref{def:ambient-pencil}, applied to $U$ in place of $V$,
and write $\mathcal H_{\lambda,h}$ for the canonical quotient form of
Definition~\ref{def:quotient-form}.
Because $\mathcal H(l)$ is defined on the fixed space $U^{\oplus n}$
before taking the parameter-dependent quotient, the continuous
coefficients in \eqref{eq:canonical-pencil} extend it to $l=0,1$,
where we will compare the endpoint forms.

\begin{lemma}\label{lem:gap}
Let $U$ be a unitarizable irreducible input with at least two active
generators. For $\lambda=e^{2\pi il}$ with $0<l<1$, use the
canonical quotient form $\mathcal H_{\lambda,h}$ specified above.
There is a nonempty open interval $I\subset(0,1)$ on which $\mathcal H_{\lambda,h}$
is nondegenerate and indefinite.
If the input representation has finite image, an interval $I$ with
rational endpoints is obtained by the following finite procedure.
\begin{enumerate}[label=\textup{Step \arabic*.},leftmargin=*,font=\normalfont]
\item Since the input image is finite, all eigenangles below are rational.
Compute the eigenangles $a_j^{(\nu)}\in[0,1)$ of $g_j$ and
$\gamma^{(\nu)}\in[0,1)$ of $P_n=g_1\cdots g_n$
($1\le j\le n$, $1\le\nu\le N$), all counted with multiplicity. The index $\nu$
plays the role of $k$ in Corollary~\ref{cor:angles} and enumerates
each spectrum separately.
\item Compute the integers
\[
 R=nN-\sum_{j=1}^n\#\{\nu:a_j^{(\nu)}=0\},
 \qquad
 c=\sum_{j=1}^n\sum_{\nu=1}^N a_j^{(\nu)}-\sum_{\nu=1}^N \gamma^{(\nu)}.
\]
\item Remove repetitions from
$\{0,1\}\cup\{1-\gamma^{(\nu)}:1\le\nu\le N,\ \gamma^{(\nu)}\ne0\}$
and sort the resulting points as $0=t_0<t_1<\cdots<t_s=1$.
The interior points $t_1,\ldots,t_{s-1}$ are precisely all
$l\in(0,1)$ for which $L_{e^{2\pi il}}\ne0$; hence
$L_{e^{2\pi il}}=0$ on each interval $(t_{k-1},t_k)$ ($1\le k\le s$).
\item For $k=1,\ldots,s$, set $l_k=(t_{k-1}+t_k)/2$ and compute
\[
 m_k=c+\sum_{\nu=1}^N\lfloor l_k+\gamma^{(\nu)}\rfloor.
\]
By the signature formula, written as in \eqref{eq:floor-signature},
the signature throughout
$(t_{k-1},t_k)$ is $(R-m_k,m_k)$.
\item Select the first $k$ for which $0<m_k<R$ and return
$I=(t_{k-1},t_k)$. Such a $k$ exists by the assertion above.
\end{enumerate}
\end{lemma}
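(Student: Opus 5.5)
We work throughout with the pencil $\mathcal H(l)=\sin(\pi l)B+\cos(\pi l)D^*D$ on the fixed space $U^{\oplus n}$ and compare its behaviour near the two endpoints $l=0$ and $l=1$. The plan has three stages.

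\emph{Stage 1: no walls inside $(t_{k-1},t_k)$.} By Proposition~\ref{prop:nonroot}, $L_\lambda=F(\ker(\lambda P_n-I_N))$ and $K\cap L_\lambda=0$. Hence on each open interval between consecutive spectral walls $t_k$ we have $L=0$. There the ambient radical is just $K$, which does not depend on $l$, and $\dim Q_\lambda=R$. Since $\mathcal H(l)$ is nondegenerate on $U^{\oplus n}/K$ throughout the interval (Lemma~\ref{lem:forms}(i)) and depends continuously on $l$, its signature is constant there. That signature is $(R-m_k,m_k)$ by \eqref{eq:floor-signature} with $r=0$, which justifies Steps~3 and~4.

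\emph{Stage 2: the extreme intervals.} As $l\to0^+$, $\mathcal H(l)\to D^*D$, which is positive semidefinite with kernel $\ker D\supseteq K$. Next, $D$ is surjective modulo $K$ onto $\sum_j\operatorname{im}(g_j-I)$. This sum is $F_n$-invariant, because $(g_i-I)(g_j-I)=(g_ig_j-I)-(g_i-I)-(g_j-I)$ and $\operatorname{im}(g_ig_j-I)\subseteq\operatorname{im}(g_i-I)+\operatorname{im}(g_j-I)$, and it is nonzero. By irreducibility it is therefore all of $U$, so $D^*D$ has rank exactly $N$ on $U^{\oplus n}/K$. By continuity, for small $l$ the form has at least $N$ positive eigenvalues, and the remaining $R-N$ directions are governed by $B$ restricted to $\ker D/K$.

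Symmetrically, as $l\to1^-$, $\mathcal H(l)\to -D^*D$, which gives at least $N$ negative eigenvalues on the last interval. Since the form is nondegenerate on the first and last intervals, this gives $m_1\le R-N$ and $m_s\ge N$.

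\emph{Stage 3: exhibiting an indefinite interval.} The aim is to show that either $0<m_1$ or $m_s<R$; failing that, some intermediate $m_k$ lies strictly between $0$ and $R$. If $R>N$, then $m_1\le R-N<R$ and $m_s\ge N>0$. The remaining task is to exclude the case $m_1=0$ and $m_s=R$ while every intermediate $m_k\in\{0,R\}$.

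The $m_k$ change only at walls, by the jump in $\sum\lfloor l+\gamma^{(\nu)}\rfloor$, and this jump is the multiplicity of the wall eigenangle, which is at most $N$. So the sequence $m_1,\dots,m_s$ is nondecreasing from $m_1$ to $m_s$ in steps of size at most $N$. If $R>N$ it cannot jump directly from $0$ to $R$, so some $0<m_k<R$ must occur.

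The main obstacle is proving $R>N$, i.e.\ $\sum_j\operatorname{rank}(g_j-I)>N$, whenever at least two generators are active. I expect this to fail in general: for instance, two reflections acting on $\C^2$ give $R=N=2$. In that boundary case the jump sizes must be examined directly. Here $m_s-m_1=N$ would force all $\gamma^{(\nu)}$ to be equal, so $P_n$ would be a scalar. One then aims for a contradiction via the Katz rigidity count, or via the fact that an endpoint would then give a nonzero $L$ equal to the whole quotient, which contradicts Lemma~\ref{lem:nonzero}.

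I would first try the general argument with the monotone steps. If the equality case $R=N$ resists, I would instead show directly from the leading behaviour near $l=0$ that $B$ restricted to $\ker D/K$ is not negative semidefinite. That gives $m_1<R-N$, or in the limiting case a positive direction that appears only on an intermediate interval. For the finite procedure, rationality of all eigenangles makes Steps~1--5 exact arithmetic, and Stage~3 guarantees that Step~5 terminates.
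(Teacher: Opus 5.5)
Your proposal is correct, but it proves existence of the indefinite interval by a different route from the paper.

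\textbf{The paper's route.} The paper's Step~5 is purely topological. It extends $\mathcal H^K(l)$ continuously to all of $[0,1]$ on the fixed space $W=U^{\oplus n}/K$, where the endpoint forms $\pm D^*D$ are nonzero semidefinite of opposite signs. If every open interval were definite, some wall would separate adjacent intervals of opposite sign. There the limiting form would be both positive and negative semidefinite, hence zero, forcing $\MCDR_\lambda(U)=0$ and contradicting Lemma~\ref{lem:nonzero}. This uses neither the signature formula nor the rank of $D$.

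\textbf{Your route.} You use irreducibility to get that $D$ maps $W$ onto $U$, so $R\ge N$. You turn the endpoint limits into the bounds $m_1\le R-N$ and $m_s\ge N$. Then you use the floor structure: $m_k$ is nondecreasing, with total increase $\#\{\nu:\gamma^{(\nu)}\ne0\}\le N$. This gives more precise information about where the interval lies.
\begin{itemize}
\item If $R>N$, an extreme interval is already indefinite: either $0<m_1\le R-N$, or $m_1=0$ and then $0<N\le m_s\le N<R$.
\item If $R=N$, then $m_1=0$ and $m_s=N$, so every $\gamma^{(\nu)}$ is nonzero. The second interval is indefinite unless $P_n=\mu I_N$ with $\mu\ne1$. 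That case is excluded exactly as you suggest. At $\lambda=\mu^{-1}$ one has $L_\lambda=F(U)$ of dimension $N=R$ and $K\cap L_\lambda=0$, so $\MCDR_\lambda(U)=0$, contradicting Lemma~\ref{lem:nonzero}.
\end{itemize}
Both proofs ultimately rest on Lemma~\ref{lem:nonzero} at a wall. The paper's argument avoids the signature formula for existence, whereas yours locates the interval explicitly.

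\textbf{Points to tighten.}
\begin{itemize}
\item Commit to the Lemma~\ref{lem:nonzero} closing argument for $R=N$. The Katz rigidity count is unnecessary.
\item Drop the fallback via $B$ on $\ker D/K$. It is vacuous precisely when $R=N$, since $D$ then induces an isomorphism $W\to U$ and $\ker D=K$.
\item The equality $m_s-m_1=N$ by itself only forces all $\gamma^{(\nu)}$ to be nonzero. It is the absence of any intermediate value of $m_k$ that forces a single wall of multiplicity $N$, i.e.\ a scalar $P_n$.
\end{itemize}
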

\begin{proof}
We justify the five steps in order.

\smallskip\noindent\textit{Step 1: eigenangles.}
Each $g_j$ and their product $P_n$ are unitary, so their eigenangles
lie in $[0,1)$. If the input image is finite, these matrices have finite
order, so all the eigenangles are rational.

\smallskip\noindent\textit{Step 2: the constants $R$ and $c$.}
Since $\dim\ker(g_j-I_N)$ is the multiplicity of the eigenangle zero,
the stated value of $R$ is $\dim W$, where $W:=U^{\oplus n}/K$
is the intermediate quotient for the input $U$.
Moreover, $\det P_n=\prod_{j=1}^n\det g_j$ gives
$\exp(2\pi i c)=1$, hence $c\in\mathbb Z$.

\smallskip\noindent\textit{Step 3: the open intervals.}
For $\lambda=e^{2\pi il}$ with $0<l<1$, the space $L_\lambda$
is naturally isomorphic to $\ker(\lambda P_n-I_N)$.
Thus $L_\lambda\ne0$ exactly when $l=1-\gamma^{(\nu)}$ for
some nonzero eigenangle $\gamma^{(\nu)}$. Removing repetitions
therefore lists every such interior point exactly once as
$t_1,\ldots,t_{s-1}$. On each intervening interval
$(t_{k-1},t_k)$, we have $L_\lambda=0$.
Since $\operatorname{rad}\mathcal H(l)=K+L_\lambda$, the form induced
on $W$ is nondegenerate there and coincides with $\mathcal H_{\lambda,h}$.
There are finitely many such intervals.

\smallskip\noindent\textit{Step 4: the signature on each interval.}
For finite input, all $\gamma^{(\nu)}$ are rational, so the endpoints
$t_k$ and the midpoints $l_k=(t_{k-1}+t_k)/2$ are rational.
The signature formula and its equivalent floor-function expression
\eqref{eq:floor-signature} give
\[
 \operatorname{sig}\mathcal H_{\lambda,h}=(R-m(l),m(l)-r(l)),\qquad
 m(l)=c+\sum_{\nu=1}^N\lfloor l+\gamma^{(\nu)}\rfloor,\qquad
 r(l)=\#\{\nu:l+\gamma^{(\nu)}=1\}.
\]
On $(t_{k-1},t_k)$, we have $r(l)=0$, and each floor term is
constant. Consequently, evaluation at $l_k$ gives the signature
$(R-m_k,m_k)$ throughout the interval. This form is indefinite
exactly when $0<m_k<R$.

\smallskip\noindent\textit{Step 5: existence of an open interval of indefinite forms.}
Let $q_K:U^{\oplus n}\to W=U^{\oplus n}/K$ be the quotient map,
and write $[v]_K=q_K(v)$. The subspace $K$ is independent of $l$.
By Lemma~\ref{lem:forms}(i), it lies in the radical of $\mathcal H(l)$
for $0<l<1$; this remains true at $l=0,1$ by continuity of
\eqref{eq:canonical-pencil}. Thus, for every $l\in[0,1]$, define
\[
 \mathcal H^K(l)([v]_K,[w]_K):=v^*\mathcal H(l)w.
\]
This is independent of the representatives. For $0<l<1$ with
$L_\lambda=0$, where $\lambda=e^{2\pi il}$, it agrees with the
canonical quotient form in \eqref{eq:canonical-quotient-form}
under the natural identification $W=Q_\lambda$.

To verify continuity also at the spectral walls and endpoints,
fix a linear section $J:W\to U^{\oplus n}$ of $q_K$, so that
$q_KJ=I_W$. In fixed bases, the Hermitian matrix representing
$\mathcal H^K(l)$ is
\[
 J^*\mathcal H(l)J
 =\sin(\pi l)J^*BJ+\cos(\pi l)J^*D^*DJ.
\]
Every entry is continuous on $[0,1]$. Hence $\mathcal H^K(l)$ is
a continuous family on the fixed space $W$, even where it is degenerate.
At the endpoints,
\[
 \mathcal H^K(0)([v]_K,[w]_K)=(Dv)^*Dw,\qquad
 \mathcal H^K(1)([v]_K,[w]_K)=-(Dv)^*Dw.
\]
Since $D(K)=0$ and $D\ne0$, these are nonzero positive and negative
semidefinite forms, respectively: the first takes the value
$\|Dv\|^2\ge0$ on $([v]_K,[v]_K)$, and this value is positive
for some $v$.

Suppose, for a contradiction, that $\mathcal H^K(l)$ is not
indefinite for any $l\in\bigcup_{k=1}^{s}(t_{k-1},t_k)$.
By Step~3, the form would then be definite on each interval.
Continuity at $0$ forces the first interval to be positive definite,
and continuity at $1$ forces the last to be negative definite.
If there is only one interval, this is already a contradiction.
Otherwise, two adjacent intervals must have opposite signs.
At their common endpoint, continuity would make the limiting form
both positive and negative semidefinite, hence zero on $W$.
Its quotient by its radical would therefore be zero. This quotient
is $U^{\oplus n}/(K+L_\lambda)=\MCDR_\lambda(U)$,
contradicting Lemma~\ref{lem:nonzero}.
Thus there exists an open interval $(t_{k-1},t_k)$ on which
$\mathcal H^K(l)$ is indefinite for every $l$, equivalently $0<m_k<R$.
The finite search in Step~5 returns such an interval.

\end{proof}

\subsection{Middle convolution at Galois-conjugate parameters}
For a fixed finite-image input, the finite-image property of middle
convolution is unchanged when an algebraic convolution parameter is
replaced by a Galois conjugate over the input cyclotomic field.
More precisely, put $G=\rho(F_n)$ and let $M$ be its exponent.
Choose a basis in which $\rho$ is defined over a number field,
as explained in Section~\ref{sec:free-arithmetic}. For
$\lambda\in\overline{\Q}\setminus\{0,1\}$ and every field automorphism
$\sigma:\overline{\Q}\to\overline{\Q}$ fixing $\Q(\zeta_M)$ pointwise,
we have
\begin{equation}\label{eq:galois-finiteness}
 \MCDR_\lambda(\rho)(F_n)\text{ is finite}
 \quad\Longleftrightarrow\quad
 \MCDR_{\sigma(\lambda)}(\rho)(F_n)\text{ is finite}.
\end{equation}
Indeed, for each $x\in F_n$, let $\rho^\sigma(x)$ be obtained by
applying $\sigma$ to each entry of the representation matrix $\rho(x)$.
Every matrix in the finite input group $G$ has eigenvalues that are
$M$-th roots of unity. Hence $\sigma$ preserves the trace of every
input group element, so $\rho^\sigma\simeq\rho$ by the character theory
of finite groups \cite[Chapter~2]{SerreRep}. The matrix formulas and
the kernel and quotient operations defining middle convolution are
compatible with applying $\sigma$ entrywise to matrices and sending
$\lambda$ to $\sigma(\lambda)$. Consequently,
\begin{equation}\label{eq:galois-mc}
 \bigl(\MCDR_\lambda(\rho)\bigr)^\sigma
 \simeq \MCDR_{\sigma(\lambda)}(\rho^\sigma)
 \simeq \MCDR_{\sigma(\lambda)}(\rho).
\end{equation}
Applying a field automorphism entrywise preserves finiteness of a matrix
group, giving \eqref{eq:galois-finiteness}; the reverse direction also
follows by applying $\sigma^{-1}$. The same argument applies to every
irreducible input constituent. Here \emph{relative conjugates} refers
to the parameters $\sigma(\lambda)$ obtained in this way.

\begin{theorem}[Finite-versus-infinite parameter sets]\label{thm:parameters}
Fix an input representation $\rho:F_n\to\GL(V)$ with finite image.
\begin{enumerate}
\item If at least one irreducible constituent has at least two active
 generators, then
 \[
 \mathcal F(\rho)=\{\lambda\ne0,1:\MCDR_\lambda(\rho)(F_n)
                         \text{ is finite}\}
 \]
 is finite and effectively enumerable from exact input data.
\item If every irreducible constituent has at most one active
 generator and the input is nontrivial, then $\mathcal F(\rho)$ is
 exactly the set of all roots of unity different from $1$.
\item For trivial input, the output is zero for every parameter in
 the domain, so $\mathcal F(\rho)=\C^\times\setminus\{1\}$.
\end{enumerate}
\end{theorem}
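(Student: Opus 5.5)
I would prove the three cases in reverse order of difficulty.

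\emph{Case 3.} If every $g_j=I_N$, then $K=V^{\oplus n}$, so $Q_\lambda=0$ for every $\lambda\ne0,1$ (Proposition~\ref{prop:nonroot}). The image is trivial, hence finite.

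\emph{Case 2.} Decompose $V$ into irreducible constituents $U_\alpha$. By assumption each $U_\alpha$ has at most one active generator, so its image is cyclic and $\dim U_\alpha=1$. The proof of Lemma~\ref{lem:simple} then shows that $\MCDR_\lambda(U_\alpha)$ is either zero or one-dimensional. In the latter case the active generator $x_j$ acts by $\lambda a$ with $a$ a root of unity, and all other generators act trivially. For a root of unity $\lambda\ne1$, each constituent output therefore has finite image. Since the output is the direct sum of the constituent outputs (Lemma~\ref{lem:forms}(ii), or additivity of the construction), the total image sits inside a finite abelian group and is finite. Conversely, Proposition~\ref{prop:nonroot} excludes every $\lambda$ that is not a root of unity, because the input is nontrivial.

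\emph{Case 1.} Fix a constituent $U$ with at least two active generators, and apply Lemma~\ref{lem:gap} to it. This yields an open interval $I=(t_{k-1},t_k)\subset(0,1)$ with rational endpoints on which the canonical form on $\MCDR_{e^{2\pi il}}(U)$ is nondegenerate and indefinite; by Lemma~\ref{lem:nonzero} this output is nonzero.

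Now suppose $\lambda\in\mathcal F(\rho)$. Proposition~\ref{prop:nonroot} makes $\lambda$ a root of unity, say of order $r$. By \eqref{eq:galois-finiteness}, every relative conjugate $\sigma(\lambda)$ with $\sigma$ fixing $\Q(\zeta_M)$ also lies in $\mathcal F(\rho)$. Since $\rho^\sigma\simeq\rho$ and the isomorphism class of $U$ is determined by its character, which has values in $\Q(\zeta_M)$, the constituent $U$ persists under $\sigma$. Then Theorem~\ref{thm:free}, applied at the identity embedding, says the canonical form on $\MCDR_{\sigma(\lambda)}(U)$ must be definite. So no relative conjugate of $\lambda$ may have angle in $I$.

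The relative conjugates of $\lambda$ are exactly $\zeta_r^{b}$ with $b$ ranging over the cosets of $(\Z/r)^\times$ modulo the subgroup $\{b\equiv1 \bmod \gcd(r,M)\}$, the Galois group of $\Q(\zeta_r)/\Q(\zeta_r)\cap\Q(\zeta_M)$. Concretely, the exponents $b$ run through $\{b\in(\Z/r)^\times: b\equiv b_0 \bmod \gcd(r,M)\}$ for a fixed $b_0$.

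The main obstacle, and the step I would develop most carefully, is an effective count showing that this arithmetic progression of units must hit any fixed interval $I$ of length $|I|$ once $r$ is large. I would first try to reduce it to counting integers $b$ in an interval of length $r|I|$ that are congruent to $b_0$ modulo $d=\gcd(r,M)\le M$ and coprime to $r$. Inclusion--exclusion over the prime divisors of $r$ coprime to $d$ gives a count of at least
\[
\frac{r|I|}{d}\prod_{p\mid r}\bigl(1-p^{-1}\bigr)-2^{\omega(r)}.
\]
This is positive once $r\ge C(M,|I|)$, using the elementary bounds $\varphi(r)\gg r/\log\log r$ and $2^{\omega(r)}=r^{o(1)}$, both with explicit constants. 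That gives an explicit bound $r_0$ on the order of any $\lambda\in\mathcal F(\rho)$.

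Finally, enumerate the finitely many roots of unity of order at most $r_0$ and test each one exactly with Corollary~\ref{cor:angles}. This shows $\mathcal F(\rho)$ is finite and effectively enumerable. Exactly one case applies, since the three hypotheses are mutually exclusive and exhaustive.
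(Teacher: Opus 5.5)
Your proposal is correct and follows essentially the same route as the paper. Lemma~\ref{lem:gap} gives an indefinite interval, the relative conjugates over $\Q(\zeta_M)$ form the unit progression $b\equiv b_0\pmod{\gcd(r,M)}$, inclusion--exclusion forces one of them into that interval once the order is large, and the remaining orders are tested via Theorem~\ref{thm:free} and Corollary~\ref{cor:angles}; cases (2) and (3) are handled as in the paper, and your only departures are minor: a weaker but still valid main term $|I|\varphi(r)/d$ in place of $\delta\varphi(q)/\varphi(d)$, and asymptotic bounds on $\varphi$ and $\omega$ in place of the paper's explicit factorization $\varphi(q)/2^{\omega(q)}=\prod_{p^e\parallel q}p^{e-1}(p-1)/2$.
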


The following proposition gives the order bound used in case (1). It
combines a count of relative conjugates with the length of the
indefinite open interval from Lemma~\ref{lem:gap}.
Write $\omega(q)$ for the number of distinct prime factors of $q$
and $\varphi$ for Euler's totient function, with $\omega(1)=0$.

\begin{proposition}[Relative conjugates and an effective order bound]\label{prop:order-bound}
\begin{enumerate}[label=\textup{(\roman*)}]
\item Let $M,q$ be positive integers, put $d=\gcd(M,q)$, and let
 $b\in\mathbb Z$ satisfy $\gcd(b,q)=1$.
 For an open interval $I\subset(0,1)$ of length $\delta$,
 let $\mathcal C$ count the integers $k$ satisfying
 $k/q\in I$, $k\equiv b\pmod d$, and $\gcd(k,q)=1$. Then
 \[
  \left|\mathcal C-\delta\frac{\varphi(q)}{\varphi(d)}\right|
  \le 2^{\omega(q)}.
 \]
\item Let $\rho:F_n\to\GL(V)$ have finite image, let $M$ be the
 exponent of $\rho(F_n)$, and let $U\subseteq V$ be an irreducible
 constituent with at least two active generators.
 Choose a nonempty open interval $I\subset(0,1)$ on which
 the canonical quotient form for $U$ is nondegenerate and indefinite,
 as in Lemma~\ref{lem:gap}, and let $\delta$ be its length.
 Every $\lambda\ne0,1$ for which
 $\MCDR_\lambda(\rho)(F_n)$ is finite is a root of unity of order
 $q>1$ satisfying
 \begin{equation}\label{eq:sieve}
  \frac{\varphi(q)}{2^{\omega(q)}}\le
  C,\qquad C=\frac{\varphi(M)}{\delta}.
 \end{equation}
 If $I$ is chosen by the finite procedure in
 Lemma~\ref{lem:gap}, then $\delta\ge1/M$ and $C\le M\varphi(M)$.
 Thus the universal bound $M\varphi(M)$ is always available.
 For each fixed $C>0$, the positive integers $q$ satisfying
 \eqref{eq:sieve} form a finite, effectively enumerable set.
\end{enumerate}
\end{proposition}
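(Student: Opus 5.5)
For part (i) we use a Möbius-inversion sieve. The condition $\gcd(k,q)=1$ is encoded by $\sum_{e\mid q}\mu(e)[e\mid k]$, so
\[
 \mathcal C=\sum_{e\mid q}\mu(e)\,\#\{k: k/q\in I,\ k\equiv b\ (\mathrm{mod}\ d),\ e\mid k\}.
\]
Only squarefree $e$ contribute, and there are $2^{\omega(q)}$ of them. For a fixed $e$, the two congruences $k\equiv b\pmod d$ and $k\equiv0\pmod e$ are compatible only if $\gcd(d,e)\mid b$. Since $\gcd(b,q)=1$, this forces $\gcd(d,e)=1$, and then the two conditions combine by CRT into a single class modulo $de$. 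The integers $k$ with $k/q\in I$ fill an open interval of length $\delta q$. The number of integers in one residue class modulo $de$ inside such an interval equals $\delta q/(de)+\theta$ with $|\theta|\le1$.

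Summing over $e$, the main term is
\[
 \frac{\delta q}{d}\sum_{e\mid q,\ \gcd(e,d)=1}\frac{\mu(e)}{e}
 =\frac{\delta q}{d}\prod_{p\mid q,\ p\nmid d}\Bigl(1-\frac1p\Bigr).
\]
Because every prime dividing $d$ also divides $q$, we have $\varphi(q)/q=\prod_{p\mid q}(1-1/p)$. Dividing off the primes dividing $d$ turns the product into $\varphi(q)/\varphi(d)$ times $d/q$, so the main term is exactly $\delta\varphi(q)/\varphi(d)$. The error is at most one per squarefree divisor, which gives the bound $2^{\omega(q)}$. This bookkeeping, checking that the incompatible $e$ drop out without spoiling the product formula, is routine but needs care.

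For part (ii), let $\lambda\in\mathcal F(\rho)$. By Proposition~\ref{prop:nonroot} (the input is nontrivial, since $U$ exists), $\lambda$ is a root of unity of some order $q>1$; write $\lambda=e^{2\pi i b/q}$ with $\gcd(b,q)=1$. Automorphisms $\sigma$ of $\Q(\zeta_{\lcm(M,q)})$ fixing $\Q(\zeta_M)$ send $\lambda$ to $e^{2\pi ik/q}$. The exponents that occur are exactly the $k$ coprime to $q$ with $k\equiv b\pmod d$: the restriction map $(\Z/\lcm(M,q))^\times\to(\Z/q)^\times$ sends the kernel of reduction mod $M$ onto the units congruent to $1\pmod d$.

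Suppose $\mathcal C\ge1$, so that some conjugate satisfies $k/q\in I$. By \eqref{eq:galois-finiteness}, $\MCDR_{\sigma(\lambda)}(\rho)(F_n)$ is finite. The conjugate $\rho^\sigma$ is isomorphic to $\rho$, and by the same character argument the constituent $U$ is preserved under $\sigma$ (its character values lie in $\Q(\zeta_M)$). Applying Theorem~\ref{thm:free} with the identity embedding and the constituent $U$, the canonical form on $\MCDR_{\sigma(\lambda)}(U)$ would then have to be definite. But this output is nonzero by Lemma~\ref{lem:nonzero}, and its form is indefinite on $I$. This contradiction shows $\mathcal C=0$. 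Part (i) then gives $\delta\varphi(q)/\varphi(d)\le2^{\omega(q)}$. Since $\varphi(d)\le\varphi(M)$ for $d\mid M$, this yields \eqref{eq:sieve} with $C=\varphi(M)/\delta$.

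For the bound $\delta\ge1/M$: the endpoints in Lemma~\ref{lem:gap} are of the form $1-\gamma^{(\nu)}$ together with $0$ and $1$. Each $\gamma^{(\nu)}$ lies in $\frac1M\Z$, because $P_n$ is an element of $G$ and so its eigenvalues are $M$-th roots of unity. Distinct endpoints therefore differ by at least $1/M$.

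Finiteness and effectivity of the set of $q$ satisfying \eqref{eq:sieve} follow from $\varphi(q)/2^{\omega(q)}\to\infty$. Concretely, $\varphi(p^a)/2\ge p^{a-1}(p-1)/2$, which exceeds $1$ for all prime powers except finitely many ($p^a\in\{2,3,4\}$). By multiplicativity, $\varphi(q)/2^{\omega(q)}$ is at least a constant times the product over the remaining prime powers of $q$, which gives an explicit bound on $q$ in terms of $C$.

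The main obstacle is keeping the Galois step honest. We must check that moving $\lambda$ to its conjugate does not change which constituent $U$ is being tested, and that the indefinite interval computed for $U$ at the identity embedding is the right test at $\sigma(\lambda)$. This is exactly where the relative automorphisms fixing $\Q(\zeta_M)$ are needed.
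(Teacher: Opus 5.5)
Your proposal is correct and follows the paper's proof essentially step for step. For (i), both use inclusion--exclusion over squarefree divisors of $q$ coprime to $d$. For (ii), both describe the relative conjugates by the Chinese remainder theorem, transport finiteness via \eqref{eq:galois-finiteness}, use indefiniteness on $I$ to force $\mathcal C=0$, use the $1/M$ spacing of the walls, and finish with the prime-power bound. The only difference is cosmetic: you apply \eqref{eq:galois-finiteness} to $\rho$ and then pass to $U$, whereas the paper restricts to $U$ first; either order works.
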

\begin{proof}
\textit{Part (i).}
Primes dividing $d$ cannot divide $k$ in this progression. Apply
inclusion-exclusion to the primes dividing $q$ but not $d$.
For a squarefree product $s$ of these primes, the additional condition
$s\mid k$ gives one progression modulo $ds$. Its count in the open
interval $qI$ differs from $\delta q/(ds)$ by at most one.
Summing the errors proves the bound. The main term is
\[
 \frac{\delta q}{d}\prod_{p\mid q,\ p\nmid d}(1-p^{-1})
 =\delta\frac{\varphi(q)}{\varphi(d)}.
\]

\smallskip\noindent\textit{Part (ii).}
The input is nontrivial, so Proposition~\ref{prop:nonroot} implies
that a finite-output parameter is a root of unity.
Write $\lambda=e^{2\pi ib/q}$ with $q>1$, $1\le b<q$, and
$\gcd(b,q)=1$. As $\sigma$ ranges over the field automorphisms of
$\overline{\Q}$ fixing $\Q(\zeta_M)$ pointwise, the values
$\sigma(\lambda)$ are precisely the elements of
\[
 \left\{e^{2\pi ik/q}\;\middle|\;
 1\le k<q,\ \gcd(k,q)=1,\ k\equiv b\pmod{\gcd(M,q)}\right\}.
\]
Indeed, this follows by the Chinese remainder theorem from exponents
$t\equiv1\pmod M$ in $(\mathbb Z/\lcm(M,q)\mathbb Z)^\times$.
Extend the corresponding automorphism to $\overline{\Q}$.
If $\MCDR_\lambda(\rho)$ has finite image, so does
$\MCDR_\lambda(U)$. Applying \eqref{eq:galois-finiteness} to $U$
shows that the output at every parameter $e^{2\pi ik/q}$ in this set
also has finite image. None of the corresponding numbers $k/q$ can
belong to $I$:
by Theorem~\ref{thm:free}, finite image would force the canonical
quotient form of $U$ to be definite, whereas it is indefinite
throughout this interval.

Part (i), with $\mathcal C=0$, therefore gives
\[
 \delta\frac{\varphi(q)}{\varphi(d)}\le2^{\omega(q)},
 \qquad d=\gcd(M,q).
\]
Since $\varphi(d)\le\varphi(M)$, this implies \eqref{eq:sieve}.
For the universal bound, choose $I$ by the finite procedure
in Lemma~\ref{lem:gap}. The product of the input matrices on $U$
has order dividing $M$. Its eigenangles
$\gamma^{(\nu)}$, $1\le\nu\le\dim U$, therefore have the form
$\gamma^{(\nu)}=b_\nu/M$ with $b_\nu\in\{0,1,\ldots,M-1\}$.
All the boundary points $1-\gamma^{(\nu)}$ with
$\gamma^{(\nu)}\ne0$, together with $0,1$, belong to
\[
 \left\{0,\frac1M,\frac2M,\ldots,1\right\}.
\]
The interval returned by Lemma~\ref{lem:gap} lies between consecutive
distinct boundary points, so its length satisfies $\delta\ge1/M$.
Consequently $C=\varphi(M)/\delta\le M\varphi(M)$.

Finally, the bound admits only finitely many $q$, effectively. Indeed,
\[
 \frac{\varphi(q)}{2^{\omega(q)}}
 =\prod_{p^e\parallel q}\frac{p^{e-1}(p-1)}2.
\]
Here $p^e\parallel q$ means $p^e\mid q$ and $p^{e+1}\nmid q$.
Every odd-prime factor on the right is at least one; the factor for
$2$ is at least $1/2$. Hence each odd prime divisor satisfies
$p\le4C+1$ and $p^{e-1}(p-1)\le4C$, while $2^e\mid q$ implies
$2^e\le4C$. These bounds give a finite search of prime powers,
after which \eqref{eq:sieve} selects the required integers $q$.
\end{proof}

\begin{proof}[Proof of Theorem~\ref{thm:parameters}]
For a nontrivial input, Proposition~\ref{prop:nonroot} restricts
finite-output parameters to roots of unity.
In case (1), apply Lemma~\ref{lem:gap} to a constituent with at least
two active generators. Proposition~\ref{prop:order-bound}\textup{(ii)} then gives
a finite, effectively enumerable set of candidate orders.
For each candidate $q>1$, test every
$\lambda=e^{2\pi ib/q}$ with $1\le b<q$ and $\gcd(b,q)=1$ using
the all-embedding definiteness criterion of Theorem~\ref{thm:free}.
These tests can be evaluated by Corollary~\ref{cor:angles}.
The finite input group, its exponent, algebraic irreducible
constituents, and their spectra are all computable from the
stipulated exact finite-image data. Thus this procedure enumerates
$\mathcal F(\rho)$ and terminates.
We do not estimate the computational complexity of this procedure.

In case (2), each constituent is one-dimensional: its image is
cyclic. A constituent with unique active scalar $a$ has zero output
when $\lambda a=1$, and otherwise a one-dimensional output on which
that generator acts by $\lambda a$ and all others act trivially.
Thus every root of unity gives finite image on every constituent.
A finite direct sum of such finite images is finite. Non-roots are
excluded as above. Case (3) follows from $K=V^{\oplus n}$.
\end{proof}

\begin{corollary}\label{cor:parameter-contrast}
Let $\rho:\Gamma\to\GL(V)$ have finite image, where
$\Gamma=F_n\rtimes B_n$, and put $\Gamma_{\rm pure}:=F_n\rtimes PB_n$.
For each $J\in\{F_n,\Gamma_{\rm pure},\Gamma\}$, define
\[
 \mathcal F_J(\rho):=
 \{\lambda\in\C\setminus\{0,1\}:T_\lambda(J)\text{ is finite}\}.
\]
Then
\[
 \mathcal F_\Gamma(\rho)=\mathcal F_{\Gamma_{\rm pure}}(\rho)
 =\mathcal F_{F_n}(\rho)=\mathcal F(\rho|_{F_n}),
\]
where the last set is defined in Theorem~\ref{thm:parameters}.
Thus the parameters giving finite image on $F_n\rtimes B_n$
are exactly those giving finite image on $F_n\rtimes PB_n$, and
these are also exactly those giving finite image on $F_n$.
The classification and effective enumeration in
Theorem~\ref{thm:parameters}, applied to $\rho|_{F_n}$,
therefore determine this common parameter set.

The corresponding set for the braid subgroup $B_n$ alone can be
larger: there is a finite-image input for which
$\mathcal F_\Gamma(\rho)=\{i\}$, although $T_\lambda(B_n)$ is finite
for every $\lambda\in\C\setminus\{0,1\}$.
\end{corollary}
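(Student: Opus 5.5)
The equalities of parameter sets follow by combining results already proved, and the contrast assertion needs one explicit example. I would argue in the following order.

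\emph{Equalities.} By Definition~\ref{def:klm} and the discussion after Definition~\ref{def:unitarity}, the restriction of $T_\lambda$ to $F_n$ is $\MCDR_\lambda(\rho|_{F_n})$. Hence $\mathcal F_{F_n}(\rho)=\mathcal F(\rho|_{F_n})$ directly from the definitions.

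Fix $\lambda\ne0,1$. Corollary~\ref{thm:full} applies to the finite-image input $\rho$ with no irreducibility hypothesis. It gives that $T_\lambda(F_n)$ is finite if and only if $T_\lambda(\Gamma)$ is finite, so $\mathcal F_\Gamma(\rho)=\mathcal F_{F_n}(\rho)$.

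For the pure subgroup, use $F_n\subseteq\Gamma_{\rm pure}\subseteq\Gamma$. Finiteness on $\Gamma$ implies finiteness on $\Gamma_{\rm pure}$, which implies finiteness on $F_n$, which implies finiteness on $\Gamma$. This closes the cycle and gives
\[
\mathcal F_\Gamma(\rho)\subseteq\mathcal F_{\Gamma_{\rm pure}}(\rho)\subseteq\mathcal F_{F_n}(\rho)=\mathcal F_\Gamma(\rho).
\]
The argument does not need Corollary~\ref{cor:full-spectral}, although that corollary gives the same result when $\rho|_{F_n}$ is nontrivial. If $\rho|_{F_n}$ is trivial, the output is zero and all sets equal $\C\setminus\{0,1\}$. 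The statement about classification and enumeration then follows from Theorem~\ref{thm:parameters} applied to $\rho|_{F_n}$.

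\emph{Example.} I would take the example of the remark on braid restriction: $n=2$, $g_1=g_2=i$, $s_1=1$. The full input has finite image, since it factors through an abelian quotient with $\rho(\sigma_1)=1$ and the artin relations are respected because $g_1=g_2$.

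The input is one-dimensional with both generators active, so we are in case (1) of Theorem~\ref{thm:parameters}. I would compute $\mathcal F$ with Corollary~\ref{cor:angles}. Here $N=1$, $a_j=1/4$, $P_2=-1$, $\gamma=1/2$, $R=2$, and $c=1/2-1/2=0$. The formula gives
\[
m(l)=\lfloor l+1/2\rfloor,\qquad r(l)=\delta_{l,1/2}.
\]
For $l<1/2$ we get $m=0=r$, so the form is definite. For $l>1/2$ we get $m=1$, which lies strictly between $r=0$ and $R=2$, so the form is indefinite. At $l=1/2$ we get $m=1=r$, so it is definite.

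Finiteness requires definiteness at every Galois conjugate of $\lambda$ over $\Q(i)$; here $M=4$. I would run the conjugate count of Proposition~\ref{prop:order-bound}(ii) and check that only $l=3/4$, that is $\lambda=-i$, survives. Its conjugates over $\Q(i)$ are only itself, since $\gcd(4,4)=4$.

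This is the main obstacle. The identification of the sign convention at $\lambda=-i$ versus $\lambda=i$ must be done carefully. The text asserts $\mathcal F_\Gamma=\{i\}$, while the remark computed $A_1$ infinite at $-i$, which is consistent with this. So I expect $l=1/4$ to be the sole survivor. For other orders $q$, some conjugate $k/q$ with $k\equiv b\pmod{\gcd(4,q)}$ lands in $(1/2,1)$, and I would verify this by elementary residue counting.

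\emph{Braid image.} Finally I need $T_\lambda(B_2)$ finite for every $\lambda$. With $s_1=1$ one has $K=0$. The operator $S_1=\Theta_1$ has characteristic polynomial
\[
x^2-(1-i)x-i=(x-1)(x+i),
\]
independent of $\lambda$. It has distinct eigenvalues, so it has order $4$ on $V^{\oplus2}$ and therefore finite order on every quotient.
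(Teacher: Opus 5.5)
The three equalities are fine. Closing the chain $F_n\subseteq\Gamma_{\rm pure}\subseteq\Gamma$ with Corollary~\ref{thm:full} is a valid substitute for the paper's index-$n!$ comparison. Your braid computation ($S_1=\Theta_1$ with distinct eigenvalues $1,-i$) is also correct.

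The gap is the example. For $n=2$, $g_1=g_2=i$, $s_1=1$, one does \emph{not} have $\mathcal F_\Gamma(\rho)=\{i\}$.
\begin{itemize}
\item \emph{The parameter $\lambda=-1$ survives.} Your own computation shows it: at $l=1/2$ you found $m=r=1$, so the output is one-dimensional and definite. Also $-1$ is fixed by every automorphism. At the embedding $i\mapsto-i$ one gets $c=1$, $m=2$, $r=1$, and signature $(0,1)$, again definite. Directly, $L_{-1}=\C(i,1)$, and in the quotient coordinate $v_1-iv_2$ the operators $T_{-1}(x_1)$, $T_{-1}(x_2)$, $T_{-1}(\sigma_1)$ all act by $-i$ (compare Example~\ref{ex:full-input-assumption}). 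Hence $-1\in\mathcal F_\Gamma(\rho)$.
\item \emph{Two parameters of order $12$ survive.} For $\lambda=e^{\pi i/6}$, the four embeddings of $\Q(\zeta_{12})$ give $(g,l)=(i,1/12),(i,5/12),(-i,7/12),(-i,11/12)$. All lie in the definite ranges, which are $l\le1/2$ for $g=i$ and $l\ge1/2$ for $g=-i$. The parameter $e^{5\pi i/6}$ gives the same four pairs.
\item \emph{A check independent of sign conventions.} $A_1,A_2$ are reflections with nontrivial eigenvalue $i\lambda$, and $\operatorname{tr}(A_1A_2)=0$, so $(A_1A_2)^2=\lambda^2I_2$. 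When $i\lambda$ has order $3$, the image is therefore a quotient of $\langle s,t\mid s^3=t^3=1,\ stst=tsts\rangle$. This is the Shephard--Todd group $G_5$, of order $72$.
\end{itemize}
Running the all-embedding test over all orders gives $\mathcal F_\Gamma(\rho)=\{-1,i,e^{\pi i/6},e^{5\pi i/6}\}$ for your input.

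So the planned residue count cannot work. For $q=2$ and $q=12$ the relative conjugates over $\Q(i)$ are $\{1/2\}$ and $\{1/12,5/12\}$, and none of these lies in $(1/2,1)$. Also, conjugates fixing $i$ only give necessary conditions. To \emph{accept} a parameter you must also test the embeddings with $i\mapsto-i$, where the definite range flips to $l\ge1/2$.

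The input that works is the one in Appendix~\ref{sec:scalar-parameters}: $n=3$, $g_1=g_2=g_3=i$, $s_1=s_2=1$.
\begin{itemize}
\item There $P_3=-i$, so the definite range shrinks to $0<l\le1/4$ at the identity embedding, and to $3/4\le l<1$ after $i\mapsto-i$. This removes $-1$ and the order-$12$ parameters.
\item The sieve with $\delta=3/4$, $C=8/3$ leaves $18$ orders, and the integer test accepts only $(q,b)=(4,1)$.
\item Braid finiteness for every $\lambda$ follows from $T_{\rm gen}\simeq\mathbf 1\oplus\beta$, whose image has order $96$, together with Proposition~\ref{prop:parameter}(ii) (or Theorem~\ref{thm:braid-dichotomy}).
\end{itemize}
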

\begin{proof}
Corollary~\ref{thm:full} gives
$\mathcal F_\Gamma(\rho)=\mathcal F_{F_n}(\rho)$.
Since $[\Gamma:\Gamma_{\rm pure}]=n!$, for every parameter we have
$[T_\lambda(\Gamma):T_\lambda(\Gamma_{\rm pure})]\le n!$.
Hence $T_\lambda(\Gamma)$ is finite if and only if
$T_\lambda(\Gamma_{\rm pure})$ is finite, proving the remaining group comparison.
The identity $T_\lambda|_{F_n}=\MCDR_\lambda(\rho|_{F_n})$
identifies the common set with $\mathcal F(\rho|_{F_n})$.
The scalar input in Appendix~\ref{sec:scalar-parameters} supplies
the final example.
\end{proof}

\begin{remark}
The assumption in case (1) of Theorem~\ref{thm:parameters} is that
at least two generators act nontrivially on some irreducible input
constituent. A condition on the total output dimension alone would
not suffice: direct sums of one-dimensional inputs with at most
one active generator can have total output dimension greater than
one and still give finite image for every root of unity $\lambda\ne1$.
The theorem also allows the output dimension to drop to one at
exceptional parameters; it does not require dimension at least two at every parameter.
For inputs defined on $\Gamma=F_n\rtimes B_n$,
Section~\ref{sec:resonance} studies the structure of the braid
representation at exceptional parameters.
\end{remark}

\section{Algebraic solutions of regular-singular systems}\label{sec:algebraic}

The finite-image criteria now give an analytic consequence: they
characterize when all horizontal solutions of the corresponding
regular-singular systems are algebraic. We first recall the classical
finite-monodromy criterion, then apply the free-group and full-image
results to one-variable systems and KZ-type equations, respectively.

Let $\nabla$ be an algebraic flat connection of rank $d$ on a
smooth connected complex algebraic variety $X$.
A \emph{horizontal solution} is a local analytic section $s$
satisfying $\nabla s=0$. In an algebraic local frame, write
$\nabla=d-\Omega$; the horizontal equation is then $ds=\Omega s$.
On a sufficiently small simply connected analytic open set, a
\emph{horizontal basis} consists of $d$ horizontal solutions whose
values form a basis in each fiber. Writing these solutions as columns
in the chosen frame gives a \emph{fundamental solution matrix}
\[
 Y=(s_1\ \cdots\ s_d),\qquad dY=\Omega Y,\qquad \det Y\ne0.
\]
Every horizontal solution on this open set is $s=Yc$ for a constant
vector $c\in\C^d$. Algebraicity of all horizontal solutions means
that the entries of $Y$ in an algebraic local frame belong to a
finite algebraic extension of the function field $\C(X)$.

Regular singularity is understood to include behavior at infinity.
We use the following classical consequence of the regular-singular
Riemann--Hilbert correspondence
\cite[Chapter~II, Theorem~5.9]{Deligne}.

\begin{proposition}[Classical finite-monodromy criterion]\label{prop:algebraicity}
A regular-singular algebraic flat connection has finite monodromy if and
only if all its horizontal solutions are algebraic.
\end{proposition}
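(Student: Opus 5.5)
The plan is to derive Proposition~\ref{prop:algebraicity} from the regular-singular Riemann--Hilbert correspondence \cite[Chapter~II, Theorem~5.9]{Deligne}, together with elementary Galois theory of function fields. Fix a base point $x_0\in X$, a small simply connected neighbourhood $U$ of $x_0$, an algebraic frame near $x_0$, and a fundamental solution matrix $Y$ on $U$. Analytic continuation of $Y$ along a loop $\gamma$ based at $x_0$ returns $Y\,M_\gamma$ for a constant invertible matrix $M_\gamma$. The monodromy group is the image of $\pi_1(X,x_0)$ under $\gamma\mapsto M_\gamma$, well defined up to conjugation; its finiteness does not depend on any of these choices.

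\emph{Algebraic solutions imply finite monodromy.} Suppose every entry of $Y$ is algebraic over $\C(X)$. Each entry $y$ then satisfies a minimal polynomial over $\C(X)$. Analytic continuation of $y$ along any path preserves this polynomial relation, so the continuations of $y$ can only be roots of that polynomial. Hence each entry has finitely many continuations, and $Y$ itself has only finitely many continuations $Y M_\gamma$. Since $Y$ is invertible, the matrices $M_\gamma$ form a finite set, so the monodromy group is finite. This direction does not use regular singularity.

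\emph{Finite monodromy implies algebraic solutions.} Suppose the monodromy group $G$ is finite, and let $X'\to X$ be the finite étale cover corresponding to the finite-index subgroup $\ker(\pi_1\to G)$. By the Riemann existence theorem, $X'$ is a smooth algebraic variety and the covering map is finite étale and algebraic. The pullback connection on $X'$ is still regular singular, since regularity is preserved under pullback by finite morphisms (Deligne, Chapter~II). Its monodromy is trivial, so the local system is trivial, and the analytic horizontal sections form a global analytic frame. We then compare this with the trivial algebraic connection $(\mathcal O_{X'}^d,d)$, which has the same trivial monodromy and is regular singular. The uniqueness part of \cite[Chapter~II, Theorem~5.9]{Deligne} says that a regular-singular algebraic model of a local system is unique. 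So the horizontal analytic frame is in fact an algebraic isomorphism, and the entries of $Y$ are regular algebraic functions on $X'$. Because $\C(X')$ is a finite extension of $\C(X)$, these entries are algebraic over $\C(X)$.

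The main obstacle is this comparison step: we must know that horizontal sections of a regular-singular connection with trivial monodromy are algebraic, and not merely analytic with essential singularities at infinity. This is exactly where regularity, including regularity at infinity, is used, and it is supplied by the equivalence of categories in \cite[Chapter~II, Theorem~5.9]{Deligne}. We simply cite it rather than reprove the growth estimates.
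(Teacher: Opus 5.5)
Your argument is correct and follows essentially the same route as the paper: the paper gives no separate proof, only presenting the criterion as a classical consequence of Deligne's regular-singular Riemann--Hilbert correspondence \cite[Chapter~II, Theorem~5.9]{Deligne}, and your passage to the finite \'etale cover that trivializes the monodromy, followed by comparison with $(\mathcal O_{X'}^d,d)$, spells out that standard derivation, with the converse direction elementary as you note. In the comparison step, full faithfulness of Deligne's functor is the cleanest formulation, but your ``uniqueness of the algebraic model'' phrasing also suffices, since any algebraic isomorphism with the trivial connection sends constant sections to an algebraic horizontal frame.
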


For a punctured line, Proposition~\ref{prop:algebraicity} and
Theorem~\ref{thm:free} give a complete criterion for algebraicity of all
solutions of the regular-singular system realizing the convolved tuple.
This one-variable application requires no braid action.

Let
\[
 X_{n+1}=\{(z_0,\ldots,z_n)\in\C^{n+1}:z_i\ne z_j\text{ for }i\ne j\}.
\]
Following \cite[Definition 13]{KLM}, a \emph{KZ-type system} on
$X_{n+1}$ is a system
\begin{equation}\label{eq:kz-system}
 du=\left(\sum_{0\le i<j\le n}A_{ij}\,d\log(z_i-z_j)\right)u,
\end{equation}
where $u$ is a column of $d$ functions and $A_{ij}=A_{ji}\in M_d(\C)$
are constant matrices satisfying
\[
 [A_{ij},A_{k\ell}]=0\quad(\{i,j\}\cap\{k,\ell\}=\varnothing),
 \qquad [A_{ij},A_{ik}+A_{jk}]=0\quad(i,j,k\text{ distinct}).
\]
All indices range from $0$ to $n$, with distinct indices in each pair;
$[A,B]=AB-BA$. These conditions ensure flatness. The two-index residue
matrices $A_{ij}$ in this subsection are distinct from the convolution
operators $A_i$ of \eqref{eq:Ai}.

With compatible base points and generators,
$\pi_1(X_{n+1})\simeq PB_{n+1}\simeq F_n\rtimes PB_n$.
Haraoka constructs the multiplicative middle convolution for KZ
equations in \cite{Haraoka2020}. Its relation to the restriction of KLM
to this group is established in \cite{KLM}. We state the consequence with this
identification as an explicit assumption. In particular, any
rank-one character twist arising from the chosen normalization
(multiplying each monodromy operator by a character value) must be
included when checking this identification.

\begin{corollary}[Algebraicity for the corresponding KZ-type output]
\label{cor:kz}
Let $E$ be a number field, let
$\rho:\Gamma\to\GL_E(V)$ have finite full image, and let
$\nabla_\lambda$ be a regular-singular algebraic flat connection on
the ordered configuration space
$X_{n+1}=\{(z_0,\ldots,z_n)\in\C^{n+1}:z_i\ne z_j\ (i\ne j)\}$
whose monodromy is isomorphic to
$T_\lambda|_{F_n\rtimes PB_n}$ in the conventions of this paper.
Then all horizontal solutions of $\nabla_\lambda$ are algebraic if and
only if $T_\lambda(F_n)$ is finite. If the input restriction to $F_n$
is nontrivial, these conditions are equivalent to the root-of-unity
and componentwise eigenangle conditions of
Corollary~\ref{cor:full-spectral}.
\end{corollary}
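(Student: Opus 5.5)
The argument chains three earlier results: Proposition~\ref{prop:algebraicity}, the pure-braid comparison in Corollary~\ref{cor:full-spectral}, and Corollary~\ref{thm:full}.

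\emph{Step 1: solutions versus monodromy.} By hypothesis $\nabla_\lambda$ is an algebraic flat connection on the smooth connected variety $X_{n+1}$ and is regular singular, including at infinity. Proposition~\ref{prop:algebraicity} therefore says that all horizontal solutions are algebraic if and only if the monodromy group is finite. So it suffices to show that the monodromy is finite exactly when $T_\lambda(F_n)$ is finite.

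\emph{Step 2: monodromy versus $T_\lambda(\Gamma_{\rm pure})$.} With compatible base points we have $\pi_1(X_{n+1})\simeq PB_{n+1}\simeq F_n\rtimes PB_n=:\Gamma_{\rm pure}$, and the monodromy representation is assumed isomorphic to $T_\lambda|_{\Gamma_{\rm pure}}$. Isomorphic representations have conjugate images. The image of the monodromy is therefore finite if and only if $T_\lambda(\Gamma_{\rm pure})$ is finite. The choice of base point only conjugates the monodromy, so it does not affect finiteness. Any rank-one character twist is already included in the stated isomorphism hypothesis.

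\emph{Step 3: reduction to $F_n$.} By Corollary~\ref{cor:full-spectral}, (i)$\Leftrightarrow$(ii), $T_\lambda(\Gamma_{\rm pure})$ is finite if and only if $T_\lambda(\Gamma)$ is finite. This step uses the index bound $[T_\lambda(\Gamma):T_\lambda(\Gamma_{\rm pure})]\le n!$ and needs no finiteness of the input. Since the full input image is finite, Corollary~\ref{thm:full} then gives that $T_\lambda(\Gamma)$ is finite if and only if $T_\lambda(F_n)$ is finite. Chaining Steps 1--3 proves the main equivalence.

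\emph{Step 4: the spectral form.} Now assume $\rho|_{F_n}$ is nontrivial. The equivalence of finiteness of $T_\lambda(\Gamma)$ with condition (iii) of Corollary~\ref{cor:full-spectral} gives the root-of-unity and componentwise eigenangle conditions. Here $E$ is enlarged to $E(\lambda)$ as in that corollary.

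\emph{Main obstacle.} There is no substantial one. The only care needed is in Step 2: the identification of $\pi_1(X_{n+1})$ with $F_n\rtimes PB_n$ must match the convention used in the hypothesis. This is exactly why the monodromy isomorphism is part of the assumptions rather than something derived here.
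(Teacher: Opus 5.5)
Your proposal is correct and follows the paper's proof, which likewise combines Proposition~\ref{prop:algebraicity} with Corollary~\ref{cor:full-spectral}; the index bound and Corollary~\ref{thm:full}, which you cite separately, are what the proof of that corollary uses. The paper also notes that trivial $\rho|_{F_n}$ gives zero output, so the claim holds trivially. Your chain already covers this case, since neither the index bound nor Corollary~\ref{thm:full} needs a nontriviality hypothesis.
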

\begin{proof}
Apply Proposition~\ref{prop:algebraicity} and
Corollary~\ref{cor:full-spectral}. If the restriction $\rho|_{F_n}$ is trivial, the
KLM output is zero and the assertion has the stipulated trivial meaning.
\end{proof}

If such connections are specified for all $\lambda$ in a parameter
domain $\mathcal D\subseteq\C\setminus\{0,1\}$, their algebraicity
locus is exactly $\mathcal D\cap\mathcal F(\rho|_{F_n})$.
Theorem~\ref{thm:parameters} therefore classifies that locus, and
Corollary~\ref{cor:parameter-contrast} identifies it using the pure braid
restriction. This statement presupposes the family and the monodromy
identification in Corollary~\ref{cor:kz}.

This proves algebraicity for the output of the specified monodromy
transformation; it does not supply polynomial equations for a fundamental
solution matrix. A presentation of those functions requires the analytic
integral transform or a further algebraic calculation.

\begin{remark}[Iterating the construction]
If the full output at one step is finite, it is an admissible finite-image
input for another KLM step. The same criterion may be applied at each
step. An intermediate output which fails the criterion is not a finite
input for the next use of the theorem. No unrestricted assertion about
arbitrary iterated middle convolutions follows.
\end{remark}

\section{Braid representations when \texorpdfstring{$L_\lambda\ne0$}{L(lambda) is nonzero}}\label{sec:resonance}

In this section, we consider a complex input representation
$\rho:\Gamma\to\GL(V)$ whose image is not assumed to be finite.
Whenever a result requires the image $\rho(B_n)$ of the braid subgroup
to be finite, we state this additional assumption explicitly.
\paragraph{Exceptional parameters.}
Write $L_\lambda$ for the subspace $L$ of Section~\ref{sec:free},
emphasizing its dependence on $\lambda$. A parameter
$\lambda\in\C\setminus\{0,1\}$ is \emph{exceptional} if
$\lambda^{-1}\in\Spec(P_n)$ and \emph{nonexceptional} otherwise.
The finite set of exceptional parameters is
\begin{equation}\label{eq:exceptional-set}
 \mathcal E_\rho=\{\mu^{-1}:\mu\in\Spec(P_n)\}\setminus\{1\}.
\end{equation}
Equation~\eqref{eq:Lchain} below gives the equivalent conditions
\[
 \lambda\in\mathcal E_\rho
 \quad\Longleftrightarrow\quad
 \ker(\lambda P_n-I_N)\ne0
 \quad\Longleftrightarrow\quad L_\lambda\ne0.
\]
Thus an exceptional parameter is one at which an additional nonzero
subspace is removed in forming the KLM quotient. Since
$K\cap L_\lambda=0$ for $\lambda\ne1$, the output dimension drops
by $\dim L_\lambda$ from $\dim(V^{\oplus n}/K)$. This definition depends only on the product
$P_n=g_1\cdots g_n$, so we also use it for $F_n$-inputs and for their
irreducible constituents, using the product on the chosen constituent.

The operators $S_i$ satisfy the braid relations by the Long--Moody
construction \cite{KLM} and preserve $K$ (also checked in the proof
of Lemma~\ref{lem:boundary-multiplication} below). We may therefore
define the generic braid representation
\[
 U_0=V^{\oplus n}/K,\qquad T_{\rm gen}:B_n\to\GL(U_0).
\]
Here $K=\bigoplus_{j=1}^n\ker(g_j-I_N)$, with the $j$-th
kernel embedded in the $j$-th coordinate of $V^{\oplus n}$.
Thus $U_0$ is the intermediate space obtained by quotienting out
only the local fixed subspaces. The KLM output space is obtained
by taking the further quotient
\[
 Q_\lambda=V^{\oplus n}/(K+L_\lambda)
 \cong U_0\big/\bigl((K+L_\lambda)/K\bigr).
\]
At a nonexceptional parameter, $L_\lambda=0$, so $Q_\lambda$ is
canonically identified with $U_0$. The subscript $0$ in $U_0$
is a label for this parameter-independent space; it does not mean
specialization at $\lambda=0$.

The operators $S_i$ in \eqref{eq:Si} induce $T_{\rm gen}$ and are
independent of $\lambda$. This notation does not refer to a
representation over a field of rational functions.

For two representations $\rho_a:B_n\to\GL(W_a)$ ($a=1,2$),
a linear map $f:W_1\to W_2$ is called \emph{braid-equivariant}
(or a \emph{homomorphism of $B_n$-representations}) if
\[
 f\bigl(\rho_1(b)w\bigr)=\rho_2(b)f(w)
 \qquad\text{for all }b\in B_n\text{ and }w\in W_1.
\]
Thus applying a braid before or after the map gives the same result.
It suffices to check this identity for the standard generators of $B_n$.

\subsection{An equivariant embedding and the multiplication map}

The purpose of this subsection is to identify the braid-invariant subspace
by which $U_0=V^{\oplus n}/K$ is further quotiented when $L_\lambda\ne0$.
We use the braid-equivariant embedding $F:V\hookrightarrow V^{\oplus n}$
and the multiplication map $D:V^{\oplus n}\to V$ recalled below.
Here $V$ carries the restriction $\rho|_{B_n}$ of the input representation,
and $V^{\oplus n}$ carries the intermediate Long--Moody braid representation.
The braid-invariant subspace $F(V)$ is therefore isomorphic to
$\rho|_{B_n}$. Both maps are described explicitly in
Lemma~\ref{lem:boundary-multiplication}; the map $D$ is not in general
an inverse of $F$.

These maps belong to the established reduced Long--Moody construction.
A copy of the input braid representation appears in Long's reduction
\cite[Theorem 2.11]{Long1994}; see also \cite[Section 6]{BigelowTian}.
For the multiplication map in the augmentation-ideal description, see
\cite[Section 1.2.2 and Proposition 1.10]{HN}; for the invariant subspaces used in the KLM quotient, see
\cite[Section 3 and Proposition 3.1]{HN}.

Here we first pass to the quotient by the local fixed subspaces.
At exceptional parameters, the additional subspace removed by the KLM construction
comes from a particular eigenspace of the product of the input free
generators. The identities below describe this subspace and will let
us construct a braid-equivariant retraction under the assumptions of
Proposition~\ref{prop:resonance-splitting}. This leads to the splitting
at exceptional parameters, the comparison of image groups in
Proposition~\ref{prop:image-kernel}, and the parameter-independence
of braid-image finiteness in Theorem~\ref{thm:braid-dichotomy}.

\begin{lemma}\label{lem:boundary-multiplication}
Put $P_n:=g_1\cdots g_n$ and define
\begin{equation}\label{eq:boundary-embedding}
 F:V\longrightarrow V^{\oplus n},\qquad
 F(v)=(g_2\cdots g_nv,\ g_3\cdots g_nv,\ldots,g_nv,v).
\end{equation}
Recall the multiplication map of \eqref{eq:D-definition}:
\[
 D:V^{\oplus n}\longrightarrow V,\qquad
 D(v_1,\ldots,v_n)=\sum_{j=1}^n(g_j-I_N)v_j.
\]
The map $F$ is injective, and its image $F(V)$
is the subspace of tuples satisfying $v_j=g_{j+1}v_{j+1}$
for $1\le j<n$. We have
\begin{equation}\label{eq:FD-equivariance}
 S_iF=Fs_i,\qquad DS_i=s_iD,\qquad s_i P_n=P_n s_i,
\end{equation}
and
\begin{equation}\label{eq:DF-product}
 DF=P_n-I_N,\qquad D(K)=0.
\end{equation}
Consequently $D$ induces the braid-equivariant map
\[
 \bar D:U_0=V^{\oplus n}/K\longrightarrow V,\qquad
 \bar D([v])=D(v)=\sum_{j=1}^n(g_j-I_N)v_j,
\]
where $v=(v_1,\ldots,v_n)$ and $[v]=v+K$. This is independent of
the representative because $D(v+k)=D(v)$ for every $k\in K$,
by $D(K)=0$. Here the bar denotes the induced map on the quotient,
not complex conjugation. For \(\lambda\ne1\),
\begin{equation}\label{eq:Lchain}
 \begin{split}
 L_\lambda
 &=\{(v_1,\ldots,v_n):
       v_j=g_{j+1}v_{j+1}\ (j<n),\
       v_n=\lambda P_nv_n\}\\
 &=F\bigl(\ker(\lambda P_n-I_N)\bigr),
 \end{split}
\end{equation}
and \(K\cap L_\lambda=0\).
\end{lemma}

\begin{proof}
Applying $\rho$ to \eqref{eq:artin-action} gives
\[
 s_ig_i=g_{i+1}s_i,\qquad
 s_i(g_ig_{i+1})=(g_ig_{i+1})s_i,\qquad
 s_ig_j=g_js_i\quad(j\notin\{i,i+1\}).
\]
The last two identities give \(s_i P_n=P_n s_i\).
Put \(a=g_i\), \(b=g_{i+1}\), \(s=s_i\), and
\(C=g_{i+2}\cdots g_n\). The two relevant coordinates of \(F(v)\)
are \((bCv,Cv)\). The block in \eqref{eq:Si} sends these to
\((aCv,Cv)\), and multiplication by \(s\) gives
\((bCs v,Cs v)\). All other coordinates follow from the same
commutation identities. This proves \(S_iF=Fs_i\).

For the second identity, use
\(s^{-1}as=aba^{-1}\) and \(s^{-1}bs=a\).
After factoring out \(s\) from \(DS_i\), the coefficient of
coordinate \(i\) is \(a-I_N\), while that of coordinate \(i+1\) is
\[
 (aba^{-1}-I_N)a+(a-I_N)(I_N-b)=b-I_N.
\]
The other coefficients are unchanged, so \(DS_i=s_iD\).
The telescoping sum
\[
 \sum_{j=1}^n(g_j-I_N)g_{j+1}\cdots g_n=P_n-I_N
\]
proves \(DF=P_n-I_N\), and \(D(K)=0\) follows coordinatewise.

For completeness, \(K\) is braid-invariant: if \(av_i=v_i\)
and \(bv_{i+1}=v_{i+1}\), the new \(i+1\)-st coordinate is
\(sv_i\), which is fixed by \(b\), and the new \(i\)-th coordinate
\(sa v_{i+1}\) satisfies
\((a-I_N)sa v_{i+1}=sa(b-I_N)v_{i+1}=0\).
The other coordinates are treated by commutation.

Let \(e_i(v)\) denote the nonzero block row of \((A_i-I_{nN})v\).
From \eqref{eq:Ai},
\[
 e_i(v)-e_{i+1}(v)
   =(\lambda-1)(v_i-g_{i+1}v_{i+1}).
\]
If $v\in L_\lambda$ and $\lambda\ne1$, then $e_i(v)=0$ for
$1\le i\le n$. The difference identity above gives
$v_j=g_{j+1}v_{j+1}$ for $1\le j<n$, hence $v=F(v_n)$.
The last equation is
$e_n(F(v_n))=(\lambda P_n-I_N)v_n=0$.
Thus $v$ satisfies the conditions on the right-hand side of
\eqref{eq:Lchain}.

Conversely, let $v=(v_1,\ldots,v_n)$ satisfy those conditions.
The coordinate relations give $v=F(v_n)$ and
$e_i(v)-e_{i+1}(v)=0$ for $1\le i<n$.
The condition $v_n=\lambda P_nv_n$ gives
$e_n(v)=(\lambda P_n-I_N)v_n=0$. Therefore $e_i(v)=0$ for every
$1\le i\le n$, so $v\in\bigcap_{i=1}^n\ker(A_i-I_{nN})=L_\lambda$.
This proves the reverse inclusion and hence \eqref{eq:Lchain}.
Finally, if
\(F(v)\in K\) and \(P_nv=\lambda^{-1}v\), then
\[
 0=DF(v)=(P_n-I_N)v=(\lambda^{-1}-1)v.
\]
Hence \(v=0\), proving the intersection assertion.
\end{proof}

\begin{proposition}[Parameter reduction]\label{prop:parameter}
For any fixed complex input \(\rho\), the following hold.
\begin{enumerate}[label=(\roman*)]
 \item If \(\lambda\notin\mathcal E_\rho\) and \(\lambda\ne0,1\),
 then \(T_\lambda|_{B_n}\cong T_{\rm gen}\).
 \item For every \(\lambda\ne0,1\), the representation
 \(T_\lambda|_{B_n}\) is a quotient of \(T_{\rm gen}\).
\end{enumerate}
\end{proposition}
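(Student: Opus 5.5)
The plan is to read both assertions directly off the quotient description
\[
 Q_\lambda=V^{\oplus n}/(K+L_\lambda)\cong U_0\big/\bigl((K+L_\lambda)/K\bigr),
\]
together with Lemma~\ref{lem:boundary-multiplication}. The braid action on $Q_\lambda$ is induced by the same $\lambda$-independent operators $S_i$ of \eqref{eq:Si} that induce $T_{\rm gen}$ on $U_0$. Hence $T_\lambda|_{B_n}$ is, by construction, the representation induced by $T_{\rm gen}$ on the quotient of $U_0$ by the image $(K+L_\lambda)/K$.

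For (ii), first check that this image is a $B_n$-invariant subspace of $U_0$. By \eqref{eq:Lchain}, $L_\lambda=F(\ker(\lambda P_n-I_N))$. Since $s_i$ commutes with $P_n$ by \eqref{eq:FD-equivariance}, each $s_i$ preserves $\ker(\lambda P_n-I_N)$. The identity $S_iF=Fs_i$ then shows $S_iL_\lambda\subseteq L_\lambda$. The same argument applies to $S_i^{-1}$, or one notes that $S_i$ is invertible on a finite-dimensional invariant subspace. Since $K$ is $S_i$-invariant by the proof of Lemma~\ref{lem:boundary-multiplication}, the sum $K+L_\lambda$ is invariant. The quotient map
\[
 U_0\longrightarrow U_0\big/\bigl((K+L_\lambda)/K\bigr)\cong Q_\lambda
\]
is therefore a surjective braid-equivariant map onto the space of $T_\lambda|_{B_n}$. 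Thus $T_\lambda|_{B_n}$ is a quotient of $T_{\rm gen}$.

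For (i), assume $\lambda\notin\mathcal E_\rho$. Then $\lambda^{-1}\notin\Spec(P_n)$, so $\ker(\lambda P_n-I_N)=0$. By \eqref{eq:Lchain} this gives $L_\lambda=0$, so the quotient map above is the identity of $U_0$. This yields an isomorphism $T_\lambda|_{B_n}\cong T_{\rm gen}$ that is braid-equivariant. Here one should note that \eqref{eq:Lchain} requires $\lambda\ne1$, which is part of the hypotheses.

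The only step that needs care is the invariance of $L_\lambda$ under the braid operators. This is the main obstacle, although it is already handled by $S_iF=Fs_i$ together with $s_iP_n=P_ns_i$. Everything else is the tautological fact that both $T_{\rm gen}$ and $T_\lambda|_{B_n}$ are induced by the same operators $S_i$.
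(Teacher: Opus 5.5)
Your proposal is correct and follows the paper's proof. The paper gets (i) from $L_\lambda=F(\ker(\lambda P_n-I_N))=0$ off $\mathcal E_\rho$, and (ii) from braid-equivariance of the natural quotient map $U_0\to V^{\oplus n}/(K+L_\lambda)$, using that $K$ and the $S_i$ do not depend on $\lambda$; your explicit check that $S_i^{\pm1}$ preserve $L_\lambda$ (via $S_iF=Fs_i$ and $s_iP_n=P_ns_i$) is a valid extra detail that the paper leaves implicit in the well-definedness of $T_\lambda$.
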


\begin{proof}
Both \(K\) and the braid operators are independent of \(\lambda\).
Equation~\eqref{eq:Lchain} gives \(L_\lambda=0\) outside
\(\mathcal E_\rho\), proving (i). The natural quotient map
\[
 U_0\longrightarrow V^{\oplus n}/(K+L_\lambda)
\]
is braid-equivariant, proving (ii).
\end{proof}

\subsection{Splitting at exceptional parameters}

\begin{proposition}[Equivariant splitting at exceptional parameters]
\label{prop:resonance-splitting}
Let \(\mu\in\Spec(P_n)\setminus\{1\}\), and put
\(V_\mu=\ker(P_n-\mu I_N)\).
By \eqref{eq:FD-equivariance}, $P_n$ commutes with $\rho(b)$ for
every $b\in B_n$, so $V_\mu$ is invariant under $\rho(B_n)$.
We denote the resulting representation on this invariant subspace by
\[
 \rho|_{B_n,V_\mu}:B_n\longrightarrow\GL(V_\mu),\qquad
 \bigl(\rho|_{B_n,V_\mu}(b)\bigr)(v)=\rho(b)v
 \quad(b\in B_n,\ v\in V_\mu).
\]
Thus this notation means first restricting the group to $B_n$ and
then restricting its action to the invariant subspace $V_\mu$.
Suppose that the inclusion \(V_\mu\subset V\) admits a
braid-equivariant retraction \(p_\mu:V\to V_\mu\) for the input
representation. Explicitly, $p_\mu$ is a linear map satisfying
\[
 p_\mu(v)=v\quad(v\in V_\mu),\qquad
 p_\mu\bigl(\rho(b)v\bigr)
 =\bigl(\rho|_{B_n,V_\mu}(b)\bigr)p_\mu(v)
 \quad(b\in B_n,\ v\in V).
\]
Thus $p_\mu$ restricts to the identity on $V_\mu$ and commutes with
the braid action. Then
\begin{equation}\label{eq:resonance-splitting}
 T_{\rm gen}\cong
 \rho|_{B_n,V_\mu}\oplus T_{\mu^{-1}}|_{B_n}.
\end{equation}
In particular, either of the following sufficient conditions guarantees
the existence of $p_\mu$:
\begin{enumerate}[label=(\roman*)]
 \item \(P_n\) is semisimple;
 \item \(\rho(B_n)\) has finite image.
\end{enumerate}
In particular, case (ii) does not require \(P_n\) to be semisimple.
Whenever \(\rho|_{B_n,V_\mu}\) has finite image, the generic and
KLM braid representations at exceptional parameters in
\eqref{eq:resonance-splitting} have finite image simultaneously.
\end{proposition}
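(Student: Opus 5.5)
With $\lambda:=\mu^{-1}$, note that $\lambda\neq0,1$ because $\mu\neq1$. By \eqref{eq:Lchain},
\[
 L_\lambda=F\bigl(\ker(\lambda P_n-I_N)\bigr)=F(V_\mu).
\]
The plan is to show that $(K+L_\lambda)/K\subset U_0$ is a braid-invariant subspace isomorphic to $\rho|_{B_n,V_\mu}$, and then to build an equivariant complement to it out of $p_\mu$ and $\bar D$.

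\emph{The subrepresentation.} By \eqref{eq:FD-equivariance}, $S_iF=Fs_i$, and $s_i$ preserves $V_\mu$ because it commutes with $P_n$. Hence $F(V_\mu)$ is $S_i$-invariant. Let $\pi:V^{\oplus n}\to U_0$ be the quotient map. Since $K\cap L_\lambda=0$, the composite $\pi F|_{V_\mu}:V_\mu\to U_0$ is injective and braid-equivariant, and its image is $(K+L_\lambda)/K$. So this subspace carries $\rho|_{B_n,V_\mu}$. The quotient of $U_0$ by it is $Q_{\mu^{-1}}$ with the induced braid action, that is, $T_{\mu^{-1}}|_{B_n}$.

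\emph{The retraction.} Define
\[
 r:=(\mu-1)^{-1}p_\mu\bar D:U_0\to V_\mu .
\]
It is braid-equivariant because $\bar D$ and $p_\mu$ are. For $v\in V_\mu$, \eqref{eq:DF-product} gives
\[
 r(\pi F v)=(\mu-1)^{-1}p_\mu(P_n-I_N)v=(\mu-1)^{-1}(\mu-1)v=v.
\]
Thus $r\circ\pi F|_{V_\mu}=\mathrm{id}$, and $\ker r$ is a braid-invariant complement to $\pi F(V_\mu)$. This gives $U_0\cong\pi F(V_\mu)\oplus\ker r$, with $\ker r\cong U_0/\pi F(V_\mu)\cong T_{\mu^{-1}}|_{B_n}$, which is \eqref{eq:resonance-splitting}.

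\emph{Existence of $p_\mu$.}
\begin{enumerate}[label=(\roman*)]
\item If $P_n$ is semisimple, let $p_\mu$ be the spectral projection onto $V_\mu$ along the sum of the other eigenspaces. It is a polynomial in $P_n$, so it commutes with every $\rho(b)$ by \eqref{eq:FD-equivariance}.
\item If $\rho(B_n)$ is finite, take any linear projection $p_0:V\to V_\mu$ and average it:
\[
 p_\mu=|\rho(B_n)|^{-1}\sum_{g\in\rho(B_n)}g|_{V_\mu}\,p_0\,g^{-1}.
\]
This is a projection onto $V_\mu$, since $V_\mu$ is invariant, and it is equivariant. Semisimplicity of $P_n$ is not needed here.
\end{enumerate}

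\emph{Final assertion.} In a basis adapted to the splitting, $T_{\rm gen}(B_n)$ embeds in the product $\rho|_{B_n,V_\mu}(B_n)\times T_{\mu^{-1}}(B_n)$ and surjects onto the second factor. If the first factor is finite, then $T_{\rm gen}(B_n)$ is finite exactly when $T_{\mu^{-1}}(B_n)$ is.

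The only delicate step is the normalization of the retraction. One has to check that $\bar D\circ\pi F$ restricted to $V_\mu$ is the nonzero scalar $\mu-1$; this is where $\mu\neq1$ is used. One also has to confirm that $\bar D$ is well defined on $U_0$, which holds because $D(K)=0$. Both facts follow directly from Lemma~\ref{lem:boundary-multiplication}.
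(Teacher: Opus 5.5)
Your proposal is correct and follows essentially the same route as the paper. You use the same equivariant embedding $v\mapsto[F(v)]$, the same normalized retraction $(\mu-1)^{-1}p_\mu\bar D$, and the identification $L_{\mu^{-1}}=F(V_\mu)$; you also construct $p_\mu$ by the same two methods, namely the spectral projection and averaging (your $\sum_g g\,p_0\,g^{-1}$ is the paper's $\sum_g g^{-1}p_0\,g$ reindexed).
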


\begin{proof}
Given the braid-equivariant retraction \(p_\mu\), define
\[
 j_\mu:V_\mu\longrightarrow U_0,\qquad j_\mu(v)=[F(v)],
 \qquad
 r_\mu:U_0\longrightarrow V_\mu,\qquad
 r_\mu=\frac1{\mu-1}p_\mu\bar D.
\]
Both maps are braid-equivariant: this follows for $j_\mu$ from
$S_iF=Fs_i$ in \eqref{eq:FD-equivariance} and the equivariance of the
quotient map, and for $r_\mu$ from the equivariance of $\bar D$ and
$p_\mu$. Moreover,
\[
 r_\mu j_\mu
 =\frac1{\mu-1}p_\mu(P_n-I_N)|_{V_\mu}
 =I_{V_\mu}.
\]
Here it is enough that \(p_\mu|_{V_\mu}=I_{V_\mu}\); commutation of
\(p_\mu\) with \(P_n\) is not needed.
It follows that
\[
 U_0=j_\mu(V_\mu)\oplus\ker r_\mu.
\]
By \eqref{eq:Lchain}, \(L_{\mu^{-1}}=F(V_\mu)\).
Its image in \(U_0\) is therefore \(j_\mu(V_\mu)\), and the
natural projection onto the KLM quotient restricts to an isomorphism
\[
 \ker r_\mu\xrightarrow{\ \sim\ }
 V^{\oplus n}/(K+L_{\mu^{-1}}).
\]
This proves \eqref{eq:resonance-splitting}.

If \(P_n\) is semisimple, take its spectral projection onto \(V_\mu\).
It is a polynomial in \(P_n\), hence commutes with every \(s_i\).
If instead $\rho(B_n)$ is finite, start with any linear
retraction $p_0:V\to V_\mu$ and set
\[
 p_\mu=\frac1{|\rho(B_n)|}\sum_{g\in\rho(B_n)}g^{-1}p_0g.
\]
Since $V_\mu$ is invariant under $\rho(B_n)$, this map has image in
$V_\mu$, restricts to the identity there, and commutes with $\rho(B_n)$.
Thus it is the required retraction without any assumption on the
Jordan form of \(P_n\).

Finally, a direct sum with a finite-image representation has
finite image if and only if its other summand does.
\end{proof}

\begin{proposition}[The change of image at exceptional parameters]\label{prop:image-kernel}
In the splitting situation of Proposition~\ref{prop:resonance-splitting},
write $H_\mu=(\rho|_{B_n,V_\mu})(B_n)$,
$G_{\rm gen}=T_{\rm gen}(B_n)$, and
$G_\mu=T_{\mu^{-1}}(B_n)$. There is an exact sequence
\[
 1\longrightarrow \mathcal N_\mu\longrightarrow G_{\rm gen}
   \longrightarrow G_\mu\longrightarrow1,
 \qquad \mathcal N_\mu\hookrightarrow H_\mu.
\]
More precisely,
\[
 \mathcal N_\mu\ne\{1\}
 \quad\Longleftrightarrow\quad
 \ker(T_{\mu^{-1}}|_{B_n})
 \not\subseteq\ker(\rho|_{B_n,V_\mu}).
\]
Equivalently, some braid acts as the identity on the quotient
$Q_{\mu^{-1}}$ but acts nontrivially on the summand corresponding to
$V_\mu$.
If $H_\mu$ is finite, the surjective homomorphism has finite kernel. In the finite
case, $|G_{\rm gen}|=|\mathcal N_\mu||G_\mu|$ and $|\mathcal N_\mu|$ divides $|H_\mu|$.
\end{proposition}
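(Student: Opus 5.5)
The plan is to read everything off the splitting \eqref{eq:resonance-splitting}. Proposition~\ref{prop:resonance-splitting} gives a braid-equivariant isomorphism $U_0\cong j_\mu(V_\mu)\oplus\ker r_\mu$, with the first summand isomorphic to $\rho|_{B_n,V_\mu}$ and the second to $T_{\mu^{-1}}|_{B_n}$. In a basis adapted to this decomposition, every $T_{\rm gen}(b)$ is block diagonal, $\operatorname{diag}(\rho|_{B_n,V_\mu}(b),T_{\mu^{-1}}(b))$ up to the fixed isomorphism. Hence
\[
 G_{\rm gen}\cong\{(\rho|_{B_n,V_\mu}(b),T_{\mu^{-1}}(b)):b\in B_n\}\subseteq H_\mu\times G_\mu,
\]
and projection onto the second factor gives a surjective homomorphism $\pi:G_{\rm gen}\to G_\mu$. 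Surjectivity holds because every element of $G_\mu$ has the form $T_{\mu^{-1}}(b)$.

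Define $\mathcal N_\mu=\ker\pi$. This gives the exact sequence directly. An element of $\mathcal N_\mu$ is a pair $(\rho|_{B_n,V_\mu}(b),I)$, so projection to the first factor restricts to a homomorphism $\mathcal N_\mu\to H_\mu$. This map is injective, because a pair whose both coordinates are trivial is the identity of $G_{\rm gen}$. This yields the embedding $\mathcal N_\mu\hookrightarrow H_\mu$.

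For the characterization, $\mathcal N_\mu\ne\{1\}$ means there is some $b$ with $T_{\mu^{-1}}(b)=I$ but $\rho|_{B_n,V_\mu}(b)\ne I$. That is exactly the statement $\ker(T_{\mu^{-1}}|_{B_n})\not\subseteq\ker(\rho|_{B_n,V_\mu})$. Reading it through the identification of summands gives the verbal reformulation: such a $b$ acts trivially on $Q_{\mu^{-1}}$ and nontrivially on $j_\mu(V_\mu)$.

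The finite statements follow formally. If $H_\mu$ is finite, then $\mathcal N_\mu$ is finite, being isomorphic to a subgroup of $H_\mu$. Lagrange's theorem then gives that $|\mathcal N_\mu|$ divides $|H_\mu|$. When the groups are finite, the first isomorphism theorem gives $|G_{\rm gen}|=|\mathcal N_\mu||G_\mu|$; finiteness of $G_{\rm gen}$ here follows from finiteness of $G_\mu$ together with finiteness of the kernel.

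The only point needing care is to transport $G_{\rm gen}$, $G_\mu$ and $H_\mu$ consistently through the isomorphisms in \eqref{eq:resonance-splitting}. Concretely, one checks that the projection $U_0\to Q_{\mu^{-1}}$ restricted to $\ker r_\mu$ intertwines $T_{\rm gen}$ with $T_{\mu^{-1}}$, and that $j_\mu$ intertwines $\rho|_{B_n,V_\mu}$ with $T_{\rm gen}$. Both were established in the proof of Proposition~\ref{prop:resonance-splitting}.
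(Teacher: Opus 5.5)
Your proposal is correct and follows essentially the same route as the paper. The paper likewise embeds $G_{\rm gen}$ in $H_\mu\times G_\mu$ through the splitting and takes $\mathcal N_\mu$ to be the kernel of the projection to $G_\mu$; it then identifies $\mathcal N_\mu$ with $\rho|_{B_n,V_\mu}\bigl(\ker(T_{\mu^{-1}}|_{B_n})\bigr)$ by injectivity of the first projection, from which the kernel criterion and the order assertions follow by elementary group theory.
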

\begin{proof}
Under the splitting, the generic image is the subgroup of
$H_\mu\times G_\mu$ consisting of pairs
$(\rho|_{B_n,V_\mu}(b),T_{\mu^{-1}}(b))$ for $b\in B_n$.
Projection to the second factor is onto. On its kernel, projection to
the first factor is injective, identifying
\[
 \mathcal N_\mu\simeq\rho|_{B_n,V_\mu}\bigl(\ker(T_{\mu^{-1}}|_{B_n})\bigr).
\]
This identification gives the stated criterion for a nontrivial kernel.
The group-order assertions follow from elementary group theory.
\end{proof}

\begin{remark}
This group extension is not asserted to split. A direct-sum decomposition
of representations embeds the image into a product; it need not identify
the image with that product.
\end{remark}

\begin{theorem}[Parameter independence for the braid restriction]
\label{thm:braid-dichotomy}
If $\rho(B_n)$ is finite, then
\[
 \{\lambda\in\C\setminus\{0,1\}:T_\lambda(B_n)\text{ is finite}\}
 =\C\setminus\{0,1\}\quad\text{or}\quad\varnothing.
\]
The first alternative holds exactly when $T_{\rm gen}(B_n)$ is finite.
\end{theorem}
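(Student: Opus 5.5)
The plan is to reduce everything to the single representation $T_{\rm gen}$ by a case split on whether $\lambda$ is exceptional. It then suffices to show that, for every $\lambda\in\C\setminus\{0,1\}$, $T_\lambda(B_n)$ is finite exactly when $T_{\rm gen}(B_n)$ is finite. The right-hand condition does not depend on $\lambda$, so the parameter set is either all of $\C\setminus\{0,1\}$ or empty. This also identifies which alternative occurs.

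\emph{Nonexceptional parameters.} For $\lambda\notin\mathcal E_\rho$, Proposition~\ref{prop:parameter}(i) gives $T_\lambda|_{B_n}\cong T_{\rm gen}$. Isomorphic representations have isomorphic images, so $T_\lambda(B_n)$ is finite if and only if $T_{\rm gen}(B_n)$ is. No finiteness hypothesis is used here.

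\emph{Exceptional parameters.} For $\lambda\in\mathcal E_\rho$, write $\lambda=\mu^{-1}$ with $\mu\in\Spec(P_n)\setminus\{1\}$. Since $\rho(B_n)$ is finite, Proposition~\ref{prop:resonance-splitting}(ii) supplies the braid-equivariant retraction $p_\mu$ by averaging. This yields the splitting
\[
T_{\rm gen}\cong\rho|_{B_n,V_\mu}\oplus T_{\mu^{-1}}|_{B_n}.
\]
The summand $\rho|_{B_n,V_\mu}$ is the restriction of the finite group $\rho(B_n)$ to an invariant subspace, so its image $H_\mu$ is a quotient of a subgroup of $\rho(B_n)$ and hence finite. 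The final assertion of Proposition~\ref{prop:resonance-splitting} then applies. Concretely, the image of a direct sum embeds in the product of the two summand images and surjects onto each of them. So if $G_\mu$ and $H_\mu$ are finite, $G_{\rm gen}$ is finite. Conversely, $G_\mu$ is a quotient of $G_{\rm gen}$; equivalently, Proposition~\ref{prop:image-kernel} gives the surjection $G_{\rm gen}\to G_\mu$ with kernel embedding in $H_\mu$. Hence $T_{\mu^{-1}}(B_n)$ is finite if and only if $T_{\rm gen}(B_n)$ is finite.

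Combining the two cases proves the dichotomy, and the first alternative holds precisely when $T_{\rm gen}(B_n)$ is finite. The only nontrivial step is the exceptional case: there $T_\lambda|_{B_n}$ is merely a quotient of $T_{\rm gen}$ (Proposition~\ref{prop:parameter}(ii)), which by itself shows only that finiteness passes from $T_{\rm gen}$ to $T_\lambda$. The reverse implication needs both the splitting and the finiteness of the complementary summand. Both are already provided by Propositions~\ref{prop:resonance-splitting} and~\ref{prop:image-kernel} under the hypothesis that $\rho(B_n)$ is finite, including when $P_n$ is not semisimple. So I expect no real obstacle beyond checking that the hypotheses of those propositions are met.
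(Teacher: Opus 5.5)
Your proposal is correct and follows the paper's proof: nonexceptional parameters are handled by Proposition~\ref{prop:parameter}(i), and exceptional parameters $\lambda=\mu^{-1}$ by the averaged retraction and splitting of Proposition~\ref{prop:resonance-splitting}(ii), whose complementary input summand has finite image. The paper likewise notes, as you do, that Proposition~\ref{prop:image-kernel} (surjection $G_{\rm gen}\to G_\mu$ with finite kernel) gives the same equivalence.
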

\begin{proof}
For $\lambda\notin\mathcal E_\rho$ (with $\lambda\ne0,1$),
Proposition~\ref{prop:parameter}(i) gives
$T_\lambda|_{B_n}\cong T_{\rm gen}$.
At $\lambda=\mu^{-1}$, Proposition~\ref{prop:resonance-splitting}
identifies $T_{\rm gen}$ with the sum of $T_{\mu^{-1}}|_{B_n}$
and a finite-image input summand. Their finiteness is therefore equivalent; equivalently use the
surjective homomorphism with finite kernel in Proposition~\ref{prop:image-kernel}.
\end{proof}

Examples~\ref{ex:resonance-kernel}--\ref{ex:braid-assumption} in
Appendix~\ref{sec:braid-examples} illustrate the preceding results and
the role of the finite braid-input assumption.

The splitting is a statement about $B_n$-representations. Its summands
need not be invariant under the output free generators. In particular,
Theorem~\ref{thm:braid-dichotomy} is consistent with the root-of-unity
restriction for finiteness of the full image in Corollary~\ref{cor:full-spectral}.

\subsection{Projective kernels at exceptional parameters}
\begin{proposition}\label{p:kernel}
Let $G\subset\GL(V_0\oplus W)$ be the image of a representation
preserving both summands, with $\dim W>0$. Define the restriction map
and its image by
\[
 \pi_W:G\longrightarrow\GL(W),\qquad
 \pi_W(g)=g|_W,\qquad G_W:=\pi_W(G).
\]
Thus $\pi_W:G\twoheadrightarrow G_W$ is the surjective homomorphism
induced by the action on $W$. Set $\mathcal N=\ker\pi_W$,
\[
 Z=G\cap\C^\times I_{V_0\oplus W},\qquad
 Z_W=G_W\cap\C^\times I_W.
\]
Write $\PG G$ and $\PG G_W$ for the images of $G$ and $G_W$ in
$\operatorname{PGL}(V_0\oplus W)$ and $\operatorname{PGL}(W)$,
respectively. Equivalently, these are the quotient groups
\[
 \PG G=G/Z,\qquad \PG G_W=G_W/Z_W,
\]
where scalar matrices belonging to each group are identified with the
identity. The map between the projective images is induced by
restriction: $gZ\mapsto\pi_W(g)Z_W$.
Then $\pi_W|_Z$ is injective, and there are exact sequences
\[
 1\longrightarrow K_{\rm proj}\longrightarrow \PG G
   \longrightarrow\PG G_W\longrightarrow1,
 \qquad K_{\rm proj}=\pi_W^{-1}(Z_W)/Z,
\]
\[
 1\longrightarrow \mathcal N\longrightarrow K_{\rm proj}
   \longrightarrow Z_W/\pi_W(Z)\longrightarrow1.
\]
In the finite case,
\[
 \frac{|\PG G|}{|\PG G_W|}
 =|\mathcal N|\frac{|Z_W|}{|Z|}.
\]
\end{proposition}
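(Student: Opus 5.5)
The plan is to argue directly by elementary group theory, working inside $G$ with the two central subgroups $Z\subset G$ and $Z_W\subset G_W$. Injectivity of $\pi_W|_Z$ comes first: a scalar $cI_{V_0\oplus W}$ restricts to $cI_W$, and since $\dim W>0$ this restriction determines $c$. Hence $\pi_W|_Z$ is injective and $\pi_W(Z)\subseteq Z_W$.

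For the first exact sequence, consider the composite $G\twoheadrightarrow G_W\twoheadrightarrow G_W/Z_W=\PG G_W$. It is surjective, and its kernel is $\pi_W^{-1}(Z_W)$, which contains $Z$ because $\pi_W(Z)\subseteq Z_W$. The map therefore factors through $G/Z=\PG G$ and agrees with $gZ\mapsto\pi_W(g)Z_W$, which is thus well defined. It is surjective, with kernel $\pi_W^{-1}(Z_W)/Z=K_{\rm proj}$; normality of $Z$ is automatic since scalars are central. This gives the first sequence.

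For the second sequence, restrict $\pi_W$ to $\pi_W^{-1}(Z_W)$. This surjects onto $Z_W$ because $Z_W\subseteq G_W=\pi_W(G)$. Compose with $Z_W\to Z_W/\pi_W(Z)$ to get a surjection $\psi:\pi_W^{-1}(Z_W)\to Z_W/\pi_W(Z)$. Its kernel is $\pi_W^{-1}(\pi_W(Z))=\mathcal N Z$, and $Z\subseteq\ker\psi$, so $\psi$ descends to a surjection $K_{\rm proj}\to Z_W/\pi_W(Z)$ with kernel $\mathcal N Z/Z$.

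The one step needing care is identifying this kernel with $\mathcal N$, which I would do via $\mathcal N Z/Z\cong\mathcal N/(\mathcal N\cap Z)$. Suppose $z\in\mathcal N\cap Z$. Then $z$ is a scalar acting as the identity on $W$, so the scalar is $1$ (again using $\dim W>0$) and $z=1$. Hence $\mathcal N\cap Z$ is trivial, and $\mathcal N$ embeds in $K_{\rm proj}$ as the stated kernel.

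In the finite case the order formula follows by multiplying the two sequences:
\[
 |\PG G|=|K_{\rm proj}|\,|\PG G_W|,\qquad |K_{\rm proj}|=|\mathcal N|\,|Z_W|/|\pi_W(Z)|,
\]
and $|\pi_W(Z)|=|Z|$ by the injectivity shown at the start.
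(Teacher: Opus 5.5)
Your proof is correct and follows the same route as the paper's: injectivity of restriction on scalars, the induced surjection $\PG G\to\PG G_W$ with kernel $\pi_W^{-1}(Z_W)/Z$, the observation $\mathcal N\cap Z=1$, the map from $K_{\rm proj}$ onto $Z_W/\pi_W(Z)$ whose kernel is the image of $\mathcal N$, and then counting. You also make explicit steps the paper leaves implicit, namely $\pi_W^{-1}(\pi_W(Z))=\mathcal N Z$ and $\mathcal N Z/Z\cong\mathcal N/(\mathcal N\cap Z)$.
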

\begin{proof}
A scalar on the full space restricts to the same scalar on $W$. Since
$W\ne0$, this restriction is injective on scalar matrices. Projection
therefore induces a surjection of projective groups with the indicated
kernel. Also $\mathcal N\cap Z=1$, so $\mathcal N$ embeds in that kernel. The map induced
by $\pi_W$ from the kernel to $Z_W/\pi_W(Z)$ is onto and has precisely the image
of $\mathcal N$ as kernel. Counting elements gives the formula.
\end{proof}
For Proposition~\ref{prop:resonance-splitting}, provided the
quotient $Q_{\mu^{-1}}$ is nonzero, take
$V_0=j_\mu(V_\mu)$, $W=\ker r_\mu\simeq Q_{\mu^{-1}}$, and
$G=G_{\rm gen}$. Then $G_W\simeq G_\mu$ and the kernel $\mathcal N$
identifies with $\mathcal N_\mu$ of Proposition~\ref{prop:image-kernel}.

\subsection{Equivalence of linear and projective finiteness}
We now relate finiteness of the ordinary and projective images
for finite full input, using the projective-image notation introduced in
Proposition~\ref{p:kernel}.

\begin{lemma}\label{lem:projective}
Let $G\subset\GL_d(\C)$ be a subgroup, with $d>0$. Its determinant image is
\[
 \det(G):=\{\det(g):g\in G\}\subset\C^\times.
\]
Suppose this multiplicative group is finite of order $c_0$; that is,
the determinants of the matrices in $G$ take exactly $c_0$ distinct values. Then $G\cap\C^\times I_d$ has at most $dc_0$ elements. In particular,
$G$ is finite if and only if its projective image $\mathbb PG$ is finite.
\end{lemma}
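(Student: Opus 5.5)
The plan is to bound the scalar subgroup by its determinants and then pass between $G$ and $\mathbb P G$ through the short exact sequence defined by the scalars.

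\emph{Bounding the scalars.} Put $Z:=G\cap\C^\times I_d$. For a scalar $cI_d\in Z$ we have $\det(cI_d)=c^d$, and this value lies in $\det(G)$, a finite group of order $c_0$. Hence $c^{dc_0}=1$, so every such $c$ is a $dc_0$-th root of unity. Since $cI_d\mapsto c$ is injective, $|Z|\le dc_0$. A slightly sharper count is available: the map $Z\to\det(G)$, $cI_d\mapsto c^d$, has kernel contained in the $d$-th roots of unity, which gives $|Z|\le d\,c_0$ directly. Either count suffices.

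\emph{Finiteness equivalence.} By the definition recalled in Proposition~\ref{p:kernel}, $\mathbb PG=G/Z$. If $G$ is finite, its quotient $\mathbb PG$ is finite. Conversely, if $\mathbb PG$ is finite, then $|G|=|Z|\,|\mathbb PG|\le dc_0|\mathbb PG|<\infty$.

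\emph{Main obstacle.} There is essentially none. The only point requiring care is that the determinant hypothesis is exactly what excludes an infinite scalar subgroup. Without it, a group such as $\{2^kI_d\}$ has infinite image but trivial projective image. In the intended applications, a finite full input makes the output determinants roots of unity with $\lambda$ a root of unity. Verifying that belongs to the later use of the lemma, not to this proof.
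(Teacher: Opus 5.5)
Your proof is correct and follows the paper's argument. In both, $z^d\in\det(G)$ forces $z^{dc_0}=1$, so the scalar subgroup $Z=G\cap\C^\times I_d$, which is the kernel of $G\to\mathbb PG=G/Z$, has at most $dc_0$ elements, and finiteness then passes between $G$ and $\mathbb PG$ (your alternative count via $cI_d\mapsto c^d$ gives the same bound rather than a sharper one, but it is also valid).
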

\begin{proof}
For $zI_d\in G$ one has $z^d\in\det(G)$, hence $z^{dc_0}=1$.
The kernel of the map to the projective image is thus finite.
\end{proof}

\begin{proposition}\label{prop:projective-klm}
For finite full input $\rho:\Gamma\to\GL(V)$, the determinant image
of every $T_\lambda|_{B_n}$, $\lambda\ne0,1$, is finite. Consequently
finiteness of the linear braid image is equivalent to finiteness of
its projective image.
\end{proposition}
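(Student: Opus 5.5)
The plan is to show that $\det T_\lambda(\sigma_i)$ takes finitely many values, and then apply Lemma~\ref{lem:projective}. Since all $\sigma_i$ are conjugate in $B_n$, their images have the same determinant $\delta_\lambda$. The determinant of $T_\lambda|_{B_n}$ is therefore a character of $B_n$ factoring through the abelianization $\Z$, and its image is the cyclic group generated by $\delta_\lambda$. It suffices to show that $\delta_\lambda$ is a root of unity.

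First, compute the determinant on the intermediate space. From \eqref{eq:Si}, $\det S_i=\det(s_i)^n\det\Theta_i$. The block $\Theta_i$ has the form $\begin{pmatrix}0&g_i\\ I&X\end{pmatrix}$, so a block column swap gives $\det\Theta_i=\pm\det g_i$, with sign $(-1)^N$. Since $\rho$ has finite full image, $\det s_i$ and $\det g_i$ are roots of unity, and hence so is $\det S_i$.

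Second, pass to the quotient. The subspace $K+L_\lambda$ is $S_i$-invariant, so
\[
 \det S_i=\det T_\lambda(\sigma_i)\cdot\det\bigl(S_i|_{K+L_\lambda}\bigr).
\]
I will show that $S_i$ restricted to $K$ and to $L_\lambda$ has root-of-unity determinant, using $K\cap L_\lambda=0$ from Lemma~\ref{lem:boundary-multiplication}.

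On $L_\lambda=F(\ker(\lambda P_n-I_N))$, the relation $S_iF=Fs_i$ shows that $S_i$ acts as $s_i$ restricted to the $s_i$-invariant subspace $\ker(\lambda P_n-I_N)$. The invariance holds because $s_i$ commutes with $P_n$. Since $s_i$ has finite order, this restriction has root-of-unity determinant.

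For $K$, I expect the main obstacle: the restriction of $S_i$ to $K$ is not obviously of finite order. I would handle it by the same device one level up. The subspace $K$ is braid-invariant, and $S_i$ permutes the coordinate pieces $\ker(g_i-I)$ and $\ker(g_{i+1}-I)$ up to the maps $s_i$ and $s_ig_i$. More precisely, by the computation in the proof of Lemma~\ref{lem:boundary-multiplication}, $S_i$ sends $(v_i,v_{i+1})\in\ker(g_i-I)\oplus\ker(g_{i+1}-I)$ to $(s_ig_iv_{i+1},\,s_iv_i)$. This uses that $(I-g_{i+1})v_{i+1}=0$. So on $K$, $S_i$ is a block-monomial matrix whose blocks are restrictions of the finite-order elements $s_i$ and $s_ig_i$ to appropriate subspaces. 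The remaining coordinates are acted on by $s_i$, which preserves $\ker(g_j-I)$ for $j\ne i,i+1$.

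Restricting further to $K$ itself, $S_i|_K$ lies in a finite group. It is the image of $\sigma_i$ under the braid representation on $K$, which is a direct sum of pieces built from $\rho(\Gamma)$ restricted to fixed subspaces, permuted by $\mathfrak S_n$. Hence its determinant is a root of unity. Alternatively, one can note directly that $(S_i|_K)^2$ acts blockwise by $s_ig_is_i$ and $s_i^2g_i$-type finite-order elements, and compute the determinant from those blocks.

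Combining the three facts, $\delta_\lambda$ is a quotient of roots of unity, hence a root of unity. Lemma~\ref{lem:projective} then gives the equivalence of linear and projective finiteness. The zero-dimensional output is handled trivially.
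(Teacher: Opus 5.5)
Your proposal is correct and follows essentially the same route as the paper. You compute $\det S_i=(-1)^N\det(s_i)^n\det(g_i)$, divide by the root-of-unity determinants of $S_i$ on $K$ (block-monomial with blocks from the finite input group) and on $L_\lambda=F(\ker(\lambda P_n-I_N))$ (via $S_iF=Fs_i$) using $K\cap L_\lambda=0$, and then apply Lemma~\ref{lem:projective}; your squaring computation on $K$ and your use of the abelianization of $B_n$ only refine the paper's remarks that the image on $K$ is finite and that finitely many roots of unity generate a finite group.
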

\begin{proof}
The determinant before taking the quotient is
\[
 \det S_i=(-1)^{\dim V}\det(s_i)^n\det(g_i),
\]
a root of unity. On $K$, the action is the restriction of block-monomial
matrices whose nonzero blocks belong to the finite input group. Thus
its image on $K$ is finite. On
$L_\lambda=F(\ker(\lambda P_n-I_N))$, the relation $S_iF=Fs_i$ identifies
the braid action with an input subrepresentation, so this image too is
finite. Since $K\cap L_\lambda=0$, multiplicativity of determinant on
a subspace and quotient gives
\[
 \det T_\lambda(\sigma_i)
 =\frac{\det S_i}{\det(S_i|_K)\det(S_i|_{L_\lambda})}.
\]
Each quotient is a root of unity, and the finitely many such roots
generate a finite determinant group. Apply Lemma~\ref{lem:projective}.
Zero-dimensional outputs have trivial image.
\end{proof}
\begin{remark}[Parameter independence of projective finiteness]\label{rem:projective-parameter-independence}
For finite full input, Proposition~\ref{prop:projective-klm} and
Theorem~\ref{thm:braid-dichotomy} imply that finiteness of the
projective braid image is independent of
$\lambda\in\C\setminus\{0,1\}$.
\end{remark}

\paragraph{Free, pure, and full images.}
\begin{proposition}\label{p:full}
Suppose $\rho:\Gamma\to\GL(V)$ has finite full image. For every
$\lambda\in\C\setminus\{0,1\}$, the following conditions are equivalent:
\[
\begin{array}{ll}
 T_\lambda(F_n)\text{ is finite}, &
 \PG T_\lambda(F_n)\text{ is finite},\\
 T_\lambda(\Gamma_{\rm pure})\text{ is finite}, &
 \PG T_\lambda(\Gamma_{\rm pure})\text{ is finite},\\
 T_\lambda(\Gamma)\text{ is finite}, &
 \PG T_\lambda(\Gamma)\text{ is finite}.
\end{array}
\]
Here $\Gamma_{\rm pure}=F_n\rtimes PB_n$.
Zero-dimensional outputs are understood to have trivial ordinary and
projective image. If the restriction $\rho|_{F_n}$ is nontrivial, any of these conditions
forces $\lambda$ to be a root of unity.
\end{proposition}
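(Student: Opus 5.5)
The plan is to prove the six equivalences in two steps. First we link the three linear conditions. Then, for each of the three groups, we link ordinary finiteness to projective finiteness through Lemma~\ref{lem:projective}, which needs only that the determinant image be finite.

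\emph{Linear conditions.} Corollary~\ref{thm:full} gives that $T_\lambda(F_n)$ is finite if and only if $T_\lambda(\Gamma)$ is finite. The index bound $[T_\lambda(\Gamma):T_\lambda(\Gamma_{\rm pure})]\le n!$, used in Corollary~\ref{cor:full-spectral}, gives that $T_\lambda(\Gamma)$ is finite if and only if $T_\lambda(\Gamma_{\rm pure})$ is finite. The final assertion, that any of these conditions forces $\lambda$ to be a root of unity when $\rho|_{F_n}$ is nontrivial, is Proposition~\ref{prop:nonroot} applied to the free restriction. It transfers to all six conditions once they are shown equivalent.

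\emph{Projective conditions.} For each $J\in\{F_n,\Gamma_{\rm pure},\Gamma\}$, finite ordinary image trivially implies finite projective image. For the converse, it suffices to show that $\det T_\lambda(\Gamma)$ is a finite group, since it contains the determinant images of the two subgroups.
\begin{itemize}
\item The braid generators are handled exactly as in Proposition~\ref{prop:projective-klm}.
\item For the free generators, $A_i-I_{nN}=\iota_ie_i$ is nonzero only in block row $i$. So $A_i$ is block triangular with diagonal blocks $I_N$ (for $j\ne i$) and $\lambda g_i$. Hence $\det A_i=\lambda^N\det g_i$.
\item On $K$, $A_i$ acts on the summand $\ker(g_i-I_N)$ by $\lambda$ (by the $e_i$ formula with $v\in K$) and trivially on the other summands. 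So $\det(A_i|_K)=\lambda^{\dim\ker(g_i-I_N)}$.
\item On $L_\lambda$, $A_i$ acts trivially.
\end{itemize}
Since $K\cap L_\lambda=0$, we get
\[
\det T_\lambda(x_i)=\lambda^{N-\dim\ker(g_i-I_N)}\det g_i .
\]
This is finite order when $\lambda$ is a root of unity. Zero-dimensional outputs are trivial and need no argument.

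\emph{Main obstacle.} The difficulty is the case where $\lambda$ is not a root of unity. There the determinant argument fails, and we must show directly that the projective images are infinite whenever the input on $F_n$ is nontrivial. (If the input on $F_n$ is trivial, the output is zero.)
\begin{itemize}
\item If the output has dimension at least two, I would use the proof of Proposition~\ref{prop:nonroot}. It produces an eigenvalue $\lambda\mu$ of $T_\lambda(x_i)$ with $\mu$ a root of unity. Some other eigenvalue of the same operator is a root of unity: $\mu'$, $\lambda\mu'$ for a different eigenvalue of $g_i$, or $1$ from coordinates $j\ne i$. The ratio of the two eigenvalues then has infinite order, so the image of $x_i$ in $\operatorname{PGL}$ has infinite order.
\item The one-dimensional case must be treated separately. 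By Lemma~\ref{lem:simple}, the output is then a single character with one active generator acting by $\lambda a$, and its projective image is trivial. Here the asserted equivalence fails as stated. I would therefore read the proposition under the convention that one-dimensional outputs are governed by ordinary finiteness, or else restrict the projective clauses to output dimension at least two, and check this against the intended scope.
\end{itemize}
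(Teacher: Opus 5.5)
Your treatment of the linear conditions (Corollary~\ref{thm:full} together with the index bound $n!$) matches the paper. So does your determinant formula $\det T_\lambda(x_i)=\lambda^{N-\dim\ker(g_i-I_N)}\det g_i$ for roots of unity. The gap is in the case where $\lambda$ is not a root of unity, and your conclusion that the statement ``fails as stated'' there is wrong.

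You never use the braid part of the input. The Artin relation gives $s_ig_is_i^{-1}=g_{i+1}$, so all $g_j$ are conjugate. Hence $r_0=\rank(g_j-I_N)$ is independent of $j$, and $r_0>0$ for \emph{every} $j$ as soon as $\rho|_{F_n}$ is nontrivial. For $\lambda$ not a root of unity, the finite order of $P_n$ gives $L_\lambda=0$, so $\dim Q_\lambda=nr_0\ge n\ge2$. Thus a one-dimensional output cannot occur at such $\lambda$. Your one-dimensional ``counterexample'' is exactly Example~\ref{ex:free-projective}, an input on $F_n$ alone. It does not extend to $\Gamma$, because $g_1=-1$ and $g_2=1$ are not conjugate. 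No reinterpretation or restriction of the projective clauses is needed.

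The same missing fact is what your dimension-at-least-two argument needs, and as written that argument does not work. On $V^{\oplus n}/K$, the eigenvalues of $T_\lambda(x_i)$ are $1$ and the numbers $\lambda\mu$ with $\mu\ne1$ an eigenvalue of $g_i$. A second eigenvalue $\lambda\mu'$ gives the ratio $\mu/\mu'$, which is a root of unity. A root-of-unity eigenvalue $\mu'\ne1$ does not occur at all. So only the eigenvalue $1$ helps, and output dimension at least two does not guarantee it. For the $F_2$-input $g_1=-I_2$, $g_2=I_2$, the output is two-dimensional, $T_\lambda(x_1)=-\lambda I_2$ is scalar, and $T_\lambda(x_2)$ is trivial. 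The projective image is therefore trivial while the ordinary image is infinite.

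With the conjugacy, the argument closes. The operator $T_\lambda(x_i)-I$ has image in the $i$th coordinate quotient $V/\ker(g_i-I_N)$, so its rank is at most $r_0<nr_0=\dim Q_\lambda$. Hence $1$ is an eigenvalue alongside $\lambda\mu$. A scalar power $T_\lambda(x_i)^k$ with $k\ne0$ would then force $(\lambda\mu)^k=1$, which is impossible. So $\PG T_\lambda(F_n)$ is infinite, and with it all three projective images. This is the paper's argument.
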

\begin{proof}
If all $g_j=I_N$, the output is zero. Otherwise, the Artin relations \eqref{eq:artin-action} give
$s_i g_i s_i^{-1}=g_{i+1}$, where $s_i=\rho(\sigma_i)$.
Thus the matrices $g_1,\ldots,g_n$ are conjugate to one another by elements
of $\rho(B_n)$, and $r_0=\rank(g_j-I_N)>0$ is independent of $j$.
Suppose first that $\lambda$ is not a root of unity. The finite order of
$P_n=g_1\cdots g_n$ gives $L_\lambda=0$, so
\[
 Q_\lambda=V^{\oplus n}/K,\qquad \dim Q_\lambda=nr_0.
\]
For an induced free generator, the image of $T_\lambda(x_i)-I_{nr_0}$ lies in the $i$th
coordinate quotient and has dimension at most $r_0$. Since $nr_0>r_0$, the
induced generator has eigenvalue $1$. Also, $g_i$ has an eigenvalue
$\mu\ne1$, a root of unity. An associated eigenvector supported in
coordinate $i$ survives modulo $K$ and is an eigenvector of $A_i$ with
eigenvalue $\lambda\mu$. Consequently $A_i$ has infinite projective order:
a scalar power would force $(\lambda\mu)^k=1$ for some $k>0$.
Both projective images are therefore infinite.

Now suppose $\lambda$ is a root of unity. On $K$, $A_i$ acts by $\lambda$
on the $i$th local fixed space and by the identity on the other coordinates;
it fixes $L_\lambda$ pointwise. Since $K\cap L_\lambda=0$, multiplicativity
of determinants gives
\[
 \det T_\lambda(x_i)=\lambda^{r_0}\det g_i.
\]
All these determinants are roots of unity. The braid determinants are
roots of unity by Proposition~\ref{prop:projective-klm}. The determinant images
of both generated groups are finite. Lemma~\ref{lem:projective} therefore
identifies ordinary and projective finiteness for each group. Corollary~\ref{thm:full}
identifies ordinary free and full finiteness. Finally, the inclusions
\[
 \begin{aligned}
 T_\lambda(F_n)&\subseteq T_\lambda(\Gamma_{\rm pure})
                 \subseteq T_\lambda(\Gamma),\\
 \PG T_\lambda(F_n)&\subseteq \PG T_\lambda(\Gamma_{\rm pure})
                 \subseteq \PG T_\lambda(\Gamma)
 \end{aligned}
\]
show that the two pure-image conditions are equivalent to these four
conditions as well.
\end{proof}
See Example~\ref{ex:free-projective} for the necessity of the braid extension.

\section{Discussion}\label{sec:discussion}

\begin{remark}[Comparison with parabolic cohomology]\label{rem:parabolic-comparison}
With the notation of Corollary~\ref{cor:angles}, augment the tuple by
$\tau(\lambda)I_N$ and $(\tau(\lambda)P_n)^{-1}$.
Its product is $I_N$, and $\tau(\lambda)\ne1$ eliminates common fixed
vectors. The sums of eigenangles of this augmented tuple and its
inverses are respectively $m(l)+N-r(l)$ and $R+N-m(l)$, counting the
inverse angle of $0$ as $0$. Theorem~2.3 of \cite{DW} therefore gives
$(m(l)-r(l),R-m(l))$. This is the reverse of the signature
of the canonical KLM form in Corollary~\ref{cor:angles}. We use only
this numerical agreement and do not claim that the canonical KLM form
is identified with the parabolic-cohomology pairing of \cite{DW}
under the Haraoka--Long correspondence.
\end{remark}

Unitary representations of braid groups also provide a mathematical
framework for quantum gates in topological quantum computation, where
braiding non-Abelian anyons acts on a computational state space
\cite{DRW2016,Nayak2008}. Finite-image criteria identify when the
reachable operations form a finite group. The braid-group representations
arising from Ising anyons provide familiar examples with finite image;
see \cite{AGW}.
For the stabilizer description of Clifford operations, see \cite{Gottesman1998}.
The unitary representation families and quantum-gate applications
are treated in a follow-up manuscript in preparation. They require a
positive definite invariant form, but need not have finite image;
using free generators in addition to braid operations requires
additional implementations. These applications are separate from the
finite-image criteria proved here.

\section*{Acknowledgements}
The author thanks Rapha\"el Belliard, Yoshiaki Goto, Toshio Oshima,
and Greyson Potter for helpful discussions.
This work was supported by JST SPRING, Grant Number JPMJSP2109.

\section*{Statements and declarations}
\paragraph{Use of AI-assisted tools.}
ChatGPT (OpenAI) was used to assist with drafting and revision,
literature comparison, exploring and checking proof arguments, and
preparing symbolic and exact-arithmetic checks. Claude (Anthropic)
was used for simulated referee reports and auxiliary checks.
AI-assisted checks supplement the mathematical arguments and do not
replace their proofs. The author is responsible for the mathematical
content, references, and final presentation.

\appendix

\section{Examples ordered by the results they illustrate}\label{sec:examples}
The examples below follow the order of the corresponding results in
Sections~\ref{sec:free}, \ref{sec:full}, \ref{sec:parameters}, and
\ref{sec:resonance}. Each example identifies its main results and,
where relevant, additional related statements.

\subsection{Free-group finite-image criteria}\label{sec:free-examples}
\begin{example}[Definite conjugate forms without integrality]\label{ex:nonintegral}
\noindent\textbf{Related results:} Section~\ref{sec:free}, Theorem~\ref{thm:free} and Lemma~\ref{lem:lattice} (the arithmetic hypothesis).

This example shows why definiteness at every coefficient embedding
must be accompanied by an arithmetic assumption to force finite image.
Take $n=2$, $g_1=-1$, $g_2=1$ and $\lambda=(3+4i)/5$.
The input representation of $F_n$ has finite image, but the one-dimensional output has
$T_\lambda(x_1)=-\lambda$ and $T_\lambda(x_2)=1$.
The primitive irreducible polynomial $5X^2-6X+5$ shows that $\lambda$
is not an algebraic integer, hence not a root of unity, so the image is
infinite. For $\lambda=e^{2\pi il}$ with $0<l<1$, the quotient form
in the first coordinate is $4\cos(\pi l)=8/\sqrt5$. At the conjugate
embedding it is $-8/\sqrt5$. Both are definite. Thus definiteness at
all embeddings does not replace the arithmetic assumption in
Theorem~\ref{thm:free}.
\end{example}

\begin{example}[A finite reducible output with indefinite canonical form]\label{ex:reducible-indefinite}
\noindent\textbf{Related results:} Section~\ref{sec:free}, Theorem~\ref{thm:free} (constituentwise definiteness); Section~\ref{sec:full}, Theorem~\ref{thm:unitary-lift} (sign correction).

This example shows why the finite-image criterion tests definiteness
on each convolved constituent rather than on the whole canonical form.
Take $n=2$, $g_1=g_2=g=\operatorname{diag}(i,-i)$, $s_1=I_2$, and
$\lambda=-1$. Then $K=0$ and $L=\{(gv,v):v\in\C^2\}$.
The quotient map $\pi(v_1,v_2)=v_1-gv_2$ identifies all three induced
operators $A_1,A_2,S_1$ with $-g$; the full output is cyclic of order four.
Each quotient class has a unique representative $(w,0)$ with
$w\in\C^2$. When the input form is represented by $I_2$, the canonical quotient
form \eqref{eq:canonical-quotient-form} is represented by the Hermitian
matrix $\operatorname{diag}(2,-2)$ in these coordinates.
Changing the sign on the negative input component gives $2I_2$.
Thus total canonical definiteness is not necessary for finite reducible
output; componentwise definiteness is the correct condition.
\end{example}

\begin{example}[A non-rigid finite input detected as infinite after middle convolution]\label{ex:nonrigid-infinite}
\noindent\textbf{Related results:} Section~\ref{sec:free}, Theorem~\ref{thm:free} and Corollary~\ref{cor:angles} (infinite output without a rigidity assumption).

This example shows how an indefinite canonical form detects infinite
output from a non-rigid finite input, without a braid extension.
On $\C^2$, put
\[
 a=\begin{pmatrix}0&1\\1&0\end{pmatrix},\qquad
 b=\begin{pmatrix}1&0\\-1&-1\end{pmatrix},\qquad
 (g_1,g_2,g_3,g_4)=(a,a,b,b).
\]
The relations $a^2=b^2=(ab)^3=I_2$ give the irreducible two-dimensional
representation of $\mathfrak S_3$, so the input is finite. The positive
form $\left(\begin{smallmatrix}2&1\\1&2\end{smallmatrix}\right)$ is
invariant. Recall that the rigidity index of a product-one tuple
$(h_1,\ldots,h_r)\in\GL_N(\C)^r$ is
$(2-r)N^2+\sum_{j=1}^r\dim Z(h_j)$, where
$Z(h_j)=\{X\in M_N(\C):Xh_j=h_jX\}$ is its matrix centralizer
\cite{Katz}. For irreducible tuples, index $2$ is the rigid case.
Here $P_4=I_2$, so the local monodromy at infinity is trivial and may be
omitted from the tuple. The remaining four matrices have centralizer
dimensions $2,2,2,2$, giving index $(2-4)\cdot4+8=0$; hence the
tuple is not rigid. At $\lambda=-1$, each local spectrum is $\{1,-1\}$ and
the eigenangles of $\lambda P_4$ are $1/2,1/2$. Hence
\[
 R=4,\qquad r=0,\qquad m=2,
\]
and the quotient has signature $(2,2)$. It is irreducible, so its
free-group image is infinite. This is a tuple-level example of the
criterion without rigidity; no braid extension is asserted here.
\end{example}

\subsection{Lifting unitarizability and full-image finiteness}\label{sec:full-examples}

\begin{example}[The lifting theorem with infinite unitary input]\label{ex:infinite-unitary-input}
\noindent\textbf{Related results:} Section~\ref{sec:full}, Theorem~\ref{thm:unitary-lift} (infinite unitary input).

This example illustrates positive unitarizability lifting to the full
output for infinite input, without implying finiteness of that output.
Let $a\in(0,1/3)$ be irrational, put $t=e^{2\pi ia}$, and take
$n=2$, $g_1=g_2=t$, $s_1=1$, $\lambda=t$.
The full input is unitary and infinite. The single input constituent
has $R=2$, $r=0$ and $m=3a-\{3a\}=0$, so its output is positively
unitarizable by Theorem~\ref{thm:unitary-lift}. Its free generator has
eigenvalue $t^2$ of infinite order, so the full output is infinite.
This illustrates the scope of the structural theorem beyond finite inputs.
\end{example}

\begin{example}[A braid permutes distinct input types]\label{ex:braid-permutation}
\noindent\textbf{Related results:} Section~\ref{sec:full}, Theorem~\ref{thm:unitary-lift} and Remark~\ref{rem:braid-components} (permutation of input types).

This example illustrates the permutation of input isotypic components
by the braid action and the compatibility of their output signs.
Take $g_1=\operatorname{diag}(i,1)$,
$g_2=\operatorname{diag}(1,i)$, and
$s_1=\begin{psmallmatrix}0&1\\1&0\end{psmallmatrix}$.
The input has finite image. At $\lambda=-1$, $L=0$ and the quotient
coordinates are the first coordinate of $v_1$ and the second of $v_2$.
The output generators are
\[
 T_{-1}(x_1)=\operatorname{diag}(-i,1),\qquad
 T_{-1}(x_2)=\operatorname{diag}(1,-i),\qquad
 T_{-1}(\sigma_1)=\begin{pmatrix}0&1\\1&0\end{pmatrix}.
\]
The full image has order $32$, namely $(\Z/4\Z\times\Z/4\Z)\rtimes(\Z/2\Z)$.
The two convolved components have the same positive sign and are
exchanged by the braid action.
\end{example}

\begin{example}[Finiteness of $\rho(F_n)$ does not imply finiteness of $T_\lambda(\Gamma)$]\label{ex:full-input-assumption}
\noindent\textbf{Related results:} Section~\ref{sec:full}, Corollary~\ref{thm:full} (the finite full-input hypothesis), compared with Theorem~\ref{thm:unitary-lift}.

This example shows that finite free input and finite free output do not
ensure finite full output without control of the input braid action.
Take $n=2$, $g_1=g_2=i$, $s_1=2$, and $\lambda=-1$. The relations \eqref{eq:artin-action} hold for these matrices, but the full input is infinite and is not unitarizable.
Here $K=0$, $L=\C(i,1)$, and the quotient coordinate $v_1-i v_2$ gives
\[
 T_\lambda(x_1)=T_\lambda(x_2)=-i,
 \qquad T_\lambda(\sigma_1)=-2i.
\]
The free output is finite of order four, while the full output is
infinite. This explains the full-input assumption in
Corollary~\ref{thm:full} and does not contradict
Theorem~\ref{thm:unitary-lift}, whose input must be unitarizable.
Alternatively, replace $s_1=2$ by $s_1=e^{2\pi i\alpha}$ with
$\alpha\in\RR\setminus\Q$. The full input is then unitary, but the output
braid generator $-ie^{2\pi i\alpha}$ still has infinite order.
Thus finite full input in Corollary~\ref{thm:full} cannot be replaced
by positive unitarizability of the full input.
\end{example}

\subsubsection{A scalar specialization with finite full image}\label{sec:scalar-example}
\noindent\textbf{Related results:} Section~\ref{sec:full}, Corollary~\ref{cor:full-spectral}; Section~\ref{sec:resonance}, Proposition~\ref{prop:resonance-splitting} and Theorem~\ref{thm:braid-dichotomy}; Section~\ref{sec:free}, Proposition~\ref{prop:nonroot}.

This example follows the construction from a one-dimensional finite
input to a two-dimensional unitary output with certified finite image.
It also illustrates how the braid image can remain finite for every
parameter while the full image becomes infinite away from roots of
unity. The finite-group calculation also illustrates the splitting at exceptional parameters of Section~\ref{sec:resonance}.

Take $n=3$, $V=\C$, $g_1=g_2=g_3=i$, $s_1=s_2=1$ and $\lambda=i$.
The input is cyclic of order four. For the two embeddings of $\Q(i)$,
the data $(R,r,m)$ are $(3,1,1)$ and $(3,1,3)$, giving signatures
$(2,0)$ and $(0,2)$. Thus Corollary~\ref{cor:full-spectral} certifies
finiteness of the full image.

Here $K=0$ and $L=\C(-1,i,1)^t$. The quotient map
\[
 \pi(v_1,v_2,v_3)=(v_1+v_3,v_2-i v_3)
\]
gives the following induced matrices; within this example we use
$S_i$ and $A_j$ also for the operators on the quotient:
\begin{align*}
 S_1&=\begin{pmatrix}0&i\\1&1-i\end{pmatrix},&
 S_2&=\begin{pmatrix}1&1\\0&-i\end{pmatrix},\\
 A_1&=\begin{pmatrix}-1&i-1\\0&1\end{pmatrix},&
 A_2&=\begin{pmatrix}1&0\\-1-i&-1\end{pmatrix},&
 A_3&=\begin{pmatrix}-i&-1-i\\-1+i&i\end{pmatrix}.
\end{align*}
All preserve the positive form
\[
 H=\begin{pmatrix}2&1-i\\1+i&2\end{pmatrix}.
\]
Let $(u,v)=\pi(v_1,v_2,v_3)$ be the quotient coordinates. Under
$X=u+(1-i)v/2$, $Y=(1+i)v/2$, the form is
$2(|X|^2+|Y|^2)$ and the induced braid matrices $S_1,S_2$ become,
respectively,
\begin{equation}\label{eq:scalar-braid-unitary}
 \widehat S_1=\frac12\begin{pmatrix}1-i&1+i\\1+i&1-i\end{pmatrix},\qquad
 \widehat S_2=\begin{pmatrix}1&0\\0&-i\end{pmatrix}.
\end{equation}
This is a classical finite Burau specialization: the matrices
$C^{-1}S_iC$ for $i=1,2$, where
$C=\left(\begin{smallmatrix}1&i\\-1&1-i\end{smallmatrix}\right)$,
are the reduced Burau matrices in the convention of \cite[Proposition 3.1]{FunarKohno}, where this
braid image has order $96$ and presentation
$\langle u,v\mid uvu=vuv,\ u^4=v^4=1\rangle$.
It is therefore $G_8$ in the Shephard--Todd notation
\cite[Section 2.1]{Chavli}. Moreover, direct
multiplication gives
\[
 A_1=S_1^2S_2^2S_1^2,\qquad A_2=S_1^{-1}S_2^2S_1,
 \qquad A_3=S_1S_2S_1^{-1}S_2S_1^{-1}S_2.
\]
Thus the free matrices already lie in the braid image, and the full
image has the same order.

For the same fixed input, the intermediate braid matrices fix
$v_0=(-1,i,1)^t$ and preserve $\ell=(1,1,1)$, with $\ell v_0=i$.
Thus the generic braid representation splits as
\[
 T_{\rm gen}\simeq\mathbf1\oplus\beta,
 \qquad \beta:B_3\to\GL_2(\C),\quad\beta(\sigma_i)=\widehat S_i\ (i=1,2).
\]
Since $P_3=-i$, the only exceptional parameter is $\lambda=i$. Consequently
$T_i|_{B_3}\simeq\beta$, while
$T_\lambda|_{B_3}\simeq\mathbf1\oplus\beta$ for $\lambda\ne0,1,i$.
The braid image has order $96$ for every parameter. At a parameter which
is not a root of unity, the full image is nevertheless infinite by
Proposition~\ref{prop:nonroot}.

\subsection{Complete finite-image parameter determinations}\label{sec:parameter-examples}

\subsubsection{A two-dimensional noncommutative input}\label{ex:s3-parameters}
\noindent\textbf{Related results:} Section~\ref{sec:parameters}, Theorem~\ref{thm:parameters} and Proposition~\ref{prop:order-bound}; Section~\ref{sec:free}, Corollary~\ref{cor:angles}.

This example shows how the order bound and the signature test give
an exhaustive parameter classification for a noncommutative finite input.
Take $n=2$ and
\[
 a=\begin{pmatrix}-1&1\\0&1\end{pmatrix},\qquad
 b=\begin{pmatrix}1&0\\1&-1\end{pmatrix},\qquad
 g_1=a,\quad g_2=b,\quad s_1=ab.
\]
These matrices give the irreducible standard representation of $\mathfrak S_3$:
$a^2=b^2=(ab)^3=I_2$, their generated group has order six, and they
preserve $\bigl(\begin{smallmatrix}2&-1\\-1&2\end{smallmatrix}\bigr)$.
There is no common invariant line (the common commutant consists of
scalars). Moreover
\[
 s_1as_1^{-1}=b,\qquad s_1bs_1^{-1}=b^{-1}ab,
\]
so this is a finite input of $F_2\rtimes B_2$ in our convention.
The local eigenangles are $(0,1/2)$ for each generator; the product
angles are $1/3,2/3$. Consequently $R=2$, $c=0$, and
\[
 m(l)=\lfloor l+1/3\rfloor+\lfloor l+2/3\rfloor.
\]
The form is positive for $0<l<1/3$, indefinite for $1/3<l<2/3$,
and negative for $2/3<l<1$. At $l=1/3,2/3$ the output is
one-dimensional and definite. Away from those points its dimension
is two. Since all character values are rational, conjugating the input
does not change these spectral data.

With $M=6$ and $\delta=1/3$, the sieve is
$\varphi(q)/2^{\omega(q)}\le6$. It leaves exactly 65 positive integers
$q$, with maximum 420. This count includes $q=1$, which corresponds
to the excluded parameter $\lambda=1$; hence 64 orders remain to be tested.
Here is an explicit finite check using only integer arithmetic.
The prime-power bounds in the proof of
Proposition~\ref{prop:order-bound} imply that every candidate divides
\[
 B_6=2^4 3^3 5^2\cdot7\cdot11\cdot13\cdot17\cdot19\cdot23.
\]
Indeed, $2^e\le24$ and $p^{e-1}(p-1)\le24$ for every odd
prime power $p^e$ dividing a candidate.
For a primitive parameter of order $q$, the embeddings of
$\Q(\lambda)$ run through all angles $k/q$ with $\gcd(k,q)=1$;
the input matrices are rational and remain fixed. By the signatures
above, the parameter passes precisely when no such $k$ satisfies
$q<3k<2q$. The strict inequalities retain the definite exceptional
quotients at $k/q=1/3,2/3$. Thus the following pseudocode reproduces
both the candidate count and the accepted orders; $\varphi$ and
$\omega$ are computed from integer prime factorization.
\begin{quote}\small
\begin{verbatim}
Q := []; accepted := []
for q in positive_divisors(B6), in increasing order:
    if q > 1 and phi(q) <= 6 * 2^omega(q):
        append q to Q
        if no k in {1,...,q-1} satisfies
                gcd(k,q) = 1 and q < 3*k < 2*q:
            append q to accepted
output: |Q| = 64, max(Q) = 420,
        accepted = [3,4,6,10]
\end{verbatim}
\end{quote}
Consequently,
\begin{equation}\label{eq:s3-list}
 \mathcal F(\rho)=
 \{\lambda:\operatorname{ord}(\lambda)\in\{3,4,6,10\}\}.
\end{equation}
The order bounds prove that this is exhaustive, rather than a search
cut off at an experimental maximum. The conclusion holds for the
free and the full KLM images.

The resulting rank-two free images are familiar ones. Away from the
two exceptional parameters $e^{2\pi i/3},e^{4\pi i/3}$, quotient coordinates
$w_1=2v_{11}-v_{12}$ and $w_2=v_{21}-2v_{22}$, where
$v_j=(v_{j1},v_{j2})^{\mathsf T}$ for $j=1,2$, give
\[
 \widetilde A_1=\begin{pmatrix}-\lambda&-1\\0&1\end{pmatrix},\qquad
 \widetilde A_2=\begin{pmatrix}1&0\\-\lambda&-\lambda\end{pmatrix},\qquad
 \widetilde S_1=\begin{pmatrix}0&1\\-1&-1\end{pmatrix}.
\]
Conjugating the first two matrices by $\operatorname{diag}(-1,1)$
gives the standard reduced Burau generators of $B_3$ at parameter
$\lambda$. They are the \emph{free} generators of this KLM output;
the input braid group is $B_2$ and its output generator is instead
$\widetilde S_1$, of order three.

For the nonexceptional finite parameters, put $k=\operatorname{ord}(-\lambda)$.
By \cite[Proposition 3.1]{FunarKohno}, the free image has presentation
\[
 \langle u,v\mid uvu=vuv,\ u^k=v^k=1\rangle,
 \qquad u=\widetilde A_1,\quad v=\widetilde A_2.
\]
In the Shephard--Todd notation for finite complex reflection groups,
these are the following groups; see \cite[Section 2.1]{Chavli} for the
correspondence between the presentations and the group names.
\begin{center}
\begin{tabular}{@{}cccc@{}}
\toprule
$\operatorname{ord}(\lambda)$ & $k$ & Free image & Group order\\
\midrule
$6$ & $3$ & $G_4$ & $24$\\
$4$ & $4$ & $G_8$ & $96$\\
$10$ & $5$ & $G_{16}$ & $600$\\
\bottomrule
\end{tabular}
\end{center}
Parameters of order three give the one-dimensional outputs at exceptional parameters
already described above. Thus this example applies the general
classification to noncommutative input and recovers the classical
finite Burau cases together with their quotients at exceptional parameters.

\subsubsection{A scalar input with a unique finite-image parameter}\label{sec:scalar-parameters}
\noindent\textbf{Related results:} Section~\ref{sec:parameters}, Theorem~\ref{thm:parameters} and Proposition~\ref{prop:order-bound}; Section~\ref{sec:free}, Corollary~\ref{cor:angles}; Section~\ref{sec:resonance}, Theorem~\ref{thm:braid-dichotomy}.

This example shows that a fixed finite input can have just one parameter
with finite full output, while its braid output is finite at every parameter.
Take $n=3$, $g_1=g_2=g_3=i$, and $s_1=s_2=1$. The exponent is $M=4$.
Here $R=3$, $c=0$, and the product eigenangle is $\gamma^{(1)}=3/4$. The form is positive for $0<l<1/4$,
positive on the two-dimensional quotient at $l=1/4$, and indefinite
for $1/4<l<1$. Using this latter interval gives $\delta=3/4$ and
$C=8/3$. There are 19 candidate orders, with maximum 66. This count
also includes $q=1$, so 18 orders remain after excluding $\lambda=1$.
For completeness, the all-embedding test also has a short integer
form. Write $\lambda=e^{2\pi ib/q}$ with $1\le b<q$ and
$\gcd(b,q)=1$, and put $L=\operatorname{lcm}(4,q)$.
Use the coefficient field $\Q(\zeta_L)$; its embeddings are indexed
by $1\le t<L$ with $\gcd(t,L)=1$. Set $r_t=tb\bmod q$ in
$\{1,\ldots,q-1\}$. If $t\equiv1\pmod4$, the input remains $i$
and definiteness is equivalent to $4r_t\le q$.
If $t\equiv3\pmod4$, the input becomes $-i$; complex conjugation
of the preceding signature calculation gives the condition
$4r_t\ge3q$. Equality in either test includes the two-dimensional
definite quotient at the corresponding exceptional parameter.
The prime-power bounds with $C=8/3$ imply that every candidate divides
$B_{8/3}=2^3 3^2\cdot5\cdot7\cdot11=27720$.
Here is the complete check, with the sieve written without fractions:
\begin{quote}\small
\begin{verbatim}
Q := [q > 1 dividing 27720 with 3*phi(q) <= 8*2^omega(q)]
accepted := []
for q in Q and b in {1,...,q-1} with gcd(b,q) = 1:
    L := lcm(4,q)
    accept (q,b) if for every t in {1,...,L-1}
        with gcd(t,L) = 1, r := (t*b) mod q satisfies
        (t mod 4 = 1 and 4*r <= q) or
        (t mod 4 = 3 and 4*r >= 3*q)
output: |Q| = 18, max(Q) = 66, accepted = [(4,1)]
\end{verbatim}
\end{quote}
Thus testing all coefficient embeddings gives exactly
\[
 \mathcal F(\rho)=\{i\}.
\]
Thus the ordinary full image of order 96 at $\lambda=i$ in Appendix~\ref{sec:scalar-example}
is the only finite full specialization of this fixed input. Its braid
image remains finite for every $\lambda\ne0,1$, by Theorem~\ref{thm:braid-dichotomy}.
The full image at $i$ is the complex reflection group $G_8$, as identified
in Appendix~\ref{sec:scalar-example}.

\Needspace{7\baselineskip}
\begin{example}[Complete parameter determination for a rank-three non-rigid input]\label{ex:rank-three-finite}
\noindent\textbf{Related results:} Section~\ref{sec:parameters}, Theorem~\ref{thm:parameters} and Proposition~\ref{prop:order-bound}; Section~\ref{sec:free}, Corollary~\ref{cor:angles}.

This example determines all finite-output parameters for a non-scalar
irreducible input without imposing rigidity. Put $\zeta=e^{2\pi i/3}$,
$\lambda=1+\zeta=e^{\pi i/3}$, and define an $F_3$-input on $\C^3$ by
\[
 g_1=\begin{pmatrix}\zeta&0&0\\0&1&0\\0&0&1\end{pmatrix},\quad
 g_2=\begin{pmatrix}0&1&0\\1&0&0\\0&0&1\end{pmatrix},\quad
 g_3=\begin{pmatrix}1&0&0\\0&0&1\\0&1&0\end{pmatrix}.
\]
These matrices generate the monomial group $G(3,1,3)$ of order
$3^3\cdot6=162$: it consists of all $3\times3$ matrices with exactly
one nonzero entry in each row and column, each such entry belonging
to $\{1,\zeta,\zeta^2\}$. The independent diagonal actions and the transitive
coordinate permutations make its representation irreducible.
The product $P_3=g_1g_2g_3$ satisfies $P_3^3=\zeta I_3$ and has
angles $1/9,4/9,7/9$. The two embeddings of the coefficient field
$E=\Q(\zeta)=\Q(\lambda)$ give the following data for
Corollary~\ref{cor:angles}; local angles list only the nonzero angles
of the three generators, with multiplicity.
\begin{center}
\begin{tabular}{@{}lllll@{}}
\toprule
Embedding & Local angles & Product angles & $(R,r,m)$ & Signature\\
\midrule
Identity & $1/3,1/2,1/2$ & $1/9,4/9,7/9$ & $(3,0,0)$ & $(3,0)$\\
Conjugation & $2/3,1/2,1/2$ & $2/9,5/9,8/9$ & $(3,0,3)$ & $(0,3)$\\
\bottomrule
\end{tabular}
\end{center}
The corresponding parameter angles are $l=1/6$ and $5/6$.
Both forms are definite, so Theorem~\ref{thm:free} proves finiteness.
The coefficient field need not contain all eigenvalues of $P_3$;
the table covers all embeddings of the field of matrix entries.
The input group has exponent $18$, so the cyclotomic model discussed in
Section~\ref{sec:free} would use $\Q(\zeta_{18})=\Q(\zeta_9)$.
Each of the two embeddings of $\Q(\zeta)$ has three extensions to
that larger field. By the field-independence observation following
Theorem~\ref{thm:free}, those six tests repeat the two in the table.
The centralizer dimensions of $(g_1,g_2,g_3,P_3^{-1})$ are $5,5,5,3$,
so its rigidity index is $(2-4)3^2+5+5+5+3=0$.

In fact, for this fixed input the complete parameter set is
\[
 \mathcal F(\rho)=\{e^{\pi i/3}\}.
\]
To prove completeness, now allow $\lambda=e^{2\pi i l}$ to vary.
At the identity embedding, Corollary~\ref{cor:angles} gives
$R=3$ and
\[
 m(l)=\lfloor l+1/9\rfloor+\lfloor l+4/9\rfloor
       +\lfloor l+7/9\rfloor.
\]
The open interval $(2/9,5/9)$ has $r=0$, $m=1$, hence signature $(2,1)$.
Its length is $\delta=1/3$, so Proposition~\ref{prop:order-bound}\textup{(ii)},
with $M=18$, restricts the order $q$ of every finite-output parameter to
\[
 q>1,\qquad \frac{\varphi(q)}{2^{\omega(q)}}\le18.
\]
The prime-power bounds in the proof of that proposition imply that
every candidate divides
\[
 B_{18}=2^6 3^4 5^2 7^2
       \prod_{\substack{11\le p\le73\\p\text{ prime}}}p:
\]
indeed, $2^e\le72$ and $p^{e-1}(p-1)\le72$ for every odd
prime power $p^e$ dividing a candidate.
For $\lambda=e^{2\pi ib/q}$, put $L=\operatorname{lcm}(3,q)$
and use the coefficient field $\Q(\zeta_L)$.
Its embeddings are indexed by $1\le t<L$ with $\gcd(t,L)=1$.
Write $u=tb\bmod q$ in $\{1,\ldots,q-1\}$.
When $t\equiv1\pmod3$, the input is unchanged, and the floor formula
above, including the correction at exceptional parameters, gives
definiteness exactly when $9u\le2q$ or $9u\ge8q$.
When $t\equiv2\pmod3$, complex conjugation gives instead
$9u\le q$ or $9u\ge7q$. The non-strict inequalities retain the
definite two-dimensional exceptional quotients. Thus the complete
test uses only integer arithmetic:
\begin{quote}\small
\begin{verbatim}
Q := [q > 1 dividing B18 with phi(q) <= 18*2^omega(q)]
accepted := []
for q in Q and b in {1,...,q-1} with gcd(b,q) = 1:
    L := lcm(3,q)
    accept (q,b) if for every t in {1,...,L-1}
        with gcd(t,L) = 1, u := (t*b) mod q satisfies
        (t mod 3 = 1 and (9*u <= 2*q or 9*u >= 8*q)) or
        (t mod 3 = 2 and (9*u <= q or 9*u >= 7*q))
output: |Q| = 220, max(Q) = 2730, accepted = [(6,1)]
\end{verbatim}
\end{quote}
The divisor bound and the all-embedding test make this an exhaustive
finite verification of the displayed parameter set.
In particular, the conjugate parameter $e^{-\pi i/3}$ is excluded:
at the identity embedding $l=5/6$ gives $(R,r,m)=(3,0,2)$ and
signature $(1,2)$. Conjugating the parameter alone does not conjugate
the fixed input.

We return to $\lambda=e^{\pi i/3}$ for the explicit output below.
For a direct check, $\dim K=6$ and $L_\lambda=0$. Writing
$v_j=(v_{j1},v_{j2},v_{j3})$ for $1\le j\le3$, the quotient
coordinates $(v_{11},v_{21}-v_{22},v_{32}-v_{33})$ give
\[
 A_1=\begin{pmatrix}-1&-1&0\\0&1&0\\0&0&1\end{pmatrix},\quad
 A_2=\begin{pmatrix}1&0&0\\-\zeta-2&-\zeta-1&1\\0&0&1\end{pmatrix},\quad
 A_3=\begin{pmatrix}1&0&0\\0&1&0\\0&\zeta+1&-\zeta-1\end{pmatrix}.
\]
They preserve the Hermitian matrix
\[
 H=\begin{pmatrix}6&3&0\\3&3&\zeta-1\\0&-\zeta-2&3\end{pmatrix}.
\]
Its leading principal minors are $6,9,9$, so $H$ is positive definite.
With the square-root convention of \eqref{eq:canonical-pencil}, the
canonical quotient matrices in these coordinates are $H/\sqrt3$ at
the identity embedding and $-\overline H/\sqrt3$ at the conjugate
embedding. Equivalently, if $\pi:V^{\oplus3}\to\C^3$ is the quotient
map given by these coordinates, then
$\sqrt3\,\widetilde H_\lambda=\pi^*H\pi$ at the identity embedding.
These identities also verify the signatures in the table directly.
The image lies in the discrete matrices over $\Z[\zeta]$ and in the
compact group preserving $H$, which independently proves finiteness.
The group order can be checked by the following finite enumeration.
Represent $a+b\zeta\in\Z[\zeta]$ by $(a,b)\in\Z^2$, with
\[
 (a,b)(c,d)=(ac-bd,\ ad+bc-bd),
\]
and use ordinary matrix multiplication with these coefficients.
Starting from $E_0=\{I_3\}$, set
$E_{k+1}=E_k\cup\{XA_j:X\in E_k,\ 1\le j\le3\}$.
Exact arithmetic gives $|E_{16}|=1295$, $|E_{17}|=1296$, and
$E_{18}=E_{17}$. This closed set is the generated group: the
generators have finite order, so their inverses are positive powers.
Thus the image has order $1296$, with generator orders $2,3,3$;
each generator is a complex reflection, namely a finite-order
nonidentity transformation fixing a hyperplane pointwise. Moreover,
\[
 \begin{gathered}
 A_1A_2A_1A_2=A_2A_1A_2A_1,\qquad
 A_2A_3A_2=A_3A_2A_3,\qquad A_1A_3=A_3A_1,\\
 A_1^2=A_2^3=A_3^3=I_3.
 \end{gathered}
\]
These are the defining relations of the Shephard--Todd group $G_{26}$
in \cite[Section 4]{MarinG26}, with its generators $(t,s_2,s_1)$
sent to $(A_1,A_2,A_3)$. They give a surjection from $G_{26}$ to the
output group. Since both groups have order $1296$, it is an
isomorphism. This identifies the example as a middle-convolution
passage from $G(3,1,3)$ to $G_{26}$ at $\lambda=-\zeta^2$.
Within $E_{17}$, testing $XA_j=A_jX$ for all $j$ gives the six scalar
matrices $\{\pm\zeta^a I_3:0\le a<3\}$, hence a cyclic center of
order six. Testing $\operatorname{rank}(X-I_3)=1$ gives $9$
reflections of order two and $24$ of order three.
The input generators have orders $3,2,2$, so they cannot be conjugate.
Consequently this input does not extend to the standard
$F_3\rtimes B_3$; the example concerns the free-group criterion alone.
\end{example}

\subsection{Exceptional parameters and projective finiteness}\label{sec:braid-examples}

\begin{example}[A nontrivial kernel at an exceptional parameter]\label{ex:resonance-kernel}
\noindent\textbf{Related results:} Section~\ref{sec:resonance}, Proposition~\ref{prop:image-kernel} (a nontrivial kernel attaining the order bound).

This example realizes the nontrivial-kernel criterion of
Proposition~\ref{prop:image-kernel}: the braid image becomes strictly
smaller at an exceptional parameter, and the kernel attains the
upper bound $|H_\mu|$.
Take $n=2$, $g_1=g_2=i$ and $s_1=i$. The full input is finite.
Since $K=0$, the generic braid operator is
\[
 S_1=i\begin{pmatrix}0&i\\1&1-i\end{pmatrix},
\]
with distinct eigenvalues $i,1$, so $G_{\rm gen}\simeq\Z/4\Z$.
At $\lambda=-1$, the removed line is $L=\C(i,1)$, on which $S_1$
acts by $i$. The quotient coordinate $v_1-i v_2$ has braid action $1$.
Thus $G_{-1}$ is trivial and $\mathcal N_{-1}\simeq H_{-1}\simeq\Z/4\Z$.
The finite kernel can therefore be nontrivial, and the group-order
bound can be attained.
\end{example}

\begin{example}[Infinite braid image for every parameter]\label{ex:always-infinite-braid}
\noindent\textbf{Related results:} Section~\ref{sec:resonance}, Theorem~\ref{thm:braid-dichotomy} (the always-infinite alternative).

This example realizes the empty-set alternative in Theorem~\ref{thm:braid-dichotomy}, even when the full input image is finite.
Take $n=2$, $g_1=g_2=-1$, and $s_1=1$. This scalar input satisfies
the Artin relations and has full image $\{1,-1\}$. Here $K=0$ and
$P_2=1$, so $L_\lambda=0$ for every $\lambda\ne0,1$. The braid output
is therefore generated by
\[
 S_1=\begin{pmatrix}0&-1\\1&2\end{pmatrix}=I_2+N,
 \qquad N=\begin{pmatrix}-1&-1\\1&1\end{pmatrix},\qquad N^2=0\ne N.
\]
For every integer $m\ge0$, $S_1^m=I_2+mN$. These matrices are
pairwise distinct, so $T_\lambda(B_2)$ is infinite for every
$\lambda\ne0,1$. Thus the empty-set alternative in
Theorem~\ref{thm:braid-dichotomy} occurs even for finite full input.
\end{example}

\begin{example}[Why the finite braid-input assumption matters]\label{ex:braid-assumption}
\noindent\textbf{Related results:} Section~\ref{sec:resonance}, Theorem~\ref{thm:braid-dichotomy} (necessity and scope of the finite braid-input hypothesis).

This example shows both that the finite braid-input assumption in
Theorem~\ref{thm:braid-dichotomy} is necessary and that this assumption
alone does not give parameter-independence of projective finiteness.
For $n=2$, take scalar free inputs $g_1=g_2=2$ and braid input
$s_1=c\ne0$. Then $K=0$, $P_2=4$, and
\[
 S_1=c\begin{pmatrix}0&2\\1&-1\end{pmatrix}
\]
has eigenvalues $c,-2c$. At $\lambda=1/4$, the line
$L_\lambda=\C(2,1)$ carries eigenvalue $c$, and the quotient carries
$-2c$. For $c=1/2$, the generic braid image is infinite but the
image at the exceptional parameter has order two. Thus the finite input braid-image
assumption in Theorem~\ref{thm:braid-dichotomy} cannot be omitted.
For $c=1$, the input braid image is trivial, yet the generic projective
image is infinite while the projective image at the exceptional parameter is trivial.
For $c=1$, the ordinary images at both generic and exceptional parameters are infinite. Hence the theorem
cannot be strengthened to projective finiteness under this assumption
alone.
\end{example}

\begin{example}[Why the braid extension matters for projective finiteness]\label{ex:free-projective}
\noindent\textbf{Related results:} Section~\ref{sec:resonance}, Proposition~\ref{p:full} (necessity of the braid extension).

This example shows that the equivalence of ordinary and projective
finiteness in Proposition~\ref{p:full} need not hold for finite input
defined only on the free group.
For an input representation defined only on $F_n$, with
$g_1=-1$, $g_2=1$, the nonzero output is one-dimensional with generator
$-\lambda$; its projective image is trivial even when its ordinary image
is infinite.
\end{example}

\end{document}